\newlength\fwidth
\newlength\fheight
\pgfplotsset{compat=newest} 
\pgfplotsset{plot coordinates/math parser=false}
\newtheorem{Definition}{Definition}[section]
\newtheorem{rem}{Remark}
\newtheorem{Proposition}{Proposition}[section]
\newtheorem{Corollary}{Corollary}[section]
\newtheorem{Example}{Example}[section]
\numberwithin{equation}{section}
\newcommand{\e}{{\rm e}}
\newcommand{\lnn}{{\rm ln}}
\newcommand{\expp}{{\rm exp}}
\newcommand{\erf}{{\rm erf}}
\newcommand{\F}{\mathscr{F}}
\newcommand{\PR}{\mathbb{P}}
\newcommand{\rd}{\textup{d}}
\newcommand{\indi}[1]{1\hspace{-.09cm}\textup{\textrm{l}}}
\begin{document}
\title{\vspace{-2cm}\bf Stochastic measure distortions induced by quantile processes for risk quantification and valuation}
\author{Holly Brannelly$^{\dag}$, Andrea Macrina{$^{\dag\, \ddag}$\footnote{Corresponding author: a.macrina@ucl.ac.uk}\ }, Gareth W. Peters{$^{\S}$} \\ \\ {$^{\dag}$Department of Mathematics, University College London} \\ {London, United Kingdom} 
\\ {$^{\ddag}$African Institute of Financial Markets \& Risk Management} \\ {University of Cape Town} \\ {Rondebosch 7701, South Africa} 
\\ {$^{\S}$}Department of Statistics \& Applied Probability \\ University of California Santa Barbara \\Santa Barbara, USA}
\date{\today}
\maketitle
\vspace{-1cm}
\begin{abstract}
\noindent 
We develop a novel stochastic valuation and premium calculation principle based on probability measure distortions that are induced by quantile processes in continuous time. Necessary and sufficient conditions are derived under which the quantile processes satisfy first-- and second--order stochastic dominance. The introduced valuation principle relies on stochastic ordering so that the valuation risk--loading, and thus risk premiums, generated by the measure distortion is an ordered parametric family. The quantile processes are generated by a composite map consisting of a distribution and a quantile function. The distribution function accounts for model risk in relation to the empirical distribution of the risk process, while the quantile function models the response to the risk source as perceived by, e.g., a market agent. This gives rise to a system of subjective probability measures that indexes a stochastic valuation principle susceptible to probability measure distortions. We use the Tukey--$gh$ family of quantile processes driven by Brownian motion in an example that demonstrates stochastic ordering. We consider the conditional expectation under the distorted measure as a member of the time--consistent class of dynamic valuation principles, and extend it to the setting where the driving risk process is multivariate. This requires the introduction of a copula function in the composite map for the construction of quantile processes, which presents another new element in the risk quantification and modelling framework based on probability measure distortions induced by quantile processes.   
\\\vspace{-1cm}\\
\\
\\
{\bf Keywords}: Quantile processes; stochastic ordering; distortion of probability measures; distortion-based pricing; stochastic valuation principle and premium calculation; dynamic risk-loading; Tukey gh-transform; Radon-Nikodym derivative; skewness, kurtosis, and further higher moments. 
\\\vspace{-0.25cm}
\end{abstract}
\section{Introduction}
In this work we propose a general, time--consistent, and dynamic approach for the pricing of risks, or valuation of loss premiums, in markets where arbitrage pricing may fall short due to the (possible) incompleteness of markets, and so payoffs of the considered risks cannot be replicated in terms of traded securities. In order to achieve this, we develop extensions of the broad class of continuous--time quantile diffusion first presented in \cite{BMPqd1}. These quantile diffusions are developed via a composite map that distorts the finite--dimensional distributions of a driving diffusion. The construction leads to a large family of flexible output processes on the quantile space (function space of quantile functions, see \cite{BMPqd1}), accommodating a wide range of tail behaviours in the finite--dimensional distributions, and the higher--order moments are both directly parameterised and interpretable. Here, the distortion map employed consists of the composition of a distribution function and a quantile function, analogous---but distinct---to a transmutation map, see \cite{gilchrist,peters2016estimating,quantiletutorial,distalchemy, rtm, quantilemech2, quantilemech}. This construction allows one to alter the base distribution in such a way to better satisfy some target model or objective.  The ideas presented broaden the existing quantile modelling literature, which is dense in the areas of statistical regression and time series econometrics, stochastic processes, risk management and insurance, and mathematical statistics within the study of empirical processes.  The reader may refer to \cite{conditionalqp,dqm,qp1,qp2, qp3,dassios, embrechts1995proof,gilchrist, koenkerbassett, koenkerqr,quantilear, quantilemech} and the references therein, as well as \cite{quantiletutorial} for a tutorial review on the use of quantile functions in discrete, time series based statistical risk modelling. In this paper we extend quantile diffusions to consider general continuous--time processes with c\`adl\`ag paths, thus producing a wider class of quantile processes which, under certain conditions on the distortion map, will have c\`adl\`ag paths.  The class of quantile processes with c\`adl\`ag paths is then studied from the perspective of the distorted probability measures they induce, in relation to the choice of composite map and its interpretation with regard to risk and the risk preferences of market agents.  

Achieving dynamic distortions in the quantile space is advantageous in the context of risk management, among others, where problems involving the quantification of risk are often expressed in terms of quantile functions or quantile processes rather than a density.  We propose a general, time--consistent, and dynamic approach for the pricing of risks (or valuation of loss premiums) in (possibly) incomplete markets. In such a context, arbitrage pricing theory may fall short, because payoffs contingent on considered risk sources cannot be replicated with traded financial instruments. We base our approach on probability measures induced by quantile processes, whereby such ``subjective" probability measures inherit the (intentionally constructed and directly parameterised) statistical properties of the quantile process. The manner in which we construct probability distortions is a flexible and unrestricted alternative to commonly used distortion--based pricing techniques such as those produced using distortion operators, see, e.g., \cite{godin2012, godin2019,kijimamuromachi, wang, wang2,wang1996,wang1996ordering}, or a weighting/exponential tilting, see, e.g., \cite{esscher1932,furman2008weighted,cruz2015fundamental} and references therein.  Unlike these existing methods, the approach developed in this work facilitates the direct parameterisation of features, such as skewness and kurtosis of the risk drivers, which thereby allows one to incorporate knowledge of asymmetries or tail risk, and to directly quantify their impact on the resultant valuations. As such, of key importance in this work is the explication of how the choice of the composite map affects the behaviour of prices and premiums obtained under the stochastic valuation principle. In this regard, in Section \ref{stochasticorderingsection}, we derive the necessary and sufficient conditions under which the constructed quantile processes satisfy first-- and second--order stochastic dominance, leading to ordered parametric families of prices or premiums related to some given base risk.  The stochastic ordering results can also be utilised to ensure the valuation principle induces consistent risk--loading and thus is axiomatically sound in the context of insurance prices and premium calculation principles.  

In Section \ref{valuationprinciplesection}, a general and time--consistent stochastic valuation principle is defined and the connection with dynamic, convex risk measures, see e.g., \cite{acciaio2011dynamic, bion2006dynamic,bion2008dynamic, bion2009time,detlefsen2005conditional}, is made.   As such, the general framework is shown to incorporate well--known pricing frameworks built by concave distortion operators, see above references, and a large number of premium calculation principles (PCPs), see \cite{laeven2008premium} for a detailed survey on PCPs.  A special case of the stochastic valuation principle is then specified, employing a conditional expectation under the distorted probability measure induced by the quantile process, thus automatically ensuring time--consistency. The choice of the probability measure, determined by the choice of the quantile distortion map, determines the behaviour of the output price process for any given input risk. For example, if the input risk process becomes more heavy--tailed or skewed, or both, under the distortion map, then the valuation principle induces a risk--loading.  We construct the composite map so to incorporate financial market risk, characteristics directly associated with the underlying risk process that may not be accounted for when considering historical losses (e.g., technological advancements, changes in climate, geopolitical turmoil), and investor risk preferences along with their perception of the risk source.  The latter of these factors is emphasised throughout, and as such, when the Radon--Nikodym derivative between the real--world and the distorted probability measure is derived and the connection with the pricing kernel is made, this is equivalent to the risk preferences of each market participant corresponding to a different, and not necessarily unique, pricing kernel.  This is of course consistent with the assumption of market incompleteness. In the insurance setting, the framework presents a fairly general class of axiomatically justified premium principles with high levels of flexibility. These premium principles allow one to capture skewness and leptokurtosis that is commonly observed in markets, and they can be used to explicitly incorporate more structure into subjective assignments of elicitable information regarding an investor's risk preferences.  This is captured through the induced characteristics of the quantile process, regardless of whether they are tractable or not.  

In the decision theory literature, it is well--established to consider a market agent’s preferences through a utility function, where actuarial prices of insurance--based risks are calculated by taking expectations under some probability measure.  In expected utility theory \cite{vnm}, the utility function is subjective and the probability measure is objective, whereas subjective expected utility theory \cite{savage1954} allows one to also consider a subjective probability measure, based on the considered decision situation or market conditions. Such subjective decision making approaches also arise in aspects of risk management when one defines scenario--based risk measure evaluation, see \cite{acerbi2002spectral,artzner1999coherent}.   A review on non--expected utility theories, including rank--dependent expected utility \cite{quiggin1982theory, yaari1987} and Choquet expected utility theory (CEU) \cite{choquet1954theory,schmeidler1989subjective}, is given in \cite{sugden1997alternatives}.  A connection between the functional form of the valuation principle presented in this work and utility functions, e.g., von Neumann-Morgenstern \cite{vnm} utility functions, as well as the value function underlying prospect theory \cite{kahneman1979prospect} is discussed. We also draw the reader's attention back to the stochastic ordering of risks and the important role it plays in decision making.  In decision theory---specifically, expected utility theory---the maximisation of von Neumann-Morgenstern \cite{vnm} non--decreasing (non--decreasing and concave) utility functions by rational market participants equates to the notion of first--order (second--order) stochastic dominance.  Therefore, valuation frameworks with preserved stochastic ordering, of some order, imply monotonic behaviour between the level of risk and the price associated to some contract contingent on the risk, e.g., an insurance premium.  That is, assuming investors act rationally, it is desirable for valuation frameworks to preserve the property of stochastically--ordered risks and so, in this paper, we emphasise such results in the context of the stochastic valuation principle induced by quantile processes. 

In Section \ref{multiqpsection}, we extend the valuation principle induced by quantile processes to the multivariate setting where (auxiliary) risk factors, e.g., other highly correlated risks or external macroeconomic/systemic risks, may also be considered.  Here, a multivariate driving process and a copula are employed to construct a univariate quantile process that induces the distorted measure that characterises the stochastic valuation principle.  As a result, the composite map, which now consists of a quantile function and a copula function, and each marginal of the multivariate risk factor impact the distorted measure induced by the quantile process, and thus how risks are valued when taking conditional expectations under this measure.  The idea is that the incorporation of such external risks in the construction of the distorted probability measure refines the accuracy of the valuation principle.  Here, one or more marginals of the multivariate driving process are the underlying risks on which the financial or insurance contract is written, and the remaining marginals of the driving process model the external risks that are accounted for in the valuation problem.  The model is presented in the context of pricing insurance layer and stop--loss contracts.  We envisage applications of this model to include, e.g., the pricing of insurance contracts written on some loss process where external risks related to climate change are likely to impact. While the external risk sources may trigger the payout of an insurance contract, such risks are not regarded as the directly insurable loss, nor is their impact prevalent in the historical loss data of the risk being insured.  As the realised impact of climate change is on the rise, it becomes more important for insurers to factor such risks into their pricing models.  Here, the climate risk processes are taken as marginals of the multivariate driver used in the construction of the quantile process that induces the distorted probability measure.  The composite map (consisting of a quantile function and a copula) can also be utilised to produce excess skewness or kurtosis that the insurer deems reasonable to factor into the model but is not captured in historical loss data, or the risk preferences of the agent buying the insurance contract.  It may be at the discretion of the insurer to determine the appropriate composite map used in the construction of the quantile process.  A theoretically similar idea is formalised in \cite{zhu2019agricultural}, where a premium principle for agricultural losses is constructed by incorporating a reweighing of the historical losses with systemic risk factors---here, the insurer may select the appropriate weighting function.  The discussion on the multidimensional extension to the quantile process--induced valuation framework concludes this paper.    
\section{Stochastic quantile processes in continuous time}\label{qpdefnsection}
A class of quantile diffusions, which are generated by a composite map consisting of a distribution function and a quantile function, have been developed in \cite{BMPqd1}. In this section, we extend this class by constructing quantile processes driven by more general processes, thus including L\'evy processes as stochastic drivers, for example.   
\

We consider the probability space $(\Omega, \F, \PR)$ equipped with a filtration $(\F_t)_{t\geq 0}$, where the time line is given by $t\in[0,\infty)$. We assume the existence of a money--market, or ``risk--free'', asset with price process $(B_t)_{t\in[0,\infty)}$ that offers a positive rate of return. For the construction of the quantile processes, we will employ an $(\F_t)$-adapted process $(Y_t)_{t\in[0,\infty)}$ with  c\`adl\`ag paths. First, however, we recall the definition of the generalized inverse function, see \cite{geninverse}, in terms of which the quantile function of a random variable $X$ is given. For an increasing function $F:\mathbb{R}\rightarrow\mathbb{R}$ with $F(-\infty)=\lim_{x\downarrow -\infty}F(x)$ and $F(\infty)=\lim_{x\uparrow\infty}F(x)$, the collection $\mathcal{Q}$ of generalised inverse functions $Q:=F^{-}:\mathbb{R}\rightarrow[-\infty,\infty]$ of $F$ is defined by
\begin{equation}\label{geninverse}
Q(y)=\inf\left\{x\in\mathbb{R}:F(x)\geq y \right\}, \hspace{2mm} y\in\mathbb{R},
\end{equation}
with the convention that $\inf\emptyset=\infty$. For $X$ a real--valued random variable with distribution function $F_X:\mathbb{R}\rightarrow[0,1]$, the quantile function of $X$ is $Q_X=F_X^{-}:[0,1]\rightarrow[-\infty,\infty]$ where $Q_X\in\mathcal{Q}$. We are now in the position to construct the following class of quantile processes.
\begin{Definition}\label{processdrivendefn}
Let $Q_{\zeta}(u;\bm{\xi})$ be a quantile function, where $u\in[0,1]$ is the quantile level and $\bm{\xi}\in\mathbb{R}^d$ is a $d$-dimensional vector of parameters. For $t\in(0,\infty)$, let $F(t,y;\bm{\theta}):\mathbb{R}^+\times\mathbb{R}\rightarrow[0,1]$ be a continuous distribution function, where $\bm{\theta}\in\mathbb{R}^{d^\prime}$ is a $d^\prime$--dimensional vector of parameters. Consider a process $(Y_t)_{t\in[0,\infty)}$ with continuous paths.  At each time $t\in(0,\infty)$, the random--level quantile process is defined by
\begin{equation}\label{processdrivenQP}
Z_t = Q_{\zeta}\left(F(t,Y_t;\bm{\theta});\bm{\xi}\right),
\end{equation}
that is, $Z_t=Q_{\zeta}(F(t,Y(t,\omega);\bm{\theta});\bm{\xi}):\mathbb{R}^+\times(\mathbb{R}^+\times\Omega)\rightarrow[-\infty,\infty]$, and the map $t\mapsto Z(t,\omega)$ for all $t\in(0,\infty)$ is $\mathscr{F}_t$--measurable. Here, the index $\zeta$ characterises the family of quantile processes to which $(Z_t)$ belongs.  
\end{Definition}

Here, if $F(t,y;\bm{\theta})=F_Y(t,y;\bm{\theta})$ is the finite dimensional distribution that governs each marginal of the process $(Y_t)$ for $t\in(0,\infty)$, then the marginals of the process given by $U_t:=F_Y(t,Y_t;\bm{\theta})$ are uniformly distributed on $[0,1]$.  We say that $F_Y(t,y;\bm{\theta})_{t\in(0,\infty)}$ is the `{\it true law}' of $(Y_t)_{t\in[0,\infty)}$.  Though in this paper we will focus on quantile processes with continuous paths, the composite map (\ref{processdrivenQP}) can be used to obtain quantile processes with discrete paths.  It suffices that either $(Y_t)$ is a c{\`a}dl{\`a}g process with discrete (discontinuous) paths, $F(t,y;\bm{\theta}):\mathbb{R}^+\times \mathscr{D}\rightarrow \mathscr{U}$ is a discrete distribution function with $\mathscr{D}$ and $\mathscr{U}$ the collections of all countable subsets of $\mathbb{R}$ and $[0,1]$, respectively, or that the function $Q_{\zeta}(u;\bm{\xi})$ might be the quantile function of a discrete random variable for $(Z_t)$ to have discrete paths.  The quantile process $(Z_t)$ will be given by the composite map (\ref{processdrivenQP}), where $Z_t:\mathbb{R}^+\times(\mathbb{R}^+\times\Omega)\rightarrow\mathscr{D}$ for each $t\in(0,\infty)$.  The following proposition characterises when $(Z_t)$ will have c\`adl\`ag paths.
\begin{Proposition}\label{cadlagprop}
The quantile process $(Z_t)$ will have c{\`a}dl{\`a}g paths if and only if the map $t\mapsto F(t,y;\bm{\theta})$ is c{\`a}dl{\`a}g for all $y\in\mathbb{R}$, and $Q_{\zeta}(u;\bm{\xi})$ is a continuous quantile function.
\end{Proposition}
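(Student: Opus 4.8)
The plan is to fix a sample path $\omega\in\Omega$ and reduce the statement to a deterministic fact about the composition of functions on the line. Write $\phi_{\omega}(t):=F(t,Y_t(\omega);\bm{\theta})$, so that $Z_t(\omega)=Q_{\zeta}(\phi_{\omega}(t);\bm{\xi})$, and also keep in reserve the alternative bookkeeping $Z_t(\omega)=H(t,Y_t(\omega))$ with $H(t,y):=Q_{\zeta}(F(t,y;\bm{\theta});\bm{\xi})$. By Definition~\ref{processdrivendefn} the driver $(Y_t)$ has continuous paths and $y\mapsto F(t,y;\bm{\theta})$ is continuous (it is a continuous distribution function), so the only candidate sources of discontinuity in $t\mapsto Z_t(\omega)$ are the explicit time dependence of $F$ and the map $Q_{\zeta}$ itself; the proof isolates these two.

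For \emph{sufficiency} I would use two ingredients. First, the elementary fact that the post-composition of a c\`adl\`ag function $[0,\infty)\to\mathbb{R}$ with a continuous map (on the relevant range $[0,1]\to\mathbb{R}$) is again c\`adl\`ag, since one-sided limits are transported through a continuous map. Second, that $\phi_{\omega}$ is itself c\`adl\`ag whenever $t\mapsto F(t,y;\bm{\theta})$ is c\`adl\`ag for every $y$: along the continuous path $Y_{\cdot}(\omega)$, right-continuity of $t\mapsto F(t,Y_t(\omega);\bm{\theta})$ at $t_0$ follows by combining $Y_t(\omega)\to Y_{t_0}(\omega)$ with the (locally uniform in $y$) right-continuity of $F$ together with the continuity of $F$ in $y$, and the left limit at $t_0$ equals $F(t_0-,Y_{t_0-}(\omega);\bm{\theta})$ by the analogous argument using that $F(\cdot,y;\bm{\theta})$ has left limits and $Y$ is left-continuous. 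Composing these two ingredients with $Q_{\zeta}(\cdot;\bm{\xi})$ continuous yields that $Z_{\cdot}(\omega)=Q_{\zeta}(\phi_{\omega}(\cdot);\bm{\xi})$ is c\`adl\`ag.

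For \emph{necessity} I would argue by contraposition, in two cases, exploiting that the construction is required to produce c\`adl\`ag paths for a sufficiently rich, non-degenerate continuous driver (such as the Brownian-driven examples used later). (i) If $Q_{\zeta}(\cdot;\bm{\xi})$ is not continuous, then, being a quantile function it is left-continuous with a genuine jump at some level $u_0\in(0,1)$; since $y\mapsto F(t_0,y;\bm{\theta})$ is continuous and onto a neighbourhood of $u_0$, the map $y\mapsto H(t_0,y)$ jumps at the corresponding spatial value $y_0$, and a continuous driver that attains $y_0$ and oscillates about it (as non-degenerate diffusions do at every level they visit) forces $t\mapsto Z_t(\omega)$ to oscillate between values near $Q_{\zeta}(u_0-;\bm{\xi})$ and $Q_{\zeta}(u_0;\bm{\xi})$, so neither one-sided limit exists and the path is not c\`adl\`ag. (ii) If $t\mapsto F(t,y_0;\bm{\theta})$ fails to be c\`adl\`ag for some $y_0$ — i.e.\ it is not right-continuous, or lacks a left limit, at some $t_0$ — then for a driver with $Y_{t_0}(\omega)=y_0$ the path $\phi_{\omega}$ inherits exactly this defect near $t_0$ (again using continuity of $F$ in $y$ and continuity of $Y$), and since the quantile functions of interest are strictly increasing on the range of $\phi_{\omega}$, the outer composition does not annihilate the defect, so $Z_{\cdot}(\omega)$ is not c\`adl\`ag. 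Note that a mere jump of $F(\cdot,y_0;\bm{\theta})$ in $t$ is \emph{not} excluded: it only produces a jump of $Z$, which is compatible with c\`adl\`ag, and this is precisely why the condition is ``c\`adl\`ag in $t$'' rather than ``continuous in $t$''.

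The step I expect to be the main obstacle is the necessity direction, and specifically pinning down the minimal regularity and richness one must impose on the driver $(Y_t)$ so that a pathology of $F$ in $t$, or a jump of $Q_{\zeta}$, is genuinely witnessed by the paths of $(Z_t)$ — the oscillation argument in case (i) is where this bites. A secondary and more routine technical point is the joint-regularity bookkeeping in the sufficiency direction, namely checking that ``c\`adl\`ag in $t$'' combined with ``continuous in $y$'' really does pass to the moving composite $t\mapsto F(t,Y_t(\omega);\bm{\theta})$, which requires a little local uniformity rather than just the two separate one-variable statements.
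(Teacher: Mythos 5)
Your proposal is correct and follows essentially the same route as the paper's own proof in Appendix A: fix a path $\omega$, obtain sufficiency by passing the one-sided limits in $t$ through the composition $Q_{\zeta}\circ F(\cdot,Y_{\cdot}(\omega))$, and obtain necessity by contraposition on the two failure modes ($t\mapsto F(t,y)$ not c\`adl\`ag, respectively $Q_{\zeta}$ not continuous). If anything you are more careful than the paper, which simply interchanges $\lim_{t\downarrow s}$ with $F$ and $Q_{\zeta}$ and asserts the failures outright, without addressing the two points you flag — the mild joint (locally uniform) regularity needed to move the limit along the path in the sufficiency direction, and the non-degeneracy/richness of the driver needed so that a defect of $F$ or a jump of $Q_{\zeta}$ is actually witnessed by the paths in the necessity direction.
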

The proof is given in Appendix \ref{cadlagproofappendix}. In what follows, we restrict to the construction of c\`adl\`ag quantile processes. Furthermore, the parameter vectors $\bm{\xi}$ and $\bm{\theta}$ may be generalized to be time--dependent deterministic functions or $(\F_t)$-adapted stochastic processes. This extension is useful, for example, in models where higher moments such as skewness and kurtosis are stochastic and their dynamics are captured by the parameters of the constructed quantile process. The quantile process $Z_t = Q_{\zeta}\left(F(t,Y_t;\bm{\theta}(t));\bm{\xi_t}\right)$ with stochastic parameter vector $(\bm{\xi}_t)$ affords extensive variety, for example, in the dynamic risk profiles captured by the quantile process.  
\

The quantile processes produced by the composite map (\ref{processdrivenQP}) are not unique. Different combinations of $Q_{\zeta}(u;\bm{\xi})$ and $F(t,y;\bm{\theta})$ may produce in law the same quantile process $(Z_t)$. To characterize families of quantile processes, we consider the notion of a {\it pivotal quantile process}. We refer to \cite{shao2006mathematical} for a definition of a pivotal quantity. For our purpose, the marginals of the pivot process are governed by a reference distribution. 

\begin{Definition}\label{pivotalqpdefn}
Let $Y_t\sim F_Y(t,y;\bm{\theta}_Y(t))$ where the parameter vector $\bm{\theta}_Y(t)\in\mathbb{R}^d$. Consider another parameter vector $\bm{\widetilde{\theta}}(t)\in\mathbb{R}^{d^\prime}$, $d^\prime\leq d$, and define the pivot process by $\widetilde{Y_t}:=\mathcal{R}(Y_t;\bm{\theta}_Y(t))$ where $\mathcal{R}$ is a Borel function such that $\widetilde{Y_t}\sim F_{\widetilde{Y}}(t,\tilde{y};\bm{\widetilde{\theta}}(t))$, and the ``reference law'' $F_{\widetilde{Y}}$ does not depend on $\bm{\theta}_Y(t)$. The pivot quantile process formulation is given by
\begin{equation}\label{pivotalquantileprocess}
Z_t = Q_{\zeta}\left(F\left(t,\widetilde{Y}_t;\bm{\theta}(t)\right);\bm{\xi
}(t) \right),
\end{equation}
where $F(t,y;\bm{\theta}(t))$ is a distribution belonging to the same family of distributions as $\widetilde{Y}_t$ for $t\in(0,\infty)$ with parameters $\bm{\theta}(t)\in\mathbb{R}^{d^\prime}$.   
\end{Definition}

The pivot process $(\widetilde{Y_t})$ serves as a base or reference process with respect to which the resulting quantile process is anchored and relative properties between the two processes will be made explicit in terms of the model parameters.  Later, we will proceed to introduce a valuation principle, in the setting of dynamic risk measures, based on quantile processes and their induced probability measures. In this context, the pivot formulation provides a reference measure in relation to which the probability measures induced by an arbitrary quantile process can be compared. In this sense, the pivot--induced measure can be interpreted as an objective probability measure while the quantile--induced measure can be regarded as a subjective measure. Such consideration is also useful in model risk management. While a quantile process may be a firm--specific model, the pivot process could be the baseline imposed by a regulatory body with respect to which the firm--specific (internal) model is assessed. Such an analysis can have material impact on the calculation of capital requirements, which rely on the calculation of risk measures. In Appendix \ref{appendixpivotexamples}, we present two simple examples for the construction of quantile process referencing a pivot. 
\

Next we introduce a canonical composite map for the construction of the quantile process driven by Brownian motion. 
\begin{Definition}\label{canonicaldefn-cont}
Let $(W_t)$ denote a one--dimensional, $(\mathscr{F}_t)$--adapted standard Brownian motion, and set $Y_t=W_t$ in Definition \ref{processdrivendefn}.  For $t\in(0,\infty)$, the canonical quantile process is given by 
\begin{equation}\label{canonicalQP}
Z_t=Q_{\zeta}\left(F_W(t,W_t);\bm{\xi}\right),
\end{equation}
where $F_W(t,w)=[1+\erf(w/\sqrt{2t})]/2$. In case that the range of the quantile function be restricted to some $D_{\zeta}\subset\mathbb{R}$, then $Z_t=Q_{\zeta}(F_W(t,\omega);\bm{\xi}):\mathbb{R}^+\times\Omega\rightarrow D_{\zeta}$.  The canonical quantile process is then defined on the state space $(D_{\zeta},\mathscr{B}(D_{\zeta}))$.
\end{Definition}
As an example, let us consider the Tukey family of quantile processes, see \cite{BMPqd1}, with time--dependent parameters $A(t)\in\mathbb{R}$, $B(t)>0$, $g(t)\neq 0$ and $h(t)>0$ for all $t\in(0,\infty)$. Let $Q_{
\zeta}(u;\bm{\xi})$ where $\zeta=T_{gh}$ and  $\bm{\xi}(t)=(A(t),B(t),g(t),h(t))$, then the canonical Tukey--$gh$ quantile diffusion is given by 
\begin{equation}\label{bmgtruelaw}
Z_t=A(t) + \dfrac{B(t)}{g(t)}\left[\expp\left(g(t)\dfrac{W_t}{\sqrt{t}} \right)-1 \right]\expp\left(h(t)\dfrac{W_t^2}{2t} \right), 
\end{equation}
for $t\in(0,\infty)$. This Tukey--$gh$ class of quantile processes allows one to capture any skew--kurtosis range. Such features are often advantageous in loss modelling, e.g., in non--life insurance settings as discussed in \cite{lmomentsgh}.  To achieve the same flexibility by another stochastic process would require a more complex driving process when compared to standard Brownian motion. It is worth emphasizing that the parameters $g(t)$ and $h(t)$ directly control the skewness and, respectively, the kurtosis of the process $(Z_t)$. In applications that will follow later, we will make extensive use of this fact.
\

We conclude this section by constructing a Tukey--$gh$ quantile process $(Z_t)$ driven by an inhomogeneous Ornstein-Uhlenbeck process $(Y_t)$ satisfying
$$\rd Y_t = \theta(t)\left( \mu(t) -Y_t\right)\rd t + \sigma(t)  \rd W_t$$ with $y_0\in\mathbb{R}^{+}$, $\mu(t)\in\mathbb{R}$, $\sigma(t)\in\mathbb{R}$, and the mean--reversion parameter $\theta(t)\in\mathbb{R}$ for all $t\in(0,\infty)$.  The marginal law of the driving process at each time $t\in(0,\infty)$ is given by
\begin{equation}
F_Y(t,y;\bm{\theta}(t)) = \frac{1}{2}\left[1+\erf\left(\dfrac{y-y_0\e^{-\int_0^t\theta(s)\rd s}-\e^{-\int_0^t\theta(s)\rd s}\int_0^t\expp\left(\int_0^s\theta(u)\rd u  \right)\theta(s)\mu(s)\rd s}{ \e^{-\int_0^t\theta(s)\rd s}\sqrt{2\int_0^t\expp\left(2\int_0^s\theta(u)\rd u \right)\sigma^2(s)\rd s}} \right) \right],
\end{equation}
where $\bm{\theta}(t)=(\theta(t),\mu(t),\sigma(t))$, and so the process $U_t=F_Y(t,Y_t;\bm{\theta}(t))$ is given by
\begin{equation}\label{OUuniformtimeinhom}
U_t=\dfrac{1}{2}\left[1+\erf\left(\dfrac{\int_0^t \expp\left(\int_0^s\theta(u)\rd u \right)\sigma(s)\rd W_s}{\sqrt{2\int_0^t\expp\left(2\int_0^s\theta(u)\rd u\right)\sigma^2(s)\rd s}} \right) \right],
\end{equation}
for $t\in(0,\infty)$. The Tukey--$gh$ quantile process $(Z_t)_{t\in(0,\infty)}$ in this case is obtained by setting $Q_{\zeta}(u;\bm{\xi}(t))=Q_{T_{gh}}(u;A(t),B(t),g(t),h(t))$, and so we have
\begin{equation}\begin{split}
Z_t = A(t) + B(t)\left(\dfrac{\expp\left( g(t)X_t\right)-1}{g(t)} \right) \expp\left(h(t)\left(X_t \right)^2 \right),
\end{split}\end{equation}
where $X_t:=\sqrt{2}\erf^{-}(2U_t-1)$. It is straightforward to derive the SDE satisfied by $(Z_t)$.  An example of a quantile process driven by a variance-gamma process is included in Appendix \ref{appendixpivotexamples}.
\section{Stochastic ordering}\label{stochasticorderingsection}
In this section, we show under what conditions quantile processes satisfy the first-- and second--order stochastic dominance property. Stochastic ordering plays an important part in the novel valuation principle proposed in this work. It implies monotonic behaviour between the level of risk as quantified by the quantile process and the price associated to some contract dependent on the risk, e.g., an insurance premium.  It also allows for a ranking of investor risk profiles, as captured by the moments profile of the quantile process.  In the context of Tukey--$gh$ quantile processes, the monotonic relation between risk and some associated price is reflected by plotting the price as a function of the skew and kurtosis parameters. We analyse the canonical Tukey-$gh$ quantile diffusion processes in relation to stochastic ordering via their parameters $g$ and $h$.
\

We begin with recalling the notion of stochastic ordering, as presented in \cite{levy1992stochastic}, adapted to our context of quantile processes.
\begin{Definition}\label{fosddefn}
Recall Definition \ref{processdrivendefn}, where $Q_{{\zeta}_i}(u):[0,1]\rightarrow[\underline{z}_i,\overline{z}_i]\subseteq\mathbb{R}$ are quantile functions and $F_i(t,y):\mathbb{R}^+\times[\underline{y}_i,\overline{y}_i]\rightarrow[0,1]$ are distribution functions, for $i=1,2$. Consider the quantile processes $Z_t^{(i)}=Q_{{\zeta}_i}(F_i(t,Y_t^{(i)}))$ with marginal distributions $F_{Z^{(i)}}(t,z_i)=\mathbb{P}(Z_t^{(i)}\leq z_i)$, for $z_i\in D_{Z^{(i)}}:=[\underline{z}_i,\overline{z}_i]$. We say that $(Z^{(1)}_t)_{t\in(0,\infty)}$ dominates $(Z^{(2)}_t)_{t\in(0,\infty)}$ by first--, or second--order stochastic dominance on $D_{\zeta}:=[z_0(t),\max\{\overline{z}_1,\overline{z}_2 \}]$, for $z_0(t)\in [\min\{\underline{z}_1,\underline{z}_2\},\max\{\overline{z}_1,\overline{z}_2 \})$ with $F_{Z^{(1)}}(t,z_0(t))=F_{Z^{(2)}}(t,z_0(t))$, if and only if for all $t\in(0,\infty)$ the following hold, respectively:
\begin{description}
    \item FOSD: $F_{Z^{(2)}}(t,z)-F_{Z^{(1)}}(t,z)\geq 0$, for all $z\in D_{\zeta}$,
    \item SOSD: $\int_{z_0(t)}^z\left[ F_{Z^{(2)}}(t,x)- F_{Z^{(1)}}(t,x)\right]\rd x\geq 0$, for all $z\in D_{\zeta}$.
\end{description}
In either stochastic dominance criterion, strict inequality is required for at least one $z\in D_{\zeta}$.
\end{Definition}
The relation between risk preferences via utility functions and FOSD and SOSD criterion are given, with proofs, in \cite{hadar1969rules,hanoch1969efficiency,rothschild1970increasing}. In the following, we consider a compact support $D_Y$. In the case of a non--compact support, the results are analogous.
\begin{Proposition}\label{fosdprop}
Consider $(Z^{(i)}_t)_{t\in(0,\infty)}$, $i=1,2$, in Definition \ref{fosddefn}. Assume $Y_t^{(1)}\succsim_{FOSD}Y_t^{(2)}$ on $D_Y:=[y_0(t),\max\{\overline{y}_1,\overline{y}_2\}]$ where $y_0(t):=\{y_0\in [\min\{\underline{y}_1,\underline{y}_2 \},\max\{\overline{y}_1,\overline{y}_2\}]: F_{Y^{(1)}}(t,y_0)=F_{Y^{(2)}}(t,y_0)\}$ for all $t\in(0,\infty)$. It holds $Z_t^{(1)}\succsim_{FOSD}Z_t^{(2)}$ on $D_{\zeta}:=[z_0(t),\max\{\overline{z}_1,\overline{z}_2 \}]$, if and only if $Q_{\zeta_1}\left(F_1\left(t,y\right)\right)\geq Q_{\zeta_2}\left(F_2\left(t,y\right)\right)$ for all $y\in D_Y$ and $t\in(0,\infty)$, where $z_0(t):=\{\max z_0\in \{[\underline{z}_1,\overline{z}_1]\cup[\underline{z}_2,\overline{z}_2] \}: z_0\leq Q_{\zeta_2}(F_2(t,y_0(t)))\hspace{1mm} \textrm{and}\hspace{1mm} F_{Z^{(1)}}(t,z_0)=F_{Z^{(2)}}(t,z_0)\}$ for all $t\in(0,\infty)$. That is, first--order stochastic dominance is preserved under the pair of composite maps. 
\end{Proposition}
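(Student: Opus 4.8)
The plan is to reduce the statement about the quantile processes $(Z^{(i)}_t)$ to the assumed FOSD ordering of the drivers $(Y^{(i)}_t)$ by exploiting the monotonicity of the composite maps $y\mapsto Q_{\zeta_i}(F_i(t,y))$. First I would recall the equivalent characterisation of first--order stochastic dominance: $Y_t^{(1)}\succsim_{FOSD}Y_t^{(2)}$ on $D_Y$ is equivalent to the statement that there exist, on a common probability space, versions $\widetilde Y_t^{(1)},\widetilde Y_t^{(2)}$ with $\widetilde Y_t^{(1)}\geq \widetilde Y_t^{(2)}$ almost surely (the standard quantile coupling), and equivalently to $F_{Y^{(1)}}(t,y)\leq F_{Y^{(2)}}(t,y)$ for all $y\in D_Y$. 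Since each $F_i(t,\cdot)$ is a continuous distribution function and each $Q_{\zeta_i}$ is a (left--continuous, non--decreasing) generalised inverse, the composite map $g_i(t,y):=Q_{\zeta_i}(F_i(t,y))$ is non--decreasing in $y$; this is exactly the map defining $Z_t^{(i)}$ via $Z_t^{(i)}=g_i(t,Y_t^{(i)})$.

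For the ``if'' direction, assume $g_1(t,y)\geq g_2(t,y)$ for all $y\in D_Y$ and $t\in(0,\infty)$. Fix $t$ and fix a quantile level $u\in(0,1)$. The key identity to establish is that the quantile function of $Z_t^{(i)}$ equals $g_i$ evaluated at the quantile function of $Y_t^{(i)}$, i.e.
\begin{equation}\label{eq:quantilecomposition}
Q_{Z_t^{(i)}}(u) = g_i\!\left(t, Q_{Y_t^{(i)}}(u)\right),
\end{equation}
which holds because $g_i(t,\cdot)$ is non--decreasing and (using continuity of $F_i$ together with the standard composition rules for generalised inverses, see \cite{geninverse}) left--continuous in the relevant sense, so that composition of monotone maps commutes with taking generalised inverses. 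Granting \eqref{eq:quantilecomposition}, the hypothesis $Y_t^{(1)}\succsim_{FOSD}Y_t^{(2)}$, i.e. $Q_{Y_t^{(1)}}(u)\geq Q_{Y_t^{(2)}}(u)$ for all $u$, combined with monotonicity of $g_2(t,\cdot)$ and then with $g_1\geq g_2$, gives
\[
Q_{Z_t^{(1)}}(u)=g_1\!\left(t,Q_{Y_t^{(1)}}(u)\right)\geq g_2\!\left(t,Q_{Y_t^{(1)}}(u)\right)\geq g_2\!\left(t,Q_{Y_t^{(2)}}(u)\right)=Q_{Z_t^{(2)}}(u)
\]
for all $u\in(0,1)$, which is precisely the quantile--function formulation of $Z_t^{(1)}\succsim_{FOSD}Z_t^{(2)}$, i.e. the FOSD inequality $F_{Z^{(2)}}(t,z)-F_{Z^{(1)}}(t,z)\geq 0$ on $D_\zeta$. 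The strictness requirement transfers because a strict inequality at some $y\in D_Y$ (or at some $u$) propagates through the monotone composition to a strict inequality at the corresponding $z$. I would also need to check that the crossing point $z_0(t)$ as defined in the statement indeed satisfies $F_{Z^{(1)}}(t,z_0(t))=F_{Z^{(2)}}(t,z_0(t))$ and lies in the stated interval; this follows by pushing $y_0(t)$ through the maps $g_i$ and using continuity of the distribution functions to locate the (largest) common value below $g_2(t,y_0(t))$.

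For the ``only if'' direction, assume $Z_t^{(1)}\succsim_{FOSD}Z_t^{(2)}$, equivalently $Q_{Z_t^{(1)}}(u)\geq Q_{Z_t^{(2)}}(u)$ for all $u\in(0,1)$ and $t$. Using \eqref{eq:quantilecomposition} again this reads $g_1(t,Q_{Y_t^{(1)}}(u))\geq g_2(t,Q_{Y_t^{(2)}}(u))$ for all $u$; since by hypothesis $Q_{Y_t^{(1)}}(u)\geq Q_{Y_t^{(2)}}(u)$, to recover the pointwise inequality $g_1(t,y)\geq g_2(t,y)$ for every $y\in D_Y$ I would argue by contradiction: if $g_1(t,y^\ast)<g_2(t,y^\ast)$ for some $y^\ast$, then by right--continuity of $F_1(t,\cdot)$ and the structure of the generalised inverse one can choose a quantile level $u^\ast$ with $Q_{Y_t^{(1)}}(u^\ast)$ close to $y^\ast$ from the right and $Q_{Y_t^{(2)}}(u^\ast)\leq y^\ast$, contradicting the established inequality; here the fact that $F_i$ is continuous (hence $Y_t^{(i)}$ has no atoms) is what makes the quantile levels hit $y^\ast$ densely enough. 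The main obstacle I anticipate is precisely this measure--theoretic bookkeeping around generalised inverses: establishing \eqref{eq:quantilecomposition} rigorously when $Q_{\zeta_i}$ may be only left--continuous and its range is a possibly proper subset $D_{\zeta_i}\subset\mathbb{R}$, and handling the boundary behaviour at $u=0,1$ and at the endpoints of the supports $[\underline y_i,\overline y_i]$, $[\underline z_i,\overline z_i]$ so that the crossing points $y_0(t)$ and $z_0(t)$ are well defined; by contrast, once the quantile--composition identity is in hand, the ordering argument itself is a short monotonicity chain.
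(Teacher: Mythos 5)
Your ``if'' direction is sound and is a genuinely equivalent but differently phrased route: you work with the quantile--function characterisation of FOSD and the composition identity $Q_{Z_t^{(i)}}(u)=g_i(t,Q_{Y_t^{(i)}}(u))$, $g_i(t,y):=Q_{\zeta_i}(F_i(t,y))$, whereas the paper works on the distribution side, using $F_{Y^{(i)}}(t,y)=F_{Z^{(i)}}(t,Q_{\zeta_i}(F_i(t,y)))$ and a monotonicity sandwich $F_{Z^{(1)}}(t,z_2(y))\leq F_{Z^{(1)}}(t,z_1(y))\leq F_{Z^{(2)}}(t,z_2(y))\leq F_{Z^{(2)}}(t,z_1(y))$ to pass from inequalities at the two image points $z_i(y)$ to an inequality at a common $z$; the two arguments are of comparable length, and yours is arguably cleaner for the sufficiency claim (modulo the domain bookkeeping around $y_0(t)$, $z_0(t)$ that you flag, which is where the paper pins down $D_\zeta$ and the strict inequality using a $y^{*}$ with $F_{Y^{(1)}}(t,y^{*})<F_{Y^{(2)}}(t,y^{*})$).

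The genuine gap is in your ``only if'' direction. From $Z_t^{(1)}\succsim_{FOSD}Z_t^{(2)}$ your identity only yields $g_1(t,Q_{Y^{(1)}}(u))\geq g_2(t,Q_{Y^{(2)}}(u))$, i.e.\ the two composite maps are compared at \emph{different} arguments, with $Q_{Y^{(1)}}(u)\geq Q_{Y^{(2)}}(u)$. Your proposed contradiction picks $u^{*}$ with $Q_{Y^{(1)}}(u^{*})\approx y^{*}$ and $Q_{Y^{(2)}}(u^{*})\leq y^{*}$, but then monotonicity gives $g_2(t,Q_{Y^{(2)}}(u^{*}))\leq g_2(t,y^{*})$, which points the wrong way: the inequality $g_1(t,y^{*})\geq g_2(t,Q_{Y^{(2)}}(u^{*}))$ is perfectly compatible with $g_1(t,y^{*})<g_2(t,y^{*})$ whenever the gap between $F_{Y^{(1)}}$ and $F_{Y^{(2)}}$ at $y^{*}$ is large enough to absorb the violation; to force a contradiction you would need $Q_{Y^{(2)}}(u^{*})\geq y^{*}$, which contradicts the assumed ordering of the drivers except where the two driver laws coincide. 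This is precisely why the paper's converse is anchored at the crossing point $y_0(t)$, where $F_{Y^{(1)}}(t,y_0)=F_{Y^{(2)}}(t,y_0)$ ties the two sides to a common probability level, giving $F_{Z^{(1)}}(t,z_1(y_0))=F_{Z^{(2)}}(t,z_2(y_0))$; assuming $z_1(y)<z_2(y)$ one then uses the (strict) increase of $F_{Z^{(1)}}$ between $z_1$ and $z_2$ to exhibit a point $z_2(y^{*})$ at which $F_{Z^{(1)}}(t,z_2(y^{*}))>F_{Z^{(2)}}(t,z_2(y^{*}))$, contradicting FOSD of the quantile processes. Your sketch never uses $y_0(t)$ in the converse, and without that anchoring the argument cannot be completed, so the necessity half of the biconditional is missing.
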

\begin{proof}
Since, by assumption, $Y_t^{(1)}\succsim Y_t^{(2)}$ on $D_Y$, by the definition of FOSD it holds that $F_{Y^{(1)}}(t,y)\leq F_{Y^{(2)}}(t,y)$ for all $y\in D_Y$ and $t\in(0,\infty)$ with strict inequality for at least one $y\in D_Y$.  The quantile processes are constructed as $Z_t^{(i)}=Q_{\zeta_i}( F_i(t,Y_t^{(i)}))$ for $i=1,2$, and so we may write $Y_t^{(i)}=Q_i(t,F_{\zeta_i}(Z_t^{(i)}) )$.  It follows that 
\begin{equation*}
\begin{split}
F_{Y^{(i)}}(t,y)=\mathbb{P}\left(Y_t^{(i)}\leq y\right)&=\mathbb{P}\left(Q_i\left(t,F_{\zeta_i}\left(Z_t^{(i)}\right) \right)\leq y \right)
\\
&=\mathbb{P}\left(Z_t^{(i)}\leq Q_{\zeta_i}\left(F_i\left(t,y\right)\right) \right)=F_{Z^{(i)}}\left(t,Q_{\zeta_i}\left(F_i\left(t,y\right)\right)\right)
\end{split}\end{equation*}
for all $y\in D_Y$, $t\in(0,\infty)$ and so, by assumption,
\begin{equation}\label{zdistinequality-y}
    F_{Z^{(1)}}\left(t,Q_{\zeta_1}\left(F_1\left(t,y\right)\right)\right)\leq F_{Z^{(2)}}\left(t,Q_{\zeta_2}\left(F_2\left(t,y\right)\right)\right)
\end{equation}
for all $y\in D_Y$, $t\in(0,\infty)$.  Eq. (\ref{zdistinequality-y}) can be written as $F_{Z^{(1)}}\left(t,z_1(y)\right)\leq F_{Z^{(2)}}\left(t,z_2(y)\right)$, $y\in D_Y$, where $z_i=Q_{\zeta_i}(F_i(t,y))$, $i=1,2$, and $z_1\neq z_2$. However, we need to show that $F_{Z^{(1)}}\left(t,z\right)\leq F_{Z^{(2)}}\left(t,z\right)$ for all $z\in D_{\zeta}$, for some $D_{\zeta}\subseteq [\min\{\underline{z}_1,\underline{z}_2\},\max\{ \overline{z}_1,\overline{z}_2\}]$ to be determined. We consider the following cases: First, we assume that $z_1(y)\geq z_2(y)$.  Since $F_{Z^{(i)}}(t,z)$, $i=1,2$, are increasing functions in $z$ for all $t\in(0,\infty)$, it holds that 
$$F_{Z^{(1)}}(t,z_2(y))\leq F_{Z^{(1)}}(t,z_1(y))\leq F_{Z^{(2)}}(t,z_2(y))\leq F_{Z^{(2)}}(t,z_1(y)),$$ 
so that $F_{Z^{(1)}}(t,z_2(y))\leq F_{Z^{(2)}}(t,z_2(y))$ and $ F_{Z^{(1)}}(t,z_1(y))\leq F_{Z^{(2)}}(t,z_1(y))$ for all $y\in D_Y$, $t\in(0,\infty)$. It then follows, $F_{Z^{(1)}}(t,z)\leq F_{Z^{(2)}}(t,z)$ for all $z\in [\underset{i}{\min}\, Q_{\zeta_i}(F_i(t,y_0(t))),\max\{\overline{z}_1,\overline{z}_2 \}]=[Q_{\zeta_2}(F_2(t,y_0(t))),\max\{\overline{z}_1,\overline{z}_2 \}]$, with strict inequality for at least the values $z=z_i(y^{*})$, $i=1,2$, where $y^{*}\in D_Y$ is any value such that $F_{Y^{(1)}}(t,y^{(*)})<F_{Y^{(2)}}(t,y^{(*)})$. Next we assume $z_1(y)<z_2(y)$ for all $y\in D_Y$, $t\in(0,\infty)$.  Since $F_{Z^{(1)}}\left(t,z_1(y)\right)\leq F_{Z^{(2)}}\left(t,z_2(y)\right)$, there exists some $y^*\in D_Y$ such that $F_{Z^{(1)}}\left(t,z_1(y^*)\right)= F_{Z^{(2)}}\left(t,z_2(y^*)\right)$ and $z_2(y^{*})>z_1(y^{*})$. Then, $$F_{Z^{(1)}}\left(t,z_2(y^*)\right)>F_{Z^{(1)}}\left(t,z_1(y^*)\right)=F_{Z^{(2)}}\left(t,z_2(y^*)\right),$$ i.e., $F_{Z^{(1)}}\left(t,z_2(y^*)\right)>F_{Z^{(2)}}\left(t,z_2(y^*)\right)$ and so it cannot hold that $F_{Z^{(1)}}(t,z)\leq F_{Z^{(2)}}(t,z)$ for all $z\in [Q_{\zeta_2}(F_2(t,y_0(t))),\max\{\overline{z}_1,\overline{z}_2\}]$. Therefore, we conclude that $F_{Z^{(1)}}(t,z)\leq F_{Z^{(2)}}(t,z)$ for all $z\in[Q_{\zeta_2}(F_2(t,y_0(t))),\max\{\overline{z}_1,\overline{z}_2\}]$, $t\in(0,\infty)$, with strict inequality for at least one $z$ in this range, if and only if $Q_{\zeta_1}(F_1(t,y))\geq Q_{\zeta_2}(F_2(t,y))$.  By the increasing property of distribution functions, it follows that for each $t\in(0,\infty)$ there exists at least one $z\in[\min\{\underline{z}_1,\underline{z}_2\},\max\{\overline{z}_1,\overline{z}_2\}]$ such that $z\leq Q_{\zeta_2}(F_2(t,y_0(t)))$ and $F_{Z^{(1)}}(t,z)=F_{Z^{(2)}}(t,z)$, with $F_{Z^{(1)}}(t,x)\leq F_{Z^{(2)}}(t,x)$ for all $x\in[z,Q_{\zeta_2}(F_2(t,y_0(t))]$.  If we define $$z_0(t):=\left\{\max z\in[\min\{\underline{z}_1,\underline{z}_2\},\max\{\overline{z}_1,\overline{z}_2\}] :z\leq Q_{\zeta_2}(F_2(t,y_0(t)))\hspace{1mm}\&\hspace{1mm}F_{Z^{(1)}}(t,z)=F_{Z^{(2)}}(t,z)\right\},$$ then $F_{Z^{(1)}}(t,z)\leq F_{Z^{(2)}}(t,z)$ for all $z\in D_{\zeta}:=[z_0(t),\max\{\overline{z}_1,\overline{z}_2\}]$ with strict inequality for at least one $z\in D_{\zeta}$, and hence $Z_t^{(1)}\succsim_{FOSD}Z_t^{(2)}$ on $D_{\zeta}\subseteq\mathbb{R}$.
\end{proof}
The situation considered next is where the true law of each driving process $(Y^{(i)}_t)_{i=1,2}$, with $Y^{(1)}_t\neq Y^{(2)}_t$ almost surely for all $t\in(0,\infty)$, is applied in the composite map giving rise to the two quantile processes. 
\begin{Corollary}\label{fosdcorr2}
Consider the case where $F_i(t,y_i)=F_{Y^{(i)}}(t,y_i)$ for $i=1,2$, and all $y_i\in [\underline{y_i},\overline{y_i}]$, $t\in(0,\infty)$. Assume $Y_t^{(1)}\succsim_{FOSD}Y_t^{(2)}$ on $D_Y$. It holds that $Z_t^{(1)}\succsim_{FOSD}Z_t^{(2)}$ on $D_{\zeta}$ for all $t\in(0,\infty)$, if and only if $Q_{\zeta_1}(u)\geq Q_{\zeta_2}(u)$ for all $u\in[F_{Y^{(1)}}(t,y_0(t)),1]$ and $t\in(0,\infty)$.
\end{Corollary}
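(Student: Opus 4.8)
The plan is to read the statement off Proposition~\ref{fosdprop} after specialising the composite maps. Putting $F_i(t,\cdot)=F_{Y^{(i)}}(t,\cdot)$ makes the intermediate process $U_t^{(i)}:=F_{Y^{(i)}}(t,Y_t^{(i)})$ uniformly distributed on $[0,1]$ for each $t\in(0,\infty)$, by continuity of the true law (cf.\ the remark after Definition~\ref{processdrivendefn}). Hence $Z_t^{(i)}=Q_{\zeta_i}(U_t^{(i)})$ with $U_t^{(i)}\sim\mathrm{Unif}[0,1]$, so $Q_{\zeta_i}$ is itself a version of the quantile function $F_{Z^{(i)}}^{-}(t,\cdot)$ of $Z_t^{(i)}$ and the marginal law of $Z_t^{(i)}$ depends only on $Q_{\zeta_i}$. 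This structural fact is what the whole argument rests on.

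Proposition~\ref{fosdprop} with $F_i=F_{Y^{(i)}}$ states that $Z_t^{(1)}\succsim_{FOSD}Z_t^{(2)}$ on $D_\zeta$ for all $t$ iff $Q_{\zeta_1}(F_{Y^{(1)}}(t,y))\ge Q_{\zeta_2}(F_{Y^{(2)}}(t,y))$ for all $y\in D_Y$ and $t\in(0,\infty)$, so it suffices to match this with $Q_{\zeta_1}(u)\ge Q_{\zeta_2}(u)$ on $u\in[u_0(t),1]$, where $u_0(t):=F_{Y^{(1)}}(t,y_0(t))$. One direction is a substitution argument: given $u\in[u_0(t),1]$, set $y:=F_{Y^{(1)}}^{-}(t,u)$, which lies in $D_Y$ since $u\ge u_0(t)$ forces $y\ge y_0(t)$; then $F_{Y^{(1)}}(t,y)=u$ and, by $Y_t^{(1)}\succsim_{FOSD}Y_t^{(2)}$ on $D_Y$, $F_{Y^{(2)}}(t,y)\ge u$, so monotonicity of $Q_{\zeta_2}$ gives $Q_{\zeta_1}(u)=Q_{\zeta_1}(F_{Y^{(1)}}(t,y))\ge Q_{\zeta_2}(F_{Y^{(2)}}(t,y))\ge Q_{\zeta_2}(u)$. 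For the other direction of the biconditional---that the condition on the $Q_{\zeta_i}$ implies $Z_t^{(1)}\succsim_{FOSD}Z_t^{(2)}$ on $D_\zeta$---I would not try to run this chain backwards, since the two distinct arguments $F_{Y^{(1)}}(t,y)$ and $F_{Y^{(2)}}(t,y)$ obstruct it, but would instead use the first step directly: since $Q_{\zeta_i}=F_{Z^{(i)}}^{-}(t,\cdot)$, the standard duality between distribution functions and generalised inverses converts $Q_{\zeta_1}(u)\ge Q_{\zeta_2}(u)$ on $[u_0(t),1]$ into $F_{Z^{(1)}}(t,z)\le F_{Z^{(2)}}(t,z)$ for $z\ge z_0(t)$, which is exactly $Z_t^{(1)}\succsim_{FOSD}Z_t^{(2)}$ on $D_\zeta$ in the sense of Definition~\ref{fosddefn}; the ``strict for at least one point'' clauses correspond under this duality.

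The step I expect to be the main obstacle is the endpoint bookkeeping: one must verify that the quantile-level interval $[u_0(t),1]$ is matched exactly by the interval $D_\zeta=[z_0(t),\max\{\overline z_1,\overline z_2\}]$ inherited from Proposition~\ref{fosdprop}, i.e.\ that $F_{Z^{(i)}}(t,z_0(t))=u_0(t)$ for $i=1,2$. This should follow from the definition of $z_0(t)$ in Proposition~\ref{fosdprop}, the identity $Q_{\zeta_2}(F_2(t,y_0(t)))=Q_{\zeta_2}(u_0(t))$ (which holds because $F_{Y^{(1)}}(t,y_0(t))=F_{Y^{(2)}}(t,y_0(t))=u_0(t)$), and the fact that $Q_{\zeta_i}$ is the quantile function of $Z_t^{(i)}$; here the FOSD hypothesis on $(Y_t^{(i)})$ enters precisely to fix the left endpoint. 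Following Proposition~\ref{fosdprop}, I would first carry this out on a compact support $D_Y$, so that the relevant generalised inverses behave like ordinary inverses under continuity/strict-monotonicity assumptions on the $F_{Y^{(i)}}(t,\cdot)$ and the $Q_{\zeta_i}$, and then note that the non-compact case is entirely analogous.
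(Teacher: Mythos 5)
Your proposal is correct (to the same level of endpoint/strictness bookkeeping as the paper itself), and it half coincides with, half diverges from, the paper's own argument. For the direction ``$Z^{(1)}_t\succsim_{FOSD}Z^{(2)}_t$ implies $Q_{\zeta_1}\geq Q_{\zeta_2}$ on $[F_{Y^{(1)}}(t,y_0(t)),1]$'' your substitution $y:=F_{Y^{(1)}}^{-}(t,u)$ combined with driver FOSD and monotonicity of $Q_{\zeta_2}$ is exactly the paper's sandwich argument $Q_{\zeta_1}(u_2)\geq Q_{\zeta_1}(u_1)\geq Q_{\zeta_2}(u_2)\geq Q_{\zeta_2}(u_1)$, just parametrised by $u$ instead of $y$. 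The genuine difference is in the converse direction. The paper routes everything through the two--argument condition $Q_{\zeta_1}(F_{Y^{(1)}}(t,y))\geq Q_{\zeta_2}(F_{Y^{(2)}}(t,y))$ of Proposition \ref{fosdprop} and asserts that this condition ``reduces to'' the pointwise condition $Q_{\zeta_1}(u)\geq Q_{\zeta_2}(u)$; as written, the sandwich inequalities only prove the implication from the two--argument condition to the pointwise one, and the reverse implication at that level is obstructed precisely by the mismatch of arguments you identify (the pointwise condition at $u_1(y)$ does not by itself dominate $Q_{\zeta_2}(u_2(y))$ when $u_2(y)>u_1(y)$). Your fix --- observing that under the true--law construction $U^{(i)}_t$ is uniform, so $F_{Z^{(i)}}(t,\cdot)=F_{\zeta_i}$ is stationary and $Q_{\zeta_i}$ is the quantile function of $Z^{(i)}_t$, and then converting $Q_{\zeta_1}\geq Q_{\zeta_2}$ into $F_{Z^{(1)}}(t,\cdot)\leq F_{Z^{(2)}}(t,\cdot)$ by CDF--quantile duality --- is the device the paper only deploys later, in the second half of the proof of Corollary \ref{fosdcorr1}, and it closes the ``if'' direction more cleanly than the paper's reduction does. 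What the paper's route buys is uniformity with Proposition \ref{fosdprop}; what yours buys is an honest two--sided argument in which the role of the driver FOSD is isolated to the forward direction and to fixing the left endpoint $u_0(t)=F_{Y^{(1)}}(t,y_0(t))$, with the endpoint matching correctly flagged as the only remaining bookkeeping.
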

The proof of this corollary is found in Appendix  \ref{sdproofappendixsection}. Now we present the corollary for the case where the quantile processes are driven by processes equal in distribution.
\begin{Corollary}\label{fosdcorr1}
Assume $Y_t^{(1)}\overset{d}{=}Y_t^{(2)}$, i.e., $F_{Y^{(1)}}(t,y)=F_{Y^{(2)}}(t,y)$ for all $y\in [\min\{\underline{y}_1,\underline{y}_2\},\max\{\overline{y}_1,\overline{y}_2\}]$, $t\in(0,\infty)$. It holds $Z_t^{(1)}\succsim_{FOSD}Z_t^{(2)}$ on $\widetilde{D}_Z:=[\tilde{z}_0(t),\max\{\overline{z}_1,\overline{z}_2\}]$, if and only if $Q_{\zeta_1}(F_1(t,y)\geq Q_{\zeta_2}(F_2(t,y))$ for all $y\in \widetilde{D}_Y:=[\tilde{y}_0(t),\max\{\overline{y}_1,\overline{y}_2\}]$ for some $\tilde{y}_0(t)\in[\min\{\underline{y}_1,\underline{y}_2\},\max\{\overline{y}_1,\overline{y}_2\})$ with strict inequality for at least one $y\in\widetilde{D}_Y$, where $\tilde{z}_0(t):=\{\max\, z_0\in \{[\underline{z}_1,\overline{z}_1]\cup[\underline{z}_2,\overline{z}_2] \}: z_0\leq Q_{\zeta_2}(F_2(t,\tilde{y}_0(t)))\hspace{1mm} \textrm{and}\hspace{1mm} F_{Z^{(1)}}(t,z_0)=F_{Z^{(2)}}(t,z_0)\}$ for all $t\in(0,\infty)$. However, in case that $F_i(t,y)=F_{Y^{(i)}}(t,y)$ for $i=1,2$, for all $y\in D_Y$, and $t\in(0,\infty)$, then $Z_t^{(1)}\succsim_{FOSD}Z_t^{(2)}$ on $\overline{D}_Z:=[z_0,\max\{\overline{z}_1,\overline{z}_2\}]$ where $z_0\in[\min\{\underline{z}_1,\underline{z}_2\},\max\{\overline{z}_1,\overline{z}_2\}]$ with $F_{\zeta_1}(z_0)=F_{\zeta_2}(z_0)$ for all $t\in(0,\infty)$ if and only if $Q_{\zeta_1}(u)\geq Q_{\zeta_2}(u)$ for all $u\in[F_{\zeta_1}(z_0),1]$, with strict inequality for at least one u.
\end{Corollary}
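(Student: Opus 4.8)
The plan is to rerun the argument of Proposition~\ref{fosdprop} in the degenerate regime in which the two driving processes share the same marginal law, so that $F_{Y^{(1)}}(t,y)=F_{Y^{(2)}}(t,y)$ for every $y$ and every $t$. Fix $t\in(0,\infty)$. Exactly as in the proof of Proposition~\ref{fosdprop}, inverting the composite maps gives the identity $F_{Y^{(i)}}(t,y)=F_{Z^{(i)}}\bigl(t,Q_{\zeta_i}(F_i(t,y))\bigr)$ for $i=1,2$. Setting $z_i(y):=Q_{\zeta_i}(F_i(t,y))$, which is nondecreasing in $y$ as a composition of nondecreasing maps, the hypothesis $Y_t^{(1)}\overset{d}{=}Y_t^{(2)}$ collapses this pair of identities into the single relation $F_{Z^{(1)}}(t,z_1(y))=F_{Z^{(2)}}(t,z_2(y))$, valid for all $y$ in the common support and all $t$. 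This relation is the workhorse of the whole proof; the one substantive difference from Proposition~\ref{fosdprop} is that, since nothing strict is now available at the level of the drivers, the strict inequality demanded by Definition~\ref{fosddefn} must be produced entirely by the quantile functions.

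For the ``if'' direction of the first equivalence I would assume $z_1(y)\geq z_2(y)$ for all $y\in\widetilde{D}_Y$, strictly at some $y^{\ast}$. Monotonicity of $F_{Z^{(1)}}$ together with the workhorse relation gives $F_{Z^{(1)}}(t,z)\leq F_{Z^{(2)}}(t,z)$ at $z=z_2(y)$ and at $z=z_1(y)$ for every $y\in\widetilde{D}_Y$, hence on the whole interval swept out, namely $[Q_{\zeta_2}(F_2(t,\tilde{y}_0(t))),\max\{\overline{z}_1,\overline{z}_2\}]$ (the left endpoint being $z_2(\tilde{y}_0(t))$ precisely because $z_1\geq z_2$ there). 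The strict gap follows from $z_1(y^{\ast})>z_2(y^{\ast})$ together with strict monotonicity of $F_{Z^{(1)}}(t,\cdot)$ on $(z_2(y^{\ast}),z_1(y^{\ast}))$, which yields $F_{Z^{(1)}}(t,z_2(y^{\ast}))<F_{Z^{(1)}}(t,z_1(y^{\ast}))=F_{Z^{(2)}}(t,z_2(y^{\ast}))$. Finally I would extend leftward to the largest crossing point $\tilde{z}_0(t)$ of $F_{Z^{(1)}}(t,\cdot)$ and $F_{Z^{(2)}}(t,\cdot)$ lying below $Q_{\zeta_2}(F_2(t,\tilde{y}_0(t)))$, on which sub-interval $F_{Z^{(1)}}-F_{Z^{(2)}}$ keeps a single (necessarily nonpositive) sign, exactly as in the closing paragraph of the proof of Proposition~\ref{fosdprop}; this gives $Z_t^{(1)}\succsim_{FOSD}Z_t^{(2)}$ on $\widetilde{D}_Z$.

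For the ``only if'' direction I would argue by contraposition, so suppose it is not the case that $z_1\geq z_2$ on all of $\widetilde{D}_Y$ with strict inequality somewhere. If $z_1(y^{\ast})<z_2(y^{\ast})$ for some $y^{\ast}\in\widetilde{D}_Y$, the workhorse relation and monotonicity give $F_{Z^{(1)}}(t,z_2(y^{\ast}))>F_{Z^{(1)}}(t,z_1(y^{\ast}))=F_{Z^{(2)}}(t,z_2(y^{\ast}))$; since $z_2(y^{\ast})=Q_{\zeta_2}(F_2(t,y^{\ast}))\geq\tilde{z}_0(t)$ lies in $\widetilde{D}_Z$, this contradicts $F_{Z^{(2)}}(t,z)-F_{Z^{(1)}}(t,z)\geq 0$ on $\widetilde{D}_Z$. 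If instead $z_1\equiv z_2$ on $\widetilde{D}_Y$ (so the quantile inequality is nowhere strict), the workhorse relation forces $F_{Z^{(1)}}(t,\cdot)$ and $F_{Z^{(2)}}(t,\cdot)$ to coincide on $\widetilde{D}_Z$ --- here $\tilde{z}_0(t)$ reduces to $Q_{\zeta_2}(F_2(t,\tilde{y}_0(t)))=z_2(\tilde{y}_0(t))$, itself a crossing point --- so the strict-inequality clause of Definition~\ref{fosddefn} fails. In both cases $Z_t^{(1)}\not\succsim_{FOSD}Z_t^{(2)}$ on $\widetilde{D}_Z$, which completes the first equivalence.

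For the second statement I would add the hypothesis $F_i(t,\cdot)=F_{Y^{(i)}}(t,\cdot)$, which together with $Y_t^{(1)}\overset{d}{=}Y_t^{(2)}$ makes $F_1=F_2=:F_Y$; then $U_t^{(i)}:=F_Y(t,Y_t^{(i)})$ is uniform on $[0,1]$, $Z_t^{(i)}=Q_{\zeta_i}(U_t^{(i)})$, and the duality between a continuous distribution function and its generalised inverse yields $F_{Z^{(i)}}(t,\cdot)=F_{\zeta_i}(\cdot)$, so the relevant crossing point is the $z_0$ with $F_{\zeta_1}(z_0)=F_{\zeta_2}(z_0)$ and $\overline{D}_Z=[z_0,\max\{\overline{z}_1,\overline{z}_2\}]$. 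Since now $Q_{\zeta_i}(F_i(t,y))=Q_{\zeta_i}(F_Y(t,y))$ and $F_Y(t,\cdot)$ is continuous, the change of variable $u=F_Y(t,y)$ maps $\widetilde{D}_Y$ onto $[F_Y(t,\tilde{y}_0(t)),1]$, and the identity $F_{Z^{(1)}}=F_{\zeta_1}$ identifies the left endpoint with $F_{\zeta_1}(z_0)$; feeding this into the first equivalence (the exact analogue of Corollary~\ref{fosdcorr2} with the driver hypothesis relaxed from $Y_t^{(1)}\succsim_{FOSD}Y_t^{(2)}$ to $Y_t^{(1)}\overset{d}{=}Y_t^{(2)}$) gives $Z_t^{(1)}\succsim_{FOSD}Z_t^{(2)}$ on $\overline{D}_Z$ if and only if $Q_{\zeta_1}(u)\geq Q_{\zeta_2}(u)$ on $[F_{\zeta_1}(z_0),1]$ with strict inequality for at least one $u$. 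I expect the main obstacle to be the bookkeeping around the several crossing points --- $\tilde{y}_0(t)$ and $\tilde{z}_0(t)$ in the general case, $z_0$ in the reduced case --- and verifying the requisite strict monotonicity of the $Z$-distribution functions, i.e. confirming that the strictness in the stochastic-dominance conclusion is genuinely sourced from $Q_{\zeta_1}>Q_{\zeta_2}$ on a non-degenerate portion of the quantile-level domain, which is precisely where the present corollary departs from Proposition~\ref{fosdprop} and Corollary~\ref{fosdcorr2}.
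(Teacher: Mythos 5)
Your proposal is correct and follows essentially the same route as the paper: it reruns the argument of Proposition \ref{fosdprop} in the equal-in-distribution regime, observes that the strict inequality required by Definition \ref{fosddefn} must now be sourced from strictness of $Q_{\zeta_1}(F_1(t,y))>Q_{\zeta_2}(F_2(t,y))$ at some $y$ (since no strictness is available at the driver level), and in the true-law case uses the stationarity $F_{Z^{(i)}}(t,\cdot)=F_{\zeta_i}(\cdot)$ together with the uniform change of variable $u=F_Y(t,y)$ to reduce the condition to $Q_{\zeta_1}(u)\geq Q_{\zeta_2}(u)$ on $[F_{\zeta_1}(z_0),1]$. The extra bookkeeping you supply (the explicit case split in the only-if direction and the strict-monotonicity caveat for $F_{Z^{(1)}}(t,\cdot)$) is a more detailed rendering of the same argument, at the same level of rigour as the paper's own proof.
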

\begin{proof}
The proof is similar to that of Proposition \ref{fosdprop}. However, to ensure 
\begin{equation}\label{fosdstrictconditionz}
    F_{Z^{(1)}}(t,z)<F_{Z^{(2)}}(t,z)
\end{equation} 
for at least one $z\in\widetilde{D}_Z$ and all $t\in(0,\infty)$, see the definition of FOSD, $Q_{\zeta_1}(F_1(t,y))>Q_{\zeta_2}(F_2(t,y))$ is required for at least one $y\in\widetilde{D}_Y$.  This follows from the fact that there does not exist some $y\in[\min\{ \underline{y_1},\underline{y_2}\},\max\{\overline{y_1},\overline{y_2} \}]$ such that $F_{Y^{(1)}}(t,y)<F_{Y^{(2)}}(t,y)$ to impose Inequality \ref{fosdstrictconditionz} whenever $Q_{\zeta_1}(F_1(t,y))\geq Q_{\zeta_2}(F_2(t,y))$ without strict inequality for at least one $y\in \widetilde{D}_Y$ holding almost surely. 

In the case where $F_i(t,y)=F_{Y^{(i)}}(t,y)$, $i=1,2$, one can define $u:=F_1(t,y)=F_2(t,y)=F_{Y^{(1)}}(t,y)=F_{Y^{(2)}}(t,y)$ for all $y\in \widetilde{D}_Y$.  The result follows from writing the inequality $Q_{\zeta_1}(F_1(t,y))\geq Q_{\zeta_2}(F_2(t,y))$ as $Q_{\zeta_1}(u)\geq Q_{\zeta_2}(u)$ for all $u\in [F_{Y^{(1)}}(t,\tilde{y}_0(t),1]$, where strict inequality is required for at least one $u$.  If for all $t\in(0,\infty)$ there exists some $u_0(t)\in[F_{Y^{(1)}}(t,\tilde{y}_0(t),1]$ such that $Q_{\zeta_1}(u_0(t))< Q_{\zeta_2}(u_0(t))$ and $Q_{\zeta_1}(u(t))< Q_{\zeta_2}(u(t))$ for all $u\in[0,u_0(t))$, then $Q_{\zeta_1}(u)\geq Q_{\zeta_2}(u)$ holds for all $u\in[u_0(t),1]$. Hence, $Z_t^{(1)}\succsim_{FOSD}Z_t^{(2)}$ on $\overline{D}_Z:=[\overline{z}_0(t),\max\{\overline{z}_1,\overline{z}_2\}]$, where $\overline{z}_0(t):=Q_{\zeta_1}(u_0(t))=Q_{\zeta_2}(u_0(t))$. In the case where $F_i(t,y)=F_{Y^{(i)}}(t,y)$, $i=1,2$, it holds that
\begin{equation}
    F_{Z^{(i)}}(t,z)=\mathbb{P}\left(Z_t^{(i)}\leq z\right) = \mathbb{P}\left(Y_t^{(i)}\leq Q_{Y^{(i)}}\left(t,F_{\zeta_i}(z) \right) \right) = F_{\zeta_i}(z),
\end{equation}
i.e., the quantile process is stationary.  Thus, by the definition of FOSD, $Z_t^{(1)}\succsim_{FOSD}Z_t^{(2)}$ on $\overline{D}_Z:=[z_0,\max\{\overline{z}_1,\overline{z}_2\}]$ if and only if $F_{\zeta_1}(z)\leq F_{\zeta_2}(z)$ for all $z\in\overline{D}_Z$ with strict inequality for at least one $z$ and where $F_{\zeta_1}(z_0)=F_{\zeta_2}(z_0)$. This is equivalent to $Q_{\zeta_1}(u)\geq Q_{\zeta_2}(u)$ for all $u\in [F_{\zeta_1}(z_0),1]$ with strict inequality for at least one $u$.
\end{proof}
We now proceed to show under what condition quantile processes have the second--order stochastic dominance property. The proofs of the following proposition and corollaries are provided in Appendix \ref{sdproofappendixsection}. We us the short-hand notation $\partial_z\equiv\partial/\partial z$.
\begin{Proposition}\label{sosdprop1}
Consider $(Z^{(i)}_t)_{t\in(0,\infty)}$, $i=1,2$, in Definition \ref{fosddefn}. Assume $Y_t^{(1)}\succsim_{SOSD}Y_t^{(2)}$ on $D_Y:=[y_0(t),\max\{\overline{y}_1,\overline{y}_2\}]$ where $y_0(t):=\{y_0\in [\min\{\underline{y}_1,\underline{y}_2 \},\max\{\overline{y}_1,\overline{y}_2\}]:F_{Y^{(1)}}(t,y_0)=F_{Y^{(2)}}(t,y_0)\}$ for all $t\in(0,\infty)$.  Define $D_{\zeta}:=[z_0(t),\max\{\overline{z}_1,\overline{z}_2\}]$ where $z_0(t):=\underset{i}{\min}\,Q_{\zeta_i}(F_i(t,y_0(t))$, for $i=1,2$. It holds that $Z_t^{(1)}\succsim_{SOSD}Z_t^{(2)}$ on $D_{\zeta}$ for all $t\in(0,\infty)$, if any of the following conditions are satisfied for all $z\in D_{\zeta}$ and $t\in(0,\infty)$:
\begin{enumerate}[(i)]
    \item $\partial_z Q_2(t,F_{\zeta_2}(z))\leq 1\leq \partial_z Q_1(t,F_{\zeta_1}(z))$.
    \item $\partial_z Q_2(t,F_{\zeta_2}(z))\geq 1$, $\partial_z Q_1(t,F_{\zeta_1}(z))\geq 1$, and 
        $\dfrac{F_{Z^{(2)}}(t,z)}{F_{Z^{(1)}}(t,z)}\leq \dfrac{\partial_z Q_1(t,F_{\zeta_1}(z))-1}{\partial_z Q_2(t,F_{\zeta_2}(z))-1}$.
    \item $\partial_z Q_2(t,F_{\zeta_2}(z))\leq 1$, $\partial_z Q_1(t,F_{\zeta_1}(z))\leq 1$, and 
        $\dfrac{F_{Z^{(2)}}(t,z)}{F_{Z^{(1)}}(t,z)}\geq \dfrac{1-\partial_z Q_1(t,F_{\zeta_1}(z))}{1-\partial_z Q_2(t,F_{\zeta_2}(z))}$.
\end{enumerate}
\end{Proposition}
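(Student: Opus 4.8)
The plan is to reduce the three alternative hypotheses to a single pointwise differential inequality, recognise both sides as derivatives of explicit functions, and integrate, using the driver's SOSD to control the result. Exactly as in the proof of Proposition~\ref{fosdprop}, set $g_i(z):=Q_i(t,F_{\zeta_i}(z))$ (the composition of $F_{\zeta_i}$ with the generalised inverse of $F_i(t,\cdot)$); this map is non-decreasing, satisfies $F_{Z^{(i)}}(t,z)=F_{Y^{(i)}}(t,g_i(z))$, and has derivative $g_i'(z)=\partial_z Q_i(t,F_{\zeta_i}(z))$, which is precisely the quantity appearing in conditions (i)--(iii).

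The first step would be to show that each of (i), (ii), (iii) implies the single inequality
\begin{equation}\label{sosdsingleineq}
F_{Z^{(1)}}(t,z)\,\partial_z Q_1(t,F_{\zeta_1}(z)) - F_{Z^{(2)}}(t,z)\,\partial_z Q_2(t,F_{\zeta_2}(z)) \ \ge\ F_{Z^{(1)}}(t,z) - F_{Z^{(2)}}(t,z), \qquad z\in D_\zeta,\ t\in(0,\infty).
\end{equation}
Under (i) this is immediate since $F_{Z^{(1)}}\,\partial_z Q_1\ge F_{Z^{(1)}}$ (as $\partial_z Q_1\ge 1$) and $F_{Z^{(2)}}\,\partial_z Q_2\le F_{Z^{(2)}}$ (as $\partial_z Q_2\le 1$); under (ii), respectively (iii), where $\partial_z Q_1-1$ and $\partial_z Q_2-1$ are both non-negative, respectively non-positive, the displayed ratio bound is exactly \eqref{sosdsingleineq} after clearing the (non-negative) denominator. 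Thus only \eqref{sosdsingleineq} is used in what follows.

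Next I would introduce $\Phi_i(y):=\int_{y_0(t)}^{y}F_{Y^{(i)}}(t,w)\,\rd w$. By the fundamental theorem of calculus and the chain rule, $\tfrac{\rd}{\rd z}\Phi_i(g_i(z))=F_{Y^{(i)}}(t,g_i(z))\,g_i'(z)=F_{Z^{(i)}}(t,z)\,g_i'(z)$, while $\tfrac{\rd}{\rd z}\int_{z_0(t)}^{z}\big(F_{Z^{(2)}}(t,x)-F_{Z^{(1)}}(t,x)\big)\rd x = F_{Z^{(2)}}(t,z)-F_{Z^{(1)}}(t,z)$; and, by Definition~\ref{fosddefn}, the hypothesis $Y_t^{(1)}\succsim_{SOSD}Y_t^{(2)}$ on $D_Y$ states precisely that $\Phi_2(y)\ge\Phi_1(y)$ on $D_Y$, with strict inequality for at least one $y$. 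Hence \eqref{sosdsingleineq} reads $\tfrac{\rd}{\rd z}\big[\Phi_2(g_2(z))-\Phi_1(g_1(z))\big]\le\tfrac{\rd}{\rd z}\int_{z_0(t)}^{z}\big(F_{Z^{(2)}}-F_{Z^{(1)}}\big)\rd x$ on $D_\zeta$; integrating from $z_0(t)$ to $z$, and noting that the integral term vanishes at $z_0(t)$,
\begin{equation*}
\int_{z_0(t)}^{z}\big(F_{Z^{(2)}}(t,x)-F_{Z^{(1)}}(t,x)\big)\rd x \ \ge\ \big[\Phi_2(g_2(z))-\Phi_1(g_1(z))\big]-\big[\Phi_2(g_2(z_0(t)))-\Phi_1(g_1(z_0(t)))\big].
\end{equation*}

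The remaining, and genuinely delicate, step is to show the right-hand side is non-negative on $D_\zeta$. Here I would use the specific anchor $z_0(t)=\min_i Q_{\zeta_i}(F_i(t,y_0(t)))$, together with the consistency requirement $F_{Z^{(1)}}(t,z_0(t))=F_{Z^{(2)}}(t,z_0(t))$ built into Definition~\ref{fosddefn}: combined with $F_{Y^{(1)}}(t,y_0(t))=F_{Y^{(2)}}(t,y_0(t))$ this pins down $g_1(z_0(t))=g_2(z_0(t))=y_0(t)$, so the boundary bracket equals $\Phi_2(y_0(t))-\Phi_1(y_0(t))=0$. Then I would decompose $\Phi_2(g_2(z))-\Phi_1(g_1(z))=\big[\Phi_2(g_2(z))-\Phi_1(g_2(z))\big]+\big[\Phi_1(g_2(z))-\Phi_1(g_1(z))\big]$, the first bracket being $\ge 0$ by the SOSD hypothesis on $Y$, and handle the second by comparing $g_1(z)$ and $g_2(z)$ on $D_\zeta$ (their ordering is controlled by the position of $\partial_z Q_i$ relative to $1$ enforced in (i)--(iii) and the common anchoring at $y_0(t)$), together with monotonicity of $\Phi_1$ and the SOSD estimate; strictness of the dominance then follows by propagating the ``strict for at least one $y$'' clause. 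I expect this last step to be the main obstacle: the reduction of (i)--(iii) to \eqref{sosdsingleineq} and the integration are mechanical, whereas controlling the residual requires using the SOSD hypothesis on the driver, the conditions on $\partial_z Q_i$, and the precise definitions of $y_0(t)$, $z_0(t)$, $D_Y$ and $D_\zeta$ in concert, keeping careful track of the fact that $g_1$ and $g_2$ are distinct maps.
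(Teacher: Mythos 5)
Your first two steps are sound and in fact coincide with the paper's argument: conditions (i)--(iii) do reduce to the single pointwise inequality (your \eqref{sosdsingleineq} is the paper's $A(z)F_{Z^{(2)}}(t,z)-B(z)F_{Z^{(1)}}(t,z)\le F_{Z^{(2)}}(t,z)-F_{Z^{(1)}}(t,z)$ with $A(z)=\partial_zQ_2(t,F_{\zeta_2}(z))$, $B(z)=\partial_zQ_1(t,F_{\zeta_1}(z))$), and your $\Phi_i\circ g_i$ bookkeeping is exactly the paper's change of variables $v_i=Q_{\zeta_i}(F_i(t,x))$ run in reverse, so after integration both proofs face the same residual: showing $\bigl[\Phi_2(g_2(z))-\Phi_1(g_1(z))\bigr]-\bigl[\Phi_2(g_2(z_0(t)))-\Phi_1(g_1(z_0(t)))\bigr]\ge 0$ on $D_\zeta$. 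That residual is where the driver's SOSD has to do its work, and it is precisely the step you do not carry out: you only describe what you ``would'' do, so as it stands the proposal is incomplete.

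Worse, the sketch you give for that step does not survive scrutiny. First, the boundary bracket does not vanish in general: Proposition \ref{sosdprop1} fixes $z_0(t)=\min_i Q_{\zeta_i}(F_i(t,y_0(t)))$ and nowhere assumes $F_{Z^{(1)}}(t,z_0(t))=F_{Z^{(2)}}(t,z_0(t))$ at that point (that equality is part of the general Definition \ref{fosddefn}, not of this proposition's hypotheses). If, say, $Q_{\zeta_2}(F_2(t,y_0(t)))<Q_{\zeta_1}(F_1(t,y_0(t)))$, then $g_2(z_0(t))=y_0(t)$ but $g_1(z_0(t))<y_0(t)$, so the bracket equals $\int_{g_1(z_0(t))}^{y_0(t)}F_{Y^{(1)}}(t,w)\,\rd w>0$, a quantity you must then dominate rather than discard. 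Second, in your decomposition the term $\Phi_1(g_2(z))-\Phi_1(g_1(z))$ generically has the \emph{unfavourable} sign: under condition (i), $g_1'\ge 1\ge g_2'$ together with any common anchoring forces $g_1(z)\ge g_2(z)$, so this term is $\le 0$ and monotonicity of $\Phi_1$ cannot rescue it; moreover $g_2(z)$ can lie below $y_0(t)$ (when the minimum defining $z_0(t)$ is attained at $i=1$), where the SOSD bound $\Phi_2\ge\Phi_1$ is simply not available since it only holds on $D_Y$. The paper closes this step differently: it starts from the $Y$--SOSD integral, changes variables to $v_i=Q_{\zeta_i}(F_i(t,x))$ (which is how the weights $\partial_zQ_i(t,F_{\zeta_i}(z))$ arise), and then enlarges the integration interval to $[\min_iQ_{\zeta_i}(F_i(t,y_0(t))),\max_iQ_{\zeta_i}(F_i(t,y))]$ using non-negativity of the integrands before applying the pointwise inequality, rather than trying to identify and cancel boundary terms as you propose. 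So the reduction and the integration in your write-up are fine, but the decisive non-negativity argument is missing, and the route you indicate for it would fail.
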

In the next corollary the driving processes are distinct, and their respective true law is applied in the quantile composite map.
\begin{Corollary}\label{sosdcorr2}
Consider the case where $F_i(t,y_i)=F_{Y^{(i)}}(t,y_i)$ for $i=1,2$, and all $y_i\in [\underline{y_i},\overline{y_i}]$, $t\in(0,\infty)$. Assume $Y_t^{(1)}\succsim_{SOSD}Y_t^{(2)}$ on $D_Y$. It holds that $Z_t^{{(1)}}\succsim_{SOSD}Z_t^{(2)}$ on $D_{\zeta}$ for all $t\in(0,\infty)$, if any of the following conditions is satisfied for all $z\in D_{\zeta}$, $t\in(0,\infty)$:
\begin{enumerate}[(i)]
    \item $\partial_z Q_{Y^{(2)}}(t,F_{\zeta_2}(z))\leq 1\leq \partial_z Q_{Y^{(1)}}(t,F_{\zeta_1}(z))$.
    \item $\partial_z Q_{Y^{(2)}}(t,F_{\zeta_2}(z))\geq 1$,  $\partial_z Q_{Y^{(1)}}(t,F_{\zeta_1}(z))\geq 1$, and 
        $\dfrac{F_{\zeta_2}(z)}{F_{\zeta_1}(z)}\leq \dfrac{\partial_z Q_{Y^{(1)}}(t,F_{\zeta_1}(z))-1}{\partial_z Q_{Y^{(2)}}(t,F_{\zeta_2}(z))-1}$.
    \item $\partial_z Q_{Y^{(2)}}(t,F_{\zeta_2}(z))\leq 1$, $\partial_z Q_{Y^{{(1)}}}(t,F_{\zeta_1}(z))\leq 1$, and 
        $\dfrac{F_{\zeta_2}(z)}{F_{\zeta_1}(z)}\geq \dfrac{1-\partial_z Q_{Y^{(1)}}(t,F_{\zeta_1}(z))}{1-\partial_z Q_{Y^{(2)}}(t,F_{\zeta_2}(z))}$.
\end{enumerate}
\end{Corollary}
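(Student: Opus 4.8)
The plan is to derive this corollary directly from Proposition \ref{sosdprop1} by specialising its hypotheses to the case $F_i(t,\cdot)=F_{Y^{(i)}}(t,\cdot)$. Proposition \ref{sosdprop1} phrases its three sufficient conditions in terms of the inverse composite map $z\mapsto Q_i(t,F_{\zeta_i}(z))$ --- where $Q_i(t,\cdot)$ is the generalised inverse of $F_i(t,\cdot)$ appearing in the relation $Y_t^{(i)}=Q_i(t,F_{\zeta_i}(Z_t^{(i)}))$ --- and of the marginal law $F_{Z^{(i)}}(t,z)$ of the quantile process. Accordingly, the whole argument consists in checking that both objects collapse to the simpler forms written in conditions (i)--(iii) of the corollary, after which the conclusion $Z_t^{(1)}\succsim_{SOSD}Z_t^{(2)}$ on $D_\zeta$ is nothing but Proposition \ref{sosdprop1} applied verbatim.

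First I would record the two reductions. When $F_i(t,\cdot)=F_{Y^{(i)}}(t,\cdot)$, the process $U_t^{(i)}:=F_{Y^{(i)}}(t,Y_t^{(i)})$ has uniform marginals on $[0,1]$ (as noted right after Definition \ref{processdrivendefn}), so $Z_t^{(i)}=Q_{\zeta_i}(U_t^{(i)})$ has marginal law $F_{Z^{(i)}}(t,z)=\PR(Q_{\zeta_i}(U_t^{(i)})\le z)=F_{\zeta_i}(z)$, independent of $t$; this is the same stationarity computation already carried out inside the proof of Corollary \ref{fosdcorr1}. Moreover, by the very definition of the generalised inverse in \eqref{geninverse}, the inverse of $F_i(t,\cdot)=F_{Y^{(i)}}(t,\cdot)$ is the quantile function $Q_{Y^{(i)}}(t,\cdot)$ of $Y_t^{(i)}$, so that $Q_i(t,F_{\zeta_i}(z))=Q_{Y^{(i)}}(t,F_{\zeta_i}(z))$ and hence $\partial_z Q_i(t,F_{\zeta_i}(z))=\partial_z Q_{Y^{(i)}}(t,F_{\zeta_i}(z))$.

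Substituting these two identities into Proposition \ref{sosdprop1}, condition (i) becomes condition (i) of the corollary verbatim; in conditions (ii) and (iii) the ratio $F_{Z^{(2)}}(t,z)/F_{Z^{(1)}}(t,z)$ turns into $F_{\zeta_2}(z)/F_{\zeta_1}(z)$ while the right-hand bounds become the stated expressions in $\partial_z Q_{Y^{(i)}}(t,F_{\zeta_i}(z))$. The domain $D_\zeta=[z_0(t),\max\{\overline{z}_1,\overline{z}_2\}]$ with $z_0(t)=\min_i Q_{\zeta_i}(F_i(t,y_0(t)))=\min_i Q_{\zeta_i}(F_{Y^{(i)}}(t,y_0(t)))$, and the underlying assumption $Y_t^{(1)}\succsim_{SOSD}Y_t^{(2)}$ on $D_Y$, are carried over unchanged. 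Hence all the hypotheses of Proposition \ref{sosdprop1} hold and its conclusion yields the corollary.

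Since the corollary is essentially a restatement of the proposition, there is no genuinely hard step; the only point I would treat carefully is the reduction $F_{Z^{(i)}}(t,z)=F_{\zeta_i}(z)$ together with the relation $Y_t^{(i)}=Q_{Y^{(i)}}(t,F_{\zeta_i}(Z_t^{(i)}))$ underpinning Proposition \ref{sosdprop1}: because $F_{Y^{(i)}}(t,\cdot)$ and $Q_{\zeta_i}$ need not be strictly monotone or mutually inverse as ordinary functions, these equalities must be read through the generalised-inverse conventions of \eqref{geninverse} (equivalently, $F_{\zeta_i}\circ Q_{\zeta_i}$ acts as the identity in law on the uniform variate $U_t^{(i)}$), exactly as was done in the proof of Corollary \ref{fosdcorr1}. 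Positivity of $F_{\zeta_1}$ on the relevant part of $D_\zeta$, needed for the ratios in (ii)--(iii) to be well defined, is inherited from Proposition \ref{sosdprop1}.
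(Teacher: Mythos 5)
Your proposal is correct and follows essentially the same route as the paper: the paper's proof likewise reduces Corollary \ref{sosdcorr2} to Proposition \ref{sosdprop1} by observing that with $F_i=F_{Y^{(i)}}$ one has $F_{Z^{(i)}}(t,z)=F_{\zeta_i}(z)$ (the same stationarity computation, done there via $\mathbb{P}(Y_t^{(i)}\leq Q_{Y^{(i)}}(t,F_{\zeta_i}(z)))=F_{\zeta_i}(z)$ rather than through the uniform variate) and $Q_i(t,u)=Q_{Y^{(i)}}(t,u)$, and then substitutes these into conditions (i)--(iii).
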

Analogous to the first-order stochastic dominance analysis, we next give conditions for SOSD in the case where the two quantile processes are generated by different composite maps, with drivers that are equal in distribution.
\begin{Corollary}\label{sosdcorr1}
Assume $Y_t^{(1)}\overset{d}{=}Y_t^{(2)}$, that is $F_{Y^{(1)}}(t,y)=F_{Y^{(2)}}(t,y)$ for all $y\in [\min\{\underline{y}_1,\underline{y}_2 \},\max\{\overline{y}_1,\overline{y}_2\}]$ and $t\in(0,\infty)$.  Define $D_{\zeta}:=[z_0(t),\max\{\overline{z}_1,\overline{z}_2\}]$ where $z_0(t):=\{z_0\in[\min\{\underline{z}_1,\underline{z}_2\},\max\{\overline{z}_1,\overline{z}_2\}) : F_{Z^{(1)}}(t,z_0)=F_{Z^{(2)}}(t,z_0)\}$ for all $t\in(0,\infty)$. It holds that $Z_t^{(1)}\succsim_{SOSD}Z_t^{(2)}$ on $D_{\zeta}$ for all $t\in(0,\infty)$ if either of the conditions (i)--(iii) in Proposition \ref{sosdprop1} hold for all $z\in D_{\zeta}$ and $t\in(0,\infty)$, with strict inequality for at least one $z\in D_{\zeta}$. If, however, $F_i(t,y)=F_{Y^{(i)}}(t,y)$ for $i=1,2$, for all $y\in D_Y$ and $t\in(0,\infty)$, then $Z_t^{(1)}\succsim_{SOSD}Z_t^{(2)}$ on $D_{\zeta}$ if either of the conditions (i)--(iii) in Corollary \ref{sosdcorr2} hold for all $z\in D_{\zeta}$ with strict inequality for at least one $z$, and $t\in(0,\infty)$.
\end{Corollary}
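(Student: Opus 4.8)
The plan is to read off Corollary \ref{sosdcorr1} from Proposition \ref{sosdprop1} and Corollary \ref{sosdcorr2}, exploiting the fact that when $Y_t^{(1)}\overset{d}{=}Y_t^{(2)}$ the stochastic ordering of the drivers holds only with equality, so that the entire second-order ordering of the quantile processes is produced by the composite maps alone. The identity I would restate at the outset is the one derived in the proof of Proposition \ref{fosdprop}: writing $g_i(z):=Q_i(t,F_{\zeta_i}(z))$, where $Q_i(t,\cdot)$ is the quantile function of $Y_t^{(i)}$, one has $F_{Z^{(i)}}(t,z)=F_{Y^{(i)}}(t,g_i(z))$, so that $\partial_z Q_i(t,F_{\zeta_i}(z))=g_i'(z)$ is exactly the quantity controlled by the hypotheses (i)--(iii).

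For the general case I would argue as follows. Since $F_{Y^{(1)}}(t,\cdot)=F_{Y^{(2)}}(t,\cdot)$, the integral $\int_{y_0(t)}^{y}[F_{Y^{(2)}}(t,\tilde y)-F_{Y^{(1)}}(t,\tilde y)]\,\rd\tilde y$ vanishes identically, so $Y_t^{(1)}\succsim_{SOSD}Y_t^{(2)}$ holds in the non-strict sense and the hypothesis of Proposition \ref{sosdprop1} is met up to the strictness clause of Definition \ref{fosddefn}. Running the proof of Proposition \ref{sosdprop1} with this data, each of (i)--(iii) still delivers $\int_{z_0(t)}^{z}[F_{Z^{(2)}}(t,x)-F_{Z^{(1)}}(t,x)]\,\rd x\geq 0$ on $D_\zeta$, the contribution of the drivers now being null; and since Definition \ref{fosddefn} additionally requires strict inequality for at least one $z$, which can no longer be inherited from the drivers, it must be supplied by the conditions on the composite maps --- whence the requirement that one of (i)--(iii) hold strictly for at least one $z\in D_\zeta$, mirroring the strictness argument in the proof of Corollary \ref{fosdcorr1}. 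The point $z_0(t)$ with $F_{Z^{(1)}}(t,z_0)=F_{Z^{(2)}}(t,z_0)$ exists and serves as the lower integration limit exactly as in the FOSD analysis, since below it $Z_t^{(1)}$ dominates $Z_t^{(2)}$ in the first order, upgrading the ``local'' SOSD inequality on $[z_0(t),z]$ to the genuine one.

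For the true--law case $F_i=F_{Y^{(i)}}$ one has in addition $F_1=F_2=F_Y$ and $Q_1=Q_2=Q_Y$, and repeating the computation in the proof of Corollary \ref{fosdcorr1} gives $F_{Z^{(i)}}(t,z)=\mathbb{P}(Y_t^{(i)}\leq Q_Y(t,F_{\zeta_i}(z)))=F_{\zeta_i}(z)$, i.e.\ the quantile processes are stationary. The $Z$--level SOSD inequality then reads $\int_{z_0}^{z}[F_{\zeta_2}(x)-F_{\zeta_1}(x)]\,\rd x\geq 0$, and since $F_{\zeta_i}(z)=F_Y(t,g_i(z))$ this is precisely the conclusion of Proposition \ref{sosdprop1} specialised to $F_1=F_2=F_Y$ --- equivalently that of Corollary \ref{sosdcorr2} with $Q_{Y^{(1)}}=Q_{Y^{(2)}}=Q_Y$ --- the only change being that the ratios $F_{Z^{(2)}}/F_{Z^{(1)}}$ in (ii)--(iii) become $F_{\zeta_2}/F_{\zeta_1}$; strictness is again sourced from the conditions. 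The step I expect to be the main obstacle is the bookkeeping at the lower endpoint: reconciling the crossing point $z_0(t)$ used here (defined via $F_{Z^{(i)}}$) with the point $\min_i Q_{\zeta_i}(F_i(t,y_0(t)))$ appearing in Proposition \ref{sosdprop1}, and confirming that the sole genuinely new clause --- strict inequality for at least one $z$ --- is attributed to the conditions on the composite maps rather than to the now-vacuous driver strictness; a secondary technical point is handling flat stretches of $F_Y$, where one restricts to the effective support so that $F_Y(t,g_1(z_0))=F_Y(t,g_2(z_0))$ still forces $g_1(z_0)=g_2(z_0)$.
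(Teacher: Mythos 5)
Your proposal is correct and follows essentially the same route as the paper's own proof: the driver-level SOSD integral vanishes because $F_{Y^{(1)}}=F_{Y^{(2)}}$, the argument of Proposition \ref{sosdprop1} (change of variables and the bounds encoded in conditions (i)--(iii)) is rerun with that null contribution, the strictness required by Definition \ref{fosddefn} is attributed to strict inequality in the conditions, and in the true--law case one uses $F_{Z^{(i)}}(t,z)=F_{\zeta_i}(z)$ to substitute $F_{\zeta_i}$ and $Q_{Y^{(i)}}$ as in Corollary \ref{sosdcorr2}. The lower-endpoint bookkeeping you flag is likewise left implicit in the paper, so no substantive difference remains.
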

We conclude this section with explicit examples of canonical Tukey quantile processes in the context of stochastic ordering.

\begin{Example}\label{tukeyghSD-ex1}
Let $(W_t^{(i)})_{i=1,2}$ be two independent Brownian motions and
consider the canonical Tukey $g$--$h$ quantile processes 
$Z_t^{(i)}=Q_{T_{gh,i}}\left(F_W\left(t,W_t^{(i)}\right);g_i,h_i \right)$ for $i=1,2$, $g_i\in\mathbb{R}\setminus 0$, $h_i\in\mathbb{R}^+$ and all $t\in(0,\infty)$.  As the (distributionally indistinct) true laws of the driving processes (Brownian motions) are used in the composite maps producing such quantile processes, we consider Corollary \ref{fosdcorr1} with $F_i(t,y)=F_{Y^{(i)}}(t,y)$ for all $y\in\mathbb{R}$.  We consider the following cases:
\begin{enumerate}[(i)]
    \item $g_1>g_2$, $h_1=h_2$.  Then $Z_t^{(1)}\succsim_{FOSD} Z_t^{(2)}$ on $D_{T_{gh}}=\mathbb{R}$ as for fixed $h\in\mathbb{R}^+$, there is monotonicity with respect to the projection $g\mapsto Q_{T_{gh}}(u;g,h)$ since $\partial Q_{T_{gh}}(u;g,h)/\partial g \geq 0$ for all $g\in\mathbb{R}\setminus 0$.
    \item $g_1=g_2$, $h_1>h_2$.  Then $Z_t^{(1)}\succsim_{FOSD} Z_t^{(2)}$ on $[Q_{T_{gh,1}}(0.5),1]$ as $Q_{T_{gh,1}}(u)\geq Q_{T_{gh,2}}(u)$ for all $u\in[0.5,1]$ with equality at $u=0.5$.
    \item $g_1>g_2$, $h_1>h_2$.  Then there exists some $0\leq u^{*}<0.5$ such that $Z_t^{(1)}\succsim_{FOSD} Z_t^{(2)}$ on $[Q_{T_{gh,1}}(u^{*}),1]$.  Here, $Q_{T_{gh,1}}(u^{*})= Q_{T_{gh,2}}(u^{*})$ and $Q_{T_{gh,1}}(u)> Q_{T_{gh,2}}(u)$ for all $u\in(u^{*},1]$.  The larger $g_1-g_2$, or the smaller $h_1-h_2$, the closer $u^{*}$ is to 0.  When $u^{*}=0$, $Z_t^{(1)}\succsim_{FOSD} Z_t^{(2)}$ on $\mathbb{R}$.  The rate of change in $u^{*}$ as $g_1-g_2$ increases for fixed $h_1-h_2$ values is illustrated in Figure \ref{g1g2plot}. 
    \item $g_1>g_2$, $h_1<h_2$.  Then there exists some $0.5< u^{*}\leq 1$ such that $Z_t^{(2)}\succsim_{FOSD} Z_t^{(1)}$ on $[Q_{T_{gh,1}}(u^{*}),1]$.  Here, $Q_{T_{gh,1}}(u^{*})= Q_{T_{gh,2}}(u^{*})$ and $Q_{T_{gh,2}}(u)> Q_{T_{gh,1}}(u)$ for all $u\in(u^{*},1]$.  The larger $g_1-g_2$, or the smaller $h_2-h_1$, the closer $u^{*}$ is to 1.  When $u^{*}=1$, $Z_t^{(1)}\succsim_{FOSD} Z_t^{(2)}$ on $\mathbb{R}$. 
\end{enumerate} 
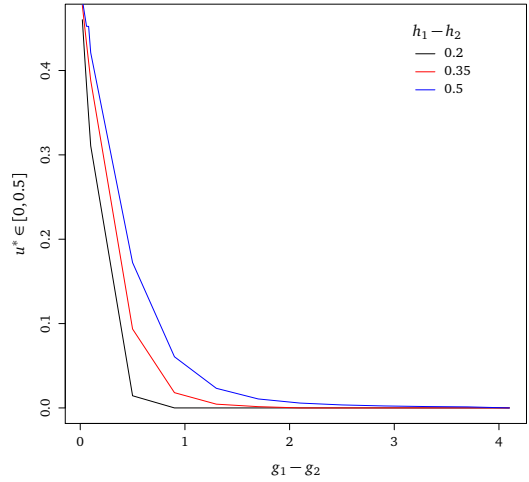
\begin{figure}[H]
\setlength\fwidth{0.8\textwidth}
\centering
\scalebox{0.4}{
\begin{tikzpicture}[x=1pt,y=1pt]
\definecolor{fillColor}{RGB}{255,255,255}
\path[use as bounding box,fill=fillColor,fill opacity=0.00] (0,0) rectangle (505.89,505.89);
\begin{scope}
\path[clip] ( 49.20, 61.20) rectangle (480.69,456.69);
\definecolor{drawColor}{RGB}{0,0,0}

\path[draw=drawColor,line width= 0.4pt,line join=round,line cap=round] ( 65.18,442.04) --
	( 67.14,410.86) --
	( 69.10,380.45) --
	( 71.06,351.10) --
	( 73.01,323.07) --
	(112.18, 87.32) --
	(151.35, 75.85) --
	(190.52, 75.85) --
	(229.69, 75.85) --
	(268.86, 75.85) --
	(308.03, 75.85) --
	(347.20, 75.85) --
	(386.37, 75.85) --
	(425.54, 75.85) --
	(464.71, 75.85);
\end{scope}
\begin{scope}
\path[clip] (  0.00,  0.00) rectangle (505.89,505.89);
\definecolor{drawColor}{RGB}{0,0,0}

\path[draw=drawColor,line width= 0.4pt,line join=round,line cap=round] ( 63.22, 61.20) -- (454.92, 61.20);

\path[draw=drawColor,line width= 0.4pt,line join=round,line cap=round] ( 63.22, 61.20) -- ( 63.22, 55.20);

\path[draw=drawColor,line width= 0.4pt,line join=round,line cap=round] (161.15, 61.20) -- (161.15, 55.20);

\path[draw=drawColor,line width= 0.4pt,line join=round,line cap=round] (259.07, 61.20) -- (259.07, 55.20);

\path[draw=drawColor,line width= 0.4pt,line join=round,line cap=round] (356.99, 61.20) -- (356.99, 55.20);

\path[draw=drawColor,line width= 0.4pt,line join=round,line cap=round] (454.92, 61.20) -- (454.92, 55.20);

\node[text=drawColor,anchor=base,inner sep=0pt, outer sep=0pt, scale=  1.2] at ( 63.22, 39.60) {0};

\node[text=drawColor,anchor=base,inner sep=0pt, outer sep=0pt, scale=  1.2] at (161.15, 39.60) {1};

\node[text=drawColor,anchor=base,inner sep=0pt, outer sep=0pt, scale=  1.2] at (259.07, 39.60) {2};

\node[text=drawColor,anchor=base,inner sep=0pt, outer sep=0pt, scale=  1.2] at (356.99, 39.60) {3};

\node[text=drawColor,anchor=base,inner sep=0pt, outer sep=0pt, scale=  1.2] at (454.92, 39.60) {4};

\path[draw=drawColor,line width= 0.4pt,line join=round,line cap=round] ( 49.20, 75.85) -- ( 49.20,394.11);

\path[draw=drawColor,line width= 0.4pt,line join=round,line cap=round] ( 49.20, 75.85) -- ( 43.20, 75.85);

\path[draw=drawColor,line width= 0.4pt,line join=round,line cap=round] ( 49.20,155.41) -- ( 43.20,155.41);

\path[draw=drawColor,line width= 0.4pt,line join=round,line cap=round] ( 49.20,234.98) -- ( 43.20,234.98);

\path[draw=drawColor,line width= 0.4pt,line join=round,line cap=round] ( 49.20,314.54) -- ( 43.20,314.54);

\path[draw=drawColor,line width= 0.4pt,line join=round,line cap=round] ( 49.20,394.11) -- ( 43.20,394.11);

\node[text=drawColor,rotate= 90.00,anchor=base,inner sep=0pt, outer sep=0pt, scale=  1.2] at ( 34.80, 75.85) {0.0};

\node[text=drawColor,rotate= 90.00,anchor=base,inner sep=0pt, outer sep=0pt, scale=  1.2] at ( 34.80,155.41) {0.1};

\node[text=drawColor,rotate= 90.00,anchor=base,inner sep=0pt, outer sep=0pt, scale=  1.2] at ( 34.80,234.98) {0.2};

\node[text=drawColor,rotate= 90.00,anchor=base,inner sep=0pt, outer sep=0pt, scale=  1.2] at ( 34.80,314.54) {0.3};

\node[text=drawColor,rotate= 90.00,anchor=base,inner sep=0pt, outer sep=0pt, scale=  1.2] at ( 34.80,394.11) {0.4};

\path[draw=drawColor,line width= 0.4pt,line join=round,line cap=round] ( 49.20, 61.20) --
	(480.69, 61.20) --
	(480.69,456.69) --
	( 49.20,456.69) --
	( 49.20, 61.20);
\end{scope}
\begin{scope}
\path[clip] (  0.00,  0.00) rectangle (505.89,505.89);
\definecolor{drawColor}{RGB}{0,0,0}

\node[text=drawColor,anchor=base,inner sep=0pt, outer sep=0pt, scale=  1.4] at (264.94, 15.60) {$g_1-g_2$};

\node[text=drawColor,rotate= 90.00,anchor=base,inner sep=0pt, outer sep=0pt, scale=  1.4] at ( 10.80,258.94) {$u^{*}\in[0,0.5]$};
\end{scope}
\begin{scope}
\path[clip] ( 49.20, 61.20) rectangle (480.69,456.69);
\definecolor{drawColor}{RGB}{255,0,0}

\path[draw=drawColor,line width= 0.4pt,line join=round,line cap=round] ( 65.18,455.58) --
	( 67.14,437.56) --
	( 69.10,419.72) --
	( 71.06,402.11) --
	( 73.01,384.82) --
	(112.18,150.29) --
	(151.35, 90.24) --
	(190.52, 79.40) --
	(229.69, 77.05) --
	(268.86, 75.85) --
	(308.03, 75.85) --
	(347.20, 75.85) --
	(386.37, 75.85) --
	(425.54, 75.85) --
	(464.71, 75.85);
\definecolor{drawColor}{RGB}{0,0,255}

\path[draw=drawColor,line width= 0.4pt,line join=round,line cap=round] ( 65.18,460.99) --
	( 67.14,448.34) --
	( 69.10,435.77) --
	( 71.06,435.77) --
	( 73.01,410.91) --
	(112.18,212.95) --
	(151.35,124.11) --
	(190.52, 94.35) --
	(229.69, 84.32) --
	(268.86, 80.43) --
	(308.03, 78.67) --
	(347.20, 77.72) --
	(386.37, 77.15) --
	(425.54, 76.83) --
	(464.71, 75.85);
\definecolor{drawColor}{RGB}{0,0,0}


\path[draw=drawColor,line width= 0.6pt,line join=round,line cap=round] (377.03,409.89) -- (395.03,409.89);
\definecolor{drawColor}{RGB}{255,0,0}

\path[draw=drawColor,line width= 0.6pt,line join=round,line cap=round] (377.03,392.89) -- (395.03,392.89);
\definecolor{drawColor}{RGB}{0,0,255}

\path[draw=drawColor,line width= 0.6pt,line join=round,line cap=round] (377.03,375.89) -- (395.03,375.89);
\definecolor{drawColor}{RGB}{0,0,0}

\node[text=drawColor,anchor=base,inner sep=0pt, outer sep=0pt, scale=  1.4] at (397.17,426.89) {$h_1-h_2$};

\node[text=drawColor,anchor=base west,inner sep=0pt, outer sep=0pt, scale=  1.2] at (404.03,406.45) {0.2};

\node[text=drawColor,anchor=base west,inner sep=0pt, outer sep=0pt, scale=  1.2] at (404.03,389.45) {0.35};

\node[text=drawColor,anchor=base west,inner sep=0pt, outer sep=0pt, scale=  1.2] at (404.03,372.45) {0.5};
\end{scope}
\end{tikzpicture}}
\caption{Values of $u^{*}$ as $g_1-g_2$ is increased for $g_2=0.1$ fixed, and $h_1-h_2\in\{0.2,0.35,0.5\}$ with $h_2=0.05$ fixed.}
\label{g1g2plot} 
\end{figure}
\end{Example}
Cases (i)---(iv) are illustrated for values of the $g$ and $h$ parameters in Table \ref{SDtable}.
\begin{table}[H]
  \centering
 \begin{tabular}{|l|l|l|l|l|l|}
 \hline
  $g_1$ & $g_2$ & $h_1$ & $h_2$ & $u^{*}$ & Outcome\\ \hline  
  2 & 0.8 & 0.4 & 0.05 & 0.0218 & $Z_t^{(1)}\succsim_{FOSD}Z_t^{(2)}$ on $[-1.109,\infty)$\\
\hline
3 & 0.5 & 0.2 & 0.05 & 0 & $Z_t^{(1)}\succsim_{FOSD}Z_t^{(2)}$ on $\mathbb{R}$\\
\hline
2 & 0.8 & 0.05 & 0.4 & 1 & $Z_t^{(1)}\succsim_{FOSD}Z_t^{(2)}$ on $\mathbb{R}$\\
\hline
2 & 1.5 & 0.05 & 0.2 & 0.985 & $Z_t^{(2)}\succsim_{FOSD}Z_t^{(1)}$ on $[42.36,\infty)$\\
\hline
        \end{tabular}
 \caption{FOSD results for Tukey--$gh$ quantile processes for different values of the skewness and kurtosis parameters.}
  \label{SDtable}
\end{table}

\begin{Example}\label{tukeygSD-ex1}
Consider two canonical Tukey--$g$ quantile processes, %
$Z_t^{(i)}=Q_{T_{g,i}}(F_W(t,W_t^{(i)});g_i)$ for $i=1,2$, %
$g_i\in\mathbb{R}^+\setminus0$ and all $t\in(0,\infty)$. By %
the same argument given in Example \ref{tukeyghSD-ex1}, it %
holds that $Z_t^{(1)}\succsim_{FOSD} Z_t^{(2)}$ if and only %
if $g_1>g_2$.  As such, $Z_t^{(1)}\succsim_{SOSD} Z_t^{(2)}$ 
if and only if $g_1>g_2$. If $g_1< g_2$, then by the same %
argument, $Z_t^{(2)}\succsim_{FOSD}Z_t^{(1)}\implies %
Z_t^{(2)}\succsim_{SOSD}Z_t^{(1)}$, and so 
$Z_t^{(1)}\not\succsim_{SOSD} Z_t^{(2)}$.  As such, when considering the induced measure of a canonical Tukey--$g$ quantile process, we have a risk ordering in the skewness parameter.
\end{Example}
In Appendix \ref{appendixpivotexamples}, we also treat the canonical Tukey--$g$ case where FOSD does not hold, but SOSD is satisfied.

\section{Stochastic valuation principle for financial and insurance risks}\label{valuationprinciplesection}
We recall the filtered probability space $(\Omega, \F,(\F_t)_{t\in[0,\infty)}, \PR)$ whereby $(Y_t)_{t\in[0,T]}$, for $0<T<\infty$, is an $(\mathscr{F}_t)$--adapted c\`adl\`ag process with law $F_Y$. This process may represent a risk, possibly including non--tradable ones, e.g., a climate--related risk. Let us consider $D_Y\subseteq\mathbb{R}$, where $(Y_t)$ is a $D_Y$--valued process.  We assume the filtration contains all information about the development of the risk process and the market.  The goal is to value instruments contingent on $(Y_t)$ in a dynamic, time--consistent manner that accounts for financial market risk (e.g., interest rate risk), investor risk preferences, and characteristics (e.g., skewness, heavy tails) directly associated with the risk process $(Y_t)$.  The choice of the stochastic mappings $\Pi_t(\cdot)$ that characterises the valuation principle will determine how risk factors are reflected in the resultant valuation process and, in an insurance context, the axioms satisfied by the pricing principle.  We define a general dynamic valuation principle as follows.
 \begin{Definition}\label{stochppdefn}
Let $\Pi_t:\mathbb{R}\rightarrow\mathbb{R}$ denote a collection of $(\mathscr{F}_t)$--measurable, continuous and monotonically increasing mappings, or operators, satisfying the properties of positive homogeneity, translational invariance, where $\Pi_t(0)=0$, for all $t\in[0,T]$.  The stochastic valuation principle for the risk $(Y_t)$ is defined as the $(\mathscr{F}_t)$--adapted process $(\Pi_{t,u})_{0\leq t\leq u\leq T}$ where $\Pi_{t,u}:=\Pi_t\left(Y_u \right)$ for all $0\leq t\leq u\leq T$ and such that $\Pi_{s,u}=\Pi_s(\Pi_{t,u})$ for all $0\leq s\leq t\leq u\leq T$.  In the context of a financial or insurance contract, we refer to $(\Pi_{t,u})$ as the stochastic pricing principle for the contract written on the risk $(Y_t)$.  
\end{Definition}
We present some useful definitions that clarify the theory underpinning the following proposition.  First, we give the definition of a convex, conditional risk measure---see e.g., Definition 1 in \cite{acciaio2011dynamic} or \cite{bion2008dynamic}, and Definition 2.3 in \cite{detlefsen2005conditional}.
\begin{Definition}\label{convexconditionalrm}
Consider the filtered probability space and define $L_t^\infty:=L_t^\infty(\Omega,\mathscr{F}_t,\mathbb{P})$ the space of all essentially bounded, $(\mathscr{F}_t)$--measurable random variables, for $t\in[0,\infty)$, and $L_u^\infty:=L_u^\infty(\Omega,\mathscr{F}_u,\mathbb{P})$ for all $0\leq t<u<\infty$.  A map $\varrho_{t,u}(X):=L_u^\infty\rightarrow L_t^\infty$ is called a conditional risk measure if it satisfies the following properties for all $X_u,Y_u\in L_u^\infty$:
\begin{enumerate}[(i)]
    \item {\it Conditional translation invariance:} For all $Y_t\in L_t^\infty$, $\varrho_{ t,u}(Y_u+Y_t)=\varrho_{t,u}(Y_u)-Y_t$;
    \item {\it Monotonicity:} $X_u\leq Y_u\,\Rightarrow\, \varrho_{t,u}(X_u)\geq \varrho_{t,u}(Y_u)$;
    \item {\it Conditional convexity:} For all $\lambda\in L_t^\infty$ where $0\leq\lambda\leq 1$, $\varrho_{t,u}(\lambda Y_u+(1-\lambda)X_u)\leq \lambda\varrho_{t,u}(Y_u)+(1-\lambda)\varrho_{t,u}(X_u)$; 
    \item {\it Normalisation:} $\varrho_{t,u}(0)=0$.
\end{enumerate}
\end{Definition}

One may now define a time--consistent dynamic (in continuous--time), convex risk measure as follows---see e.g., Definitions 5 and 6 in \cite{bion2006dynamic}, Definitions 1 and 2 in \cite{bion2009time}, and Definition 5.1, as well as Remark 5.2, in \cite{detlefsen2005conditional}.
\begin{Definition}\label{dynamicriskmeasuredefn}
A dynamic, convex risk measure on the filtered probability space is a family $(\varrho_{t,u})_{0< t<\infty}$ of conditional convex risk measures for all $0\leq t\leq u<\infty$.  A dynamic risk measure is time--consistent if $\varrho_{s,u}=\varrho_{s,t}(-\varrho_{t,u})$, for all $0\leq s<t<u<\infty$.
\end{Definition}

We may now present the following proposition, that characterises the setting in which the stochastic valuation principle in Definition \ref{stochppdefn} relates to a time--consistent dynamic risk measure. 

\begin{Proposition}
Consider Definition \ref{stochppdefn} and an $(\mathscr{F}_t)$--adapted risk process $(Y_t)$.  If, and only if, $\Pi_t(\cdot)$ is a concave mapping for all $t\in[0,T]$, then $-\Pi_t(Y_u):\mathscr{F}_u\rightarrow\mathscr{F}_t$ is a convex, conditional risk measure, and the family of mappings $(-\Pi_{t,u})_{0\leq t\leq u\leq T}$ is a time--consistent dynamic risk measure.
\end{Proposition}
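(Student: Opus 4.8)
The plan is to verify the four defining properties of a convex conditional risk measure (Definition~\ref{convexconditionalrm}) for the map $\varrho_{t,u}(Y_u):=-\Pi_t(Y_u)$, and then the time-consistency identity of Definition~\ref{dynamicriskmeasuredefn}, in each case showing that the required property is \emph{equivalent} to the stated hypothesis on $\Pi_t$. Throughout, I would use the four properties that $\Pi_t$ is assumed to possess in Definition~\ref{stochppdefn}: $(\mathscr{F}_t)$-measurability, monotonic increase, positive homogeneity, translation invariance with $\Pi_t(0)=0$, together with the composition (tower) property $\Pi_{s,u}=\Pi_s(\Pi_{t,u})$. I would also need the standing observation that translation invariance for a positively homogeneous map means $\Pi_t(Y_u+Y_t)=\Pi_t(Y_u)+Y_t$ for $Y_t\in L^\infty_t$, since $Y_t$ is $\mathscr{F}_t$-measurable and hence behaves as a constant relative to $\Pi_t$.

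First I would handle the three easy axioms. Conditional translation invariance for $\varrho_{t,u}$: $\varrho_{t,u}(Y_u+Y_t)=-\Pi_t(Y_u+Y_t)=-\Pi_t(Y_u)-Y_t=\varrho_{t,u}(Y_u)-Y_t$, using translation invariance of $\Pi_t$. Monotonicity: if $X_u\le Y_u$ then $\Pi_t(X_u)\le\Pi_t(Y_u)$ since $\Pi_t$ is increasing, so $\varrho_{t,u}(X_u)\ge\varrho_{t,u}(Y_u)$. Normalisation: $\varrho_{t,u}(0)=-\Pi_t(0)=0$. None of these requires concavity, so they hold unconditionally. The content of the ``if and only if'' is therefore carried entirely by conditional convexity.

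Next, for conditional convexity I would argue the equivalence directly. If $\Pi_t$ is concave, then for $\lambda\in L^\infty_t$ with $0\le\lambda\le1$ we get $\Pi_t(\lambda Y_u+(1-\lambda)X_u)\ge\lambda\Pi_t(Y_u)+(1-\lambda)\Pi_t(X_u)$; negating gives $\varrho_{t,u}(\lambda Y_u+(1-\lambda)X_u)\le\lambda\varrho_{t,u}(Y_u)+(1-\lambda)\varrho_{t,u}(X_u)$, which is exactly conditional convexity of $\varrho_{t,u}$. Conversely, conditional convexity of $\varrho_{t,u}$ for all admissible $\lambda$ (including deterministic constants in $[0,1]$) immediately forces $\Pi_t$ to be midpoint concave, and since $\Pi_t$ is assumed continuous this upgrades to full concavity; hence the equivalence. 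I would remark that the $\mathscr{F}_t$-measurable (rather than merely constant) weights $\lambda$ are accommodated because $\Pi_t$ is $\mathscr{F}_t$-measurable and positively homogeneous, so scaling by an $\mathscr{F}_t$-random variable commutes through it and the pointwise (state-by-state) concavity inequality can be assembled measurably.

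Finally, for the dynamic risk measure I would note that $(-\Pi_{t,u})$ is a family of conditional convex risk measures by the above, so it only remains to check time-consistency, $\varrho_{s,u}=\varrho_{s,t}(-\varrho_{t,u})$. Since $\varrho_{t,u}=-\Pi_{t,u}$ we have $-\varrho_{t,u}=\Pi_{t,u}$, so $\varrho_{s,t}(-\varrho_{t,u})=-\Pi_s(\Pi_{t,u})=-\Pi_{s,u}=\varrho_{s,u}$, where the middle equality is precisely the composition property $\Pi_{s,u}=\Pi_s(\Pi_{t,u})$ built into Definition~\ref{stochppdefn}; this step needs no hypothesis beyond what is already assumed. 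I expect the only genuinely delicate point to be the converse direction of the convexity equivalence --- in particular, being careful that ``conditional convexity of $\varrho_{t,u}$ $\Rightarrow$ concavity of $\Pi_t$'' really does go through with random $\mathscr{F}_t$-weights and uses continuity to pass from midpoint to full concavity; everything else is a direct substitution.
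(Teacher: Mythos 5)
Your proposal is correct and follows essentially the same route as the paper: check the axioms of Definition~\ref{convexconditionalrm} directly from the properties postulated in Definition~\ref{stochppdefn}, and obtain time--consistency from the composition property $\Pi_{s,u}=\Pi_s(\Pi_{t,u})$. Where you differ is in completeness: the paper's proof only establishes the ``if'' direction (concavity $\Rightarrow$ convex conditional risk measure) and the time--consistency identity, leaving the ``only if'' unargued, whereas you correctly isolate conditional convexity as the sole property equivalent to concavity (the other three axioms holding unconditionally) and supply the converse by specialising to constant weights $\lambda$. Two small remarks: once $\lambda$ ranges over all constants in $[0,1]$, concavity of $\Pi_t$ follows directly, so your detour through midpoint concavity plus continuity is unnecessary (though harmless); and your appeal to positive homogeneity and $\mathscr{F}_t$--measurability to handle genuinely random $\mathscr{F}_t$--measurable weights $\lambda$ is more of a gesture than an argument (one would need a locality/regularity property of $\Pi_t$ to assemble the state--by--state inequality), but the paper's own proof passes over this point silently, so you are, if anything, more careful than the source.
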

\begin{proof}
If $\Pi_t(\cdot)$ is a concave mapping for all $t\in[0,T]$, then $-\Pi_t(\cdot)$ will be a convex mapping.  By Definition \ref{stochppdefn}, $-\Pi_t(\cdot)$ is translation invariant and for any real--valued $x\leq y$, $-\Pi_t(y)\leq -\Pi_t(x)$ for all $t\in[0,T]$ as $\Pi_t(y)$ is monotonically increasing.  Thus, under the assumption that $\Pi_t(\cdot)$ is concave, for some $(\mathscr{F}_t)$--adapted risk $(Y_t)$, $-\Pi_t(Y_u)$ adheres to Definition \ref{convexconditionalrm} of a convex, conditional risk measure. It follows, by Definition \ref{dynamicriskmeasuredefn}, that the family of maps $(-\Pi_{t,u})$ is a dynamic risk measure.  The dynamic risk measure $(-\Pi_{t,u})$ is time--consistent, by Definition \ref{dynamicriskmeasuredefn}, if for all $0\leq s\leq t\leq u\leq T$, $-\Pi_{s,u}=-\Pi_{s,t}(\Pi_{t,u})$.  For all $0\leq s\leq t\leq u\leq T$ it holds that $\Pi_{s,t}=\Pi_{s,t}(Y_t)=\Pi_s(Y_t)$
and so, since $\Pi_{s,u}=\Pi_s(\Pi_{t,u})$, it follows that $\Pi_{s,u}=\Pi_{s,t}(\Pi_{t,u})$ and $-\Pi_{s,u}=-\Pi_{s,t}(\Pi_{t,u})$, as required.
\end{proof}

For all $0\leq t\leq u\leq T$, the map $-\Pi_t(\cdot)$ assigns to each random variable $Y_u$ an $\mathscr{F}_t$--measurable random variable $-\Pi_t(Y_u)$ that quantifies the risk given the information at time $t$.  For each $t\in[0,T]$, if the map $\Pi_t(\cdot)$ is linear, it is consistent with the definition of a valuation functional given in \cite{buhlmann1980economic,wuthrich2013financial,wuthrich2010market}.  If $-\Pi_0(\cdot)$ is a sub--additive map, then for each $t\in(0,T]$, $-\Pi_{0,t}$ is a coherent risk measure in the sense of \cite{artzner1999coherent} and thus incorporates static pricing frameworks built by concave distortion operators---see, for example, \cite{godin2012,godin2019,kijima,kijimamuromachi,wang1996,wang,wang2,wang1997axiomatic}.  For $t=0, u=T$, Definition \ref{stochppdefn} covers a large number of premium calculation principles (PCPs), see \cite{laeven2008premium} and the references therein.

\begin{Definition}\label{riskloadingdefn}
Consider Definition \ref{stochppdefn}.  A stochastic pricing principle $(\Pi_{t,u})$ for a risk $(Y_t)$ produces a dynamically consistent risk--loading if $\Pi_{t,u}\geq\mathbb{E}[Y_u\vert\mathscr{F}_t]$ for all $0\leq t\leq u<T$.  If the pricing principle accounts for discounting using the money--market process $(B_t)$, then we require $\Pi_{t,u}\geq B_t\mathbb{E}[Y_u/B_u\vert\mathscr{F}_t]$ for all $0\leq t\leq u<T$.
\end{Definition}
The choice of mappings $\Pi_t(\cdot)$ should ensure that time--consistent prices, or premiums, in line with the treatment of different risks in the literature and the observed behaviour in financial or insurance markets, are produced.  For example, if the framework is applied to the valuation of insurance products, prices ought to be at least higher than the expected value of the loss process to ensure a risk--loading. In the context of a traded financial asset, the valuation framework must be consistent with the principle of no--arbitrage.  In what follows we consider an insurance context whereby markets are not assumed to be complete.  When the underlying risk is not traded, arbitrage--free pricing theory cannot be ensured.

In what follows, we present a special case of the stochastic valuation principle thus introduced, that is characterised by a probability measure. The superscript of the valuation process denotes the measure that it is characterised by.
\begin{Proposition}\label{stochppconditionalex}
Let $V(y):\mathbb{R}\rightarrow\mathbb{R}$ be a Borel--measurable, monotonically increasing function satisfying $V(0)=0$ and $V(y)<\infty$ for all $\vert y\vert<\infty$.  Consider a $\sigma$--finite probability measure on the measurable space $(\Omega,\mathscr{F})$, denoted $\widetilde{\mathbb{P}}$, and define the stochastic operator $\Pi^{\widetilde{\mathbb{P}}}_t(\cdot):=\mathbb{E}^{\widetilde{\mathbb{P}}}[V(\cdot)\vert\mathscr{F}_t]$ for all $t\in[0,\infty)$.
Then for a risk process $(Y_t)$, and by Definition \ref{stochppdefn}, $\Pi_{t,u}^{\widetilde{\mathbb{P}}}:=\Pi^{\widetilde{\mathbb{P}}}_t(Y_u)=\mathbb{E}^{\widetilde{\mathbb{P}}}[V(Y_u)\vert\mathscr{F}_t]$ is a time--consistent stochastic valuation principle for all $0\leq t\leq u\leq T$.
\end{Proposition}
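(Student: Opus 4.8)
The plan is to verify that the specific operator $\Pi^{\widetilde{\mathbb{P}}}_t(\cdot) = \mathbb{E}^{\widetilde{\mathbb{P}}}[V(\cdot)\vert\mathscr{F}_t]$ satisfies the hypotheses imposed in Definition \ref{stochppdefn}, and then to verify the tower-type composition identity $\Pi_{s,u} = \Pi_s(\Pi_{t,u})$ that makes the associated family a time-consistent stochastic valuation principle. So the proof splits naturally into two parts: checking that $\Pi^{\widetilde{\mathbb{P}}}_t$ is an admissible stochastic operator, and checking time consistency.

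For the first part, I would go through the listed properties one by one. Measurability and adaptedness of $\Pi^{\widetilde{\mathbb{P}}}_t(Y_u)$ in $\mathscr{F}_t$ is immediate from the definition of conditional expectation, and the integrability needed to make $\mathbb{E}^{\widetilde{\mathbb{P}}}[V(Y_u)\vert\mathscr{F}_t]$ well-defined follows from the growth assumption $V(y)<\infty$ for all finite $y$ together with the fact that $(Y_t)$ takes values in $D_Y\subseteq\mathbb{R}$ (here one may need the mild additional hypothesis, implicit in the setup, that $V(Y_u)$ is $\widetilde{\mathbb{P}}$-integrable; I would state this). Monotonicity of $\Pi^{\widetilde{\mathbb{P}}}_t$ follows from monotonicity of $V$ combined with monotonicity of conditional expectation: $y_1\le y_2 \Rightarrow V(y_1)\le V(y_2)\Rightarrow \mathbb{E}^{\widetilde{\mathbb{P}}}[V(y_1)\vert\mathscr{F}_t]\le\mathbb{E}^{\widetilde{\mathbb{P}}}[V(y_2)\vert\mathscr{F}_t]$. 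Continuity follows from continuity of $V$ together with dominated convergence under $\widetilde{\mathbb{P}}$. The normalisation $\Pi^{\widetilde{\mathbb{P}}}_t(0) = \mathbb{E}^{\widetilde{\mathbb{P}}}[V(0)\vert\mathscr{F}_t] = \mathbb{E}^{\widetilde{\mathbb{P}}}[0\vert\mathscr{F}_t] = 0$ is immediate from $V(0)=0$. Positive homogeneity and translational invariance are the two properties that require attention, since a general monotone $V$ need not be linear; I would note that in this special case these properties hold precisely when $V$ is taken to be (positively) linear, or else that Definition \ref{stochppdefn}'s requirements should be read as applying modulo the chosen $V$, and in the displayed pricing applications one takes $V=\mathrm{id}$, for which $\Pi^{\widetilde{\mathbb{P}}}_t(\cdot)=\mathbb{E}^{\widetilde{\mathbb{P}}}[\cdot\vert\mathscr{F}_t]$ trivially satisfies both. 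This is the point I expect to be the main obstacle, or at least the main subtlety: reconciling the generality of $V$ with the structural axioms, and I would handle it by specialising or by a brief remark rather than by a forced computation.

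For the second part, time consistency, the key is the tower property of conditional expectation under the single fixed measure $\widetilde{\mathbb{P}}$. For $0\le s\le t\le u\le T$ I would compute
\begin{equation*}
\Pi^{\widetilde{\mathbb{P}}}_s\bigl(\Pi^{\widetilde{\mathbb{P}}}_{t,u}\bigr) = \mathbb{E}^{\widetilde{\mathbb{P}}}\!\left[V\!\left(\mathbb{E}^{\widetilde{\mathbb{P}}}[V(Y_u)\vert\mathscr{F}_t]\right)\Big\vert\,\mathscr{F}_s\right],
\end{equation*}
and observe that when $V=\mathrm{id}$ (or more generally when the relevant compatibility holds) this collapses via $\mathscr{F}_s\subseteq\mathscr{F}_t$ and the tower property to $\mathbb{E}^{\widetilde{\mathbb{P}}}[V(Y_u)\vert\mathscr{F}_s] = \Pi^{\widetilde{\mathbb{P}}}_{s,u}$, which is exactly the identity $\Pi_{s,u}=\Pi_s(\Pi_{t,u})$ required in Definition \ref{stochppdefn}. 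Hence $(\Pi^{\widetilde{\mathbb{P}}}_{t,u})_{0\le t\le u\le T}$ is a stochastic valuation principle, and the time-consistency is automatic precisely because a single probability measure is used throughout, so there is no conflict between the conditioning operations at different horizons.

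Finally, I would close by remarking that $\Pi^{\widetilde{\mathbb{P}}}_0(\cdot)=\mathbb{E}^{\widetilde{\mathbb{P}}}[V(\cdot)]$ recovers, for suitable choices of $V$ and $\widetilde{\mathbb{P}}$, familiar premium calculation principles (the expected-value principle, and distortion-type principles when $\widetilde{\mathbb{P}}$ is the distorted measure induced by a quantile process), tying the statement back to the earlier discussion; this is not needed for the proof but situates it. The whole argument is short because it is essentially ``the conditional expectation operator under a fixed measure is monotone, continuous, normalised, and satisfies the tower property,'' with the only genuine care needed around integrability and around the homogeneity/translation-invariance axioms when $V\ne\mathrm{id}$.
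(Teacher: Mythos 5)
Your proposal is correct and follows essentially the same route as the paper's proof: verify the operator properties from the monotonicity of $V$ and the standard properties of conditional expectation, then obtain time--consistency from the tower property under the single fixed measure $\widetilde{\mathbb{P}}$. The one place you diverge is instructive and worth recording. You apply the definition of $\Pi^{\widetilde{\mathbb{P}}}_s$ literally to $\Pi^{\widetilde{\mathbb{P}}}_{t,u}$, i.e.\ you reapply $V$ inside the outer conditional expectation, and you correctly observe that the identity $\Pi_{s,u}=\Pi_s(\Pi_{t,u})$ then collapses via the tower property only when $V=\mathrm{id}$ or under a compatibility convention. The paper's proof instead writes $\Pi_s^{\widetilde{\mathbb{P}}}(\Pi_{t,u}^{\widetilde{\mathbb{P}}})=\mathbb{E}^{\widetilde{\mathbb{P}}}[\mathbb{E}^{\widetilde{\mathbb{P}}}[V(Y_u)\vert\mathscr{F}_t]\vert\mathscr{F}_s]$, i.e.\ it silently adopts exactly the convention you identify: $V$ acts only on the terminal payoff, and intermediate--time valuation of an already--valued position is plain conditional expectation, so the tower property closes the argument for any admissible $V$. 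Your hedge is therefore not a defect of your argument but a genuine looseness in the statement which the paper resolves implicitly rather than explicitly. Likewise, the points you flag --- positive homogeneity and translational invariance of Definition \ref{stochppdefn} failing for nonlinear $V$, continuity of $\Pi_t$ requiring continuity of $V$ (which is only assumed Borel and increasing), and integrability of $V(Y_u)$ under $\widetilde{\mathbb{P}}$ --- are all subsumed in the paper under the phrase ``by the assumptions made about $V(y)$'' without verification; making them explicit, as you do, is the more defensible reading, and specialising to $V=\mathrm{id}$ or stating the terminal--payoff convention is the cleanest way to make the proposition airtight.
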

\begin{proof}
By the assumptions made about $V(y)$, and by the properties of the conditional expectation, $\Pi^{\widetilde{\mathbb{P}}}_t(\cdot):=\mathbb{E}^{\widetilde{\mathbb{P}}}[V(\cdot)\vert\mathscr{F}_t]$ is a collection of $(\mathscr{F}_t)$--measurable, continuous and monotonically increasing maps, satisfying the properties in Definition \ref{stochppdefn} for all $t\in[0,T]$. For all $0\leq s\leq t\leq u\leq T$, we have $\Pi_{s,u}^{\widetilde{\mathbb{P}}}=\mathbb{E}^{\widetilde{\mathbb{P}}}[V(Y_u)\vert\mathscr{F}_s]$ and, by the Tower Property of conditional expectation, $\Pi_s^{\widetilde{\mathbb{P}}}(\Pi_{t,u}^{\widetilde{\mathbb{P}}})=\mathbb{E}^{\widetilde{\mathbb{P}}}[\mathbb{E}^{\widetilde{\mathbb{P}}}[V(Y_u)\vert\mathscr{F}_t] \vert\mathscr{F}_s] = \mathbb{E}^{\widetilde{\mathbb{P}}}[V(Y_u)\vert\mathscr{F}_s]=\Pi_{s,u}^{\widetilde{\mathbb{P}}}$.  Therefore $\Pi_{s,u}^{\widetilde{\mathbb{P}}}=\Pi_s(\Pi_{t,u}^{\widetilde{\mathbb{P}}})$ and so the stochastic valuation principle in Proposition \ref{stochppconditionalex} is time--consistent in the sense of Definition \ref{stochppdefn}. 
\end{proof}
\begin{Corollary}
\label{prop1}
Consider an $(\mathscr{F}_t)$--adapted risk $(Y_t)$ and the pricing principle given in Proposition \ref{stochppconditionalex}, that is, $\Pi_{t,u}^{\widetilde{\mathbb{P}}}:=\mathbb{E}^{\widetilde{\mathbb{P}}}[V(Y_u)\vert\mathscr{F}_t]$.  If $\widetilde{\mathbb{P}}$ is such that $\widetilde{\mathbb{P}}(Y_u>y\vert\mathscr{F}_t)\geq\mathbb{P}(Y_u>y\vert\mathscr{F}_t)$ for $y\in D_Y$ and all $0\leq t\leq u\leq T$, then the pricing principle produces a dynamically consistent risk--loading, as per Definition \ref{riskloadingdefn}.  
\end{Corollary}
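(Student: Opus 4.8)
The plan is to obtain the risk--loading inequality from the conditional version of the classical equivalence between first--order stochastic dominance and the ordering of expectations of all nondecreasing functions, applied to the map $V$. Time--consistency of $(\Pi_{t,u}^{\widetilde{\PR}})$ has already been established in Proposition~\ref{stochppconditionalex}, so the only work left is the comparison with the actuarial benchmark. Thus I would first isolate the following auxiliary statement and prove it, then specialise it to $\phi = V$.

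\textbf{Auxiliary lemma.} If $\widetilde{\PR}(Y_u>y\,\vert\,\F_t)\geq \PR(Y_u>y\,\vert\,\F_t)$ holds $\PR$--a.s.\ for every $y\in D_Y$ and $0\leq t\leq u\leq T$, then $\E^{\widetilde{\PR}}[\phi(Y_u)\,\vert\,\F_t]\geq \E^{\PR}[\phi(Y_u)\,\vert\,\F_t]$ $\PR$--a.s.\ for every nondecreasing Borel $\phi$ for which both sides are well defined. I would prove this by a conditional layer--cake (Tonelli) argument: split $\phi(Y_u)=\phi(Y_u)^{+}-\phi(Y_u)^{-}$, write $\phi(Y_u)^{+}=\int_{0}^{\infty}\mathbbm{1}\{Y_u>\phi^{-}(v)\}\,\rd v$ with $\phi^{-}$ the generalised inverse from \eqref{geninverse}, and apply the conditional Tonelli theorem to get $\E^{\widetilde{\PR}}[\phi(Y_u)^{+}\,\vert\,\F_t]=\int_{0}^{\infty}\widetilde{\PR}(Y_u>\phi^{-}(v)\,\vert\,\F_t)\,\rd v$, which dominates the corresponding $\PR$--integral termwise by hypothesis. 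The negative part is treated symmetrically, using that $\widetilde{\PR}(Y_u\le z\,\vert\,\F_t)=1-\widetilde{\PR}(Y_u>z\,\vert\,\F_t)\le \PR(Y_u\le z\,\vert\,\F_t)$ (and, at atoms, the analogous bound for $\widetilde{\PR}(Y_u<z\,\vert\,\F_t)$ obtained by letting $y\uparrow z$ in the hypothesis); subtracting the two estimates yields the lemma. The integrability needed for the positive/negative split and for the interchange with the conditional expectation is supplied by the standing hypotheses on $V$ (finiteness on bounded sets, together with the $\widetilde{\PR}$--integrability of $V(Y_u)$ that is implicit in $\Pi_{t,u}^{\widetilde{\PR}}$ being defined).

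Applying the lemma with $\phi=V$, nondecreasing by the hypothesis of Proposition~\ref{stochppconditionalex}, gives $\Pi_{t,u}^{\widetilde{\PR}}=\E^{\widetilde{\PR}}[V(Y_u)\,\vert\,\F_t]\geq \E^{\PR}[V(Y_u)\,\vert\,\F_t]$ for all $0\leq t\leq u\leq T$; that is, the distortion never lowers the value below the actuarial (undistorted) value of the contract payoff $V(Y_u)$, which is the dynamically consistent risk--loading of Definition~\ref{riskloadingdefn} read for the risk carried by the contract. If one wants the inequality literally against $\E[Y_u\,\vert\,\F_t]$ as written, it suffices to add the normalisation $V(y)\geq y$ on $D_Y$ -- the natural convention for a loading functional -- whereupon monotonicity of $\E^{\PR}[\,\cdot\,\vert\,\F_t]$ gives $\E^{\PR}[V(Y_u)\,\vert\,\F_t]\geq \E^{\PR}[Y_u\,\vert\,\F_t]$ and hence $\Pi_{t,u}^{\widetilde{\PR}}\geq \E[Y_u\,\vert\,\F_t]$. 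The discounted version follows by the identical argument once the discount factor is absorbed into the payoff/measure, since $B_t$ is $\F_t$--measurable and factors out of the conditional expectation.

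The only genuinely delicate point is the conditional layer--cake step: justifying the conditional Tonelli interchange and the subtraction of the (possibly unbounded) negative part, which is precisely where the integrability assumptions on $V(Y_u)$ are consumed. The ``stochastic dominance $\Rightarrow$ ordered expectations'' principle itself is standard, and once the auxiliary lemma is in place the corollary is immediate.
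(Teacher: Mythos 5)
Your proof is correct, but it chains the two inequalities in the opposite order from the paper and rests on a more general key lemma. The paper's proof first bounds $\Pi_{t,u}^{\widetilde{\PR}}=\E^{\widetilde{\PR}}[V(Y_u)\vert\F_t]\geq\E^{\widetilde{\PR}}[Y_u\vert\F_t]$ (asserting this ``since $V$ is increasing''), and only then invokes the survival--function hypothesis to pass from $\widetilde{\PR}$ to $\PR$ for the identity function alone; you instead prove a conditional FOSD--implies--ordered--expectations lemma for \emph{all} nondecreasing $\phi$ via a conditional layer--cake/Tonelli argument, apply it to $\phi=V$ to get $\E^{\widetilde{\PR}}[V(Y_u)\vert\F_t]\geq\E^{\PR}[V(Y_u)\vert\F_t]$, and then compare $V(Y_u)$ with $Y_u$ under $\PR$. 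Your route buys two things: a genuine justification of the dominance--to--expectations step (which the paper simply asserts, and which your Tonelli argument with the left--limit treatment of atoms supplies), and an explicit identification of the hidden hypothesis --- the comparison of $V(Y_u)$ with $Y_u$, whichever measure it is performed under, needs $V(y)\geq y$ on $D_Y$, which does not follow from $V$ increasing with $V(0)=0$ (take $V(y)=y/2$). The paper's proof consumes exactly the same unstated condition at its first inequality, so your flagging it as an added normalisation is a fair repair rather than a deviation; the paper needs the lemma only for $\phi(y)=y$, where the layer--cake argument is of course simpler, which is the one economy its ordering of steps achieves. Your remarks on the discounted version ($B_t$ being $\F_t$--measurable and factoring out) are consistent with how the paper treats discounting elsewhere.
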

\begin{proof}
For all $0\leq u\leq t\leq T$, $\Pi_{t,u}^{\widetilde{\mathbb{P}}}=\mathbb{E}^{\widetilde{\mathbb{P}}}[V(Y_u)\vert\mathscr{F}_t]\geq \mathbb{E}^{\widetilde{\mathbb{P}}}[Y_u\vert\mathscr{F}_t]$ since $V(y)$ is an increasing function.  Under the assumption that $\widetilde{\mathbb{P}}(Y_u>y\vert\mathscr{F}_t)\geq\mathbb{P}(Y_u>y\vert\mathscr{F}_t)$ for all $y\in D_Y, 0\leq t\leq u\leq T$, it follows that $\mathbb{E}^{\widetilde{\mathbb{P}}}[Y_u\vert\mathscr{F}_t]\geq\mathbb{E}^{\mathbb{P}}[Y_u\vert\mathscr{F}_t]$ and so $\Pi_{t,u}^{\widetilde{\mathbb{P}}}\geq\mathbb{E}^{\mathbb{P}}[Y_u\vert\mathscr{F}_t]$, as required.
\end{proof}
The stochastic valuation principle given in Proposition \ref{stochppconditionalex} is a general valuation, or pricing, principle that depends not only on the underlying risk $(Y_t)$, but also on the measure $\widetilde{\mathbb{P}}$ which may, for instance, be chosen to incorporate investor preferences or other systemic risk factors, and the function $V(y)$.  If $(\Pi_{t,u}^{\widetilde{\mathbb{P}}})$ in Proposition \ref{stochppconditionalex} produces a risk--loading, $(Y_t)$ becomes more heavy tailed or skewed, or both, under $\widetilde{\mathbb{P}}$, that is, the riskiness of $(Y_t)$ is amplified under $\widetilde{\mathbb{P}}$ relative to $\mathbb{P}$.  We emphasise that the risk loading in Corollary \ref{prop1} can be attributed to both the distorted measure $\widetilde{\mathbb{P}}$ and the monotonically increasing function $V(y)$.  This function may be, for example, a utility function that accounts for the investors risk preferences and how the risk loading built in to the price must accommodate such preferences accordingly.  In expected utility theory, an investor who's preferences satisfy the axioms of transitivity, continuity, and independence will have a von Neumann-Morgenstern utility function that they seek to maximise the expected value of, see \cite{vnm}.  In contradiction with expected utility theory, the function $V(y)$ may also be the value function underlying prospect theory, see \cite{kahneman1979prospect}. Alternatively, one may consider $V(y)$ as the payoff function of a risky contract that, as such, produces a risk loading to compensate an investor for taking on the risk by entering into the contract.  Consider the proof of Corollary \ref{prop1}.  We have
\begin{equation}
\Pi_{t,u}^{\widetilde{\mathbb{P}}}=\mathbb{E}^{\widetilde{\mathbb{P}}}\left[V(Y_u)\vert\mathscr{F}_t\right]\geq \mathbb{E}^{\widetilde{\mathbb{P}}}\left[Y_u\vert\mathscr{F}_t \right]\geq \mathbb{E}^{\mathbb{P}}\left[ Y_u\vert\mathscr{F}_t\right]
\end{equation}
for all $0\leq t<u\leq T$, where the inequality on the left--hand--side is produced by the risk--loading induced by the function $V(y)$, and that on the right hand side is produced by the risk--loading induced by the measure distortion from $\mathbb{P}$ to $\widetilde{\mathbb{P}}$. It is common in many valuation settings to define prices with respect to a pricing kernel (financial mathematics), state price deflator (actuarial mathematics) or state price density and stochastic discount factor (economic theory)---see \cite{buhlmann1980economic,wuthrich2013financial, wuthrich2010market}.  Here, for a given deflator, or pricing kernel, $(\varphi_t)_{t\in[0,\infty)}$ such that $\varphi_t>0$ almost surely for all $t\geq 0$ with $\varphi_0=1$, the valuation process is given by
\begin{equation}\label{pricingkernelpcp}
    \Pi_{t,u}=\frac{1}{\varphi_t}\mathbb{E}^{\mathbb{P}}\left[\varphi_uV(Y_u)\big\vert\mathscr{F}_t\right].
\end{equation}
We do not require that prices should \emph{necessarily} satisfy no--arbitrage unless markets are assumed complete (which in general they are not, in an insurance setting), i.e., $\Pi_{t,u}=V(Y_u)$ should not be implied.  However, this is not saying that trading in such markets \emph{must} lead to arbitrage, i.e., if all market participants do agree on a specific pricing kernel/equivalent martingale measure then we have no--arbitrage prices. However all market participants agreeing on a specific pricing kernel in our setting equates to all market participants preferences corresponding to the same composite map that produces the quantile process from any given driving risk process.

The valuation processes $(\Pi_{t,u})_{0\leq t\leq u\leq T}$ define a pricing system for a given state price deflator $(\varphi_t)$.  In general, there are infinitely many deflators which map from the space of (insurable or financial) risks to corresponding prices.  Such deflators may be determined by factors such as market risk aversion, individual risk preferences, market completeness or incompleteness, legal constraints or tail behaviour of underlying risks, among others.  It is, in general, not assumed that $(\varphi_t)$ be independent of $(Y_t)$.

We note that there is an explicit connection between the pricing kernel and the Radon--Nikodym derivative process that induces the measure change from $\mathbb{P}$ to $\widetilde{\mathbb{P}}$ in Proposition \ref{stochppconditionalex}.  Thus, the task pertains to modelling $(\varphi_t)$ by the Radon--Nikodym derivative process to capture both external risk and risk preferences in the desired way. 
\subsection{Valuation principle based on measure distortions induced by quantile processes}
Now we give a key result of this paper, that is the stochastic pricing principle based on quantile processes.  First, we give definitions of the distorted measures induced by such quantile processes, that is the pushforward measure that defines the law of the process $(Z_t)$.  

Throughout the definitions presented in this section, we refer to Definition \ref{processdrivendefn} for the definition of a quantile process $(Z_t)_{t\in(0,\infty)}$ constructed from some driving process $(Y_t)_{t\in[0,\infty)}$ and where $Q_{\zeta}(u;\bm{\xi}):[0,1]\rightarrow D_{\zeta}\subseteq D_Y\subseteq\mathbb{R}$ so that $(Z_t)$ is an $(\mathscr{F}_t)$--adapted process on the measurable space $(D_{\zeta},\mathscr{B}(D_\zeta))$ where $\mathscr{B}(D_\zeta)\subseteq\mathscr{F}$.
\begin{Definition}\label{qplawdefn}
Let $\mathscr{D}$ denote the collection of all c\`adl\`ag functions from $(0,\infty)$ to $D_\zeta$, that is $t\mapsto Z_t(\omega)$ for all $t\in(0,\infty)$. The quantile process $(Z_t)$ induces a measurable function $Z:\Omega\rightarrow \mathscr{D}$ where $(Z(\omega))(t)=Z_t(\omega)$.  The law of the quantile process is defined to be the pushforward measure $\mathbb{P}^Z(A):=\mathbb{P}(Z^{-}(A))$ for all $A\in\mathscr{B}(D_\zeta)$.
\end{Definition} 
\begin{Definition}\label{distortedmeasuresdefn}
For $n\geq 1$, $0\leq t_1<\ldots<t_n$ and $A\in\mathscr{B}(D_\zeta^n)$, the finite dimensional distributions of the quantile process $(Z_t)$ are defined by the pushforward measures \begin{equation}\mathbb{P}^Z_{t_1,\ldots t_n}(A):=\mathbb{P}\{(Z_{t_1},\ldots,Z_{t_n})\in A\}=\int_{\{\omega\in\Omega\,:\,(Z_{t_1}(\omega),\ldots,Z_{t_n}(\omega))\in A \}}\rd\mathbb{P}(\omega).
\end{equation}
Thus, for all $t\in(0,\infty)$, the ``marginal distorted measure'' induced by the quantile process $(Z_t)$ is defined as the finite dimensional distribution $\mathbb{P}_t^Z(B)$ for all $B\in\mathscr{F}_t$.  
For $0<s<t<\infty$, the ``conditional distorted measure'' is defined as the restriction of $\mathbb{P}^Z_t$ to the sub--$\sigma$--algebra $\mathscr{F}_s$.  That is \begin{equation}\mathbb{P}^Z_{t\vert s}(B):=\mathbb{P}_t^Z(B\vert\mathscr{F}_s)= \mathbb{P}\{Z_t\in B\vert\mathscr{F}_s \}=\int_{\{\omega\in\Omega\,:\, Z_t(\omega) \in B\}} \rd \mathbb{P}(\omega\vert \mathscr{F}_s)\end{equation}
for all $B\in\mathscr{F}_t$.  It holds that $\mathbb{P}^Z_t=\mathbb{P}^Z_{t\vert 0}$. By construction, if $B\in\mathscr{F}_t$ but $B\not\subseteq D_{\zeta}$, then $\mathbb{P}^Z_{t\vert s}(B)=0$ for all $0\leq s<t<\infty$.
\end{Definition}
Assume, now, we wish to incorporate discounting in the stochastic valuation principle where $(B_t)_{t\in[0,\infty)}$ is the money--market process offering a positive rate of return.  The valuation principle based on quantile processes is given as follows.  
\begin{Definition}\label{qppricingprincipledefn}
Recall Proposition \ref{stochppconditionalex} and let $\widetilde{\mathbb{P}}=\mathbb{P}^Z$ be the law of the quantile process, given by Definition \ref{qplawdefn}.  The process $(\Pi_{t,u}^{\mathbb{P}^Z,\,\zeta})_{0\leq t\leq u\leq T}$, defined by 
\begin{equation}\label{Pzprice}
    \Pi_{t,u}^{\mathbb{P}^Z,\,\zeta}=B_t\mathbb{E}^{\mathbb{P}^Z}\left[\frac{1}{B_u}V(Y_u)\vert\mathscr{F}_t\right],
\end{equation}
is the (discounted) quantile process--based stochastic valuation, or pricing, principle (QPVP or QPPP) for the risk $(Y_t)$ and money--market process $(B_t)$, corresponding to the constructive choice of $(Z_t)$.  The second argument in the superscript of the valuation principle process denotes the random variable $\zeta$ that characterises $(Z_t)$---see Definition \ref{processdrivendefn}.  
\end{Definition}
Since $(B_t)$ is the money--market process, Eq. (\ref{Pzprice}) is akin to the quantile process--induced measure $\mathbb{P}^Z$ being a risk--neutral measure.  We note that as markets are not assumed complete, there may exist infinitely many risk--neutral measures, i.e., there is a risk--neutral measure corresponding to each construction of the quantile process $(Z_t)$ that induces the measure $\mathbb{P}^Z$.  We highlight that the transformation from $\mathbb{P}$ to each risk--neutral measure $\mathbb{P}^Z$ involves more than just a drift transformation; the distorted measure is constructed so to account for risk associated to higher--order moments (e.g., skewness, kurtosis), and as such the Girsanov theorem (where the risk adjustment is captured by a first--order/ drift correction) does not apply.
\begin{Proposition}\label{qppricingprincipleprop}
Recall Definition \ref{qppricingprincipledefn} and let the c\'{a}dl\'ag risk process $(Y_t)$ be the stochastic driver used to construct a quantile process $(Z_t)$, with distribution function $F\neq F_Y$ in the composite map.  If $Z_u\succsim_{FOSD}Y_u$, conditional on the sub--$\sigma$--algebra $\mathscr{F}_t$ for all $0\leq t<u\leq T$, then the valuation principle based on a quantile process produces a dynamically consistent risk--loading, i.e., $\Pi_{t,u}^{\mathbb{P}^Z,\,\zeta}\geq B_t\mathbb{E}^{\mathbb{P}}[Y_u/B_u\vert\mathscr{F}_t]$ for all $0\leq t\leq u\leq T$.

\end{Proposition}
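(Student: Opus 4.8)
The plan is to recognise $\Pi_{t,u}^{\mathbb{P}^Z,\zeta}$ as the pricing principle of Proposition~\ref{stochppconditionalex} with distorting measure $\widetilde{\mathbb{P}}=\mathbb{P}^Z$, and then to check that this particular $\mathbb{P}^Z$ satisfies the hypothesis imposed on $\widetilde{\mathbb{P}}$ in Corollary~\ref{prop1}; the asserted dynamically consistent risk--loading then follows from (the discounted form of) that corollary. Fix $0\le t\le u\le T$. By Definition~\ref{qplawdefn}, $\mathbb{P}^Z$ is the pushforward of $\mathbb{P}$ under the path map $Z$, so that expectations of functionals of the risk process under $\mathbb{P}^Z$ coincide with the corresponding expectations of functionals of $(Z_t)$ under $\mathbb{P}$; in particular, for every $y\in D_Y$ one has $\mathbb{P}^Z(Y_u>y\mid\mathscr{F}_t)=\mathbb{P}(Z_u>y\mid\mathscr{F}_t)$ almost surely, and $\mathbb{E}^{\mathbb{P}^Z}[\,\cdot\mid\mathscr{F}_t]$ evaluated on a function of $Y_u$ equals the $\mathbb{P}$--conditional expectation of the same function of $Z_u$.

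Next I would invoke the conditional first--order dominance hypothesis $Z_u\succsim_{FOSD}Y_u$ given $\mathscr{F}_t$: written out via the FOSD criterion of Definition~\ref{fosddefn}, this reads $\mathbb{P}(Z_u\le y\mid\mathscr{F}_t)\le\mathbb{P}(Y_u\le y\mid\mathscr{F}_t)$ for all $y\in D_Y$, equivalently $\mathbb{P}(Z_u>y\mid\mathscr{F}_t)\ge\mathbb{P}(Y_u>y\mid\mathscr{F}_t)$. Combining this with the pushforward identity of the previous step gives $\mathbb{P}^Z(Y_u>y\mid\mathscr{F}_t)\ge\mathbb{P}(Y_u>y\mid\mathscr{F}_t)$ for all $y\in D_Y$ and all $0\le t\le u\le T$, which is precisely the condition on $\widetilde{\mathbb{P}}$ appearing in Corollary~\ref{prop1}. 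Since $(B_t)$ is strictly positive and the discount factor does not affect monotonicity in $Y_u$, applying the reasoning of Corollary~\ref{prop1} in its discounted form (cf.\ Definition~\ref{riskloadingdefn}) to the operator $\Pi^{\mathbb{P}^Z}_t(\cdot)=B_t\,\mathbb{E}^{\mathbb{P}^Z}[\,\cdot/B_u\mid\mathscr{F}_t]$ and the monotonically increasing $V$ then delivers
\[
\Pi_{t,u}^{\mathbb{P}^Z,\zeta}=B_t\,\mathbb{E}^{\mathbb{P}^Z}\!\big[B_u^{-1}V(Y_u)\,\big|\,\mathscr{F}_t\big]\ \ge\ B_t\,\mathbb{E}^{\mathbb{P}^Z}\!\big[B_u^{-1}Y_u\,\big|\,\mathscr{F}_t\big]\ \ge\ B_t\,\mathbb{E}^{\mathbb{P}}\!\big[Y_u/B_u\,\big|\,\mathscr{F}_t\big]
\]
for all $0\le t\le u\le T$, which is the claim.

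The step I expect to be the main obstacle is the identification $\mathbb{P}^Z(Y_u>y\mid\mathscr{F}_t)=\mathbb{P}(Z_u>y\mid\mathscr{F}_t)$ used above: since $\mathbb{P}^Z$ is defined as a law on path space, one has to transport the conditioning $\sigma$--algebra $\mathscr{F}_t$ between the two measures consistently and to verify that the conditional --- not merely the marginal --- distributions match. Here the standing hypothesis $F\neq F_Y$ is what makes this identification meaningful: it guarantees the composite map is a genuine distortion, so that the law of $(Y_t)$ under $\mathbb{P}^Z$ is that of the distorted process $(Z_t)$ rather than $(Y_t)$ again. Once this is granted, the rest is just the monotonicity arguments already established in Corollary~\ref{prop1}, carried through the positive discount factor $1/B_u$.
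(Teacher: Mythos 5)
Your proposal is correct and takes essentially the same route as the paper: both arguments hinge on the identification that the conditional law of $Y_u$ under $\mathbb{P}^Z$ given $\mathscr{F}_t$ coincides with that of $Z_u$ under $\mathbb{P}$, followed by monotonicity of $V$ and the passage from conditional FOSD to ordered conditional expectations. The only difference is presentational, in that you package the final two inequalities as a verification of the hypothesis of Corollary \ref{prop1}, whereas the paper writes out the same chain directly as $\Pi_{t,u}^{\mathbb{P}^Z,\,\zeta}=\mathbb{E}^{\mathbb{P}}\left[\tfrac{B_t}{B_u}V(Z_u)\vert\mathscr{F}_t\right]\geq\mathbb{E}^{\mathbb{P}}\left[\tfrac{B_t}{B_u}Z_u\vert\mathscr{F}_t\right]\geq\mathbb{E}^{\mathbb{P}}\left[\tfrac{B_t}{B_u}Y_u\vert\mathscr{F}_t\right]$.
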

\begin{proof}
By construction, $\mathbb{P}\{Z_t\in B\}=\mathbb{P}\{Y_t\in Q(t,F_{\zeta}(B)) \}=\mathbb{P}\{Y_t\in Z^{-}(B)\}=\mathbb{P}^Z\{Y_t\in B\}$ and $\mathbb{P}\{Z_t\in B\vert\mathscr{F}_s \}=\mathbb{P}\{Y_t\in Z^{-}(B)\vert\mathscr{F}_s\}=\mathbb{P}^Z\{Y_t\in B\vert\mathscr{F}_s \}$ for all $B\in\mathscr{F}_t$, i.e., the marginal and conditional distributions of the driving process under the pushforward measure $\mathbb{P}^Z$ coincide with the distributions of the quantile process under $\mathbb{P}$.  As such, it follows that \begin{equation}
    \begin{split}
        \Pi_{t,u}^{\mathbb{P}^Z,\zeta} &=\mathbb{E}^{\mathbb{P}^Z}\left[\frac{B_t}{B_u}V(Y_u)\vert\mathscr{F}_t\right]= \mathbb{E}^{\mathbb{P}}\left[\frac{B_t}{B_u} V(Z_u)\vert\mathscr{F}_t\right]\geq \mathbb{E}^{\mathbb{P}}\left[\frac{B_t}{B_u}Z_u\vert\mathscr{F}_t\right]\geq \mathbb{E}^{\mathbb{P}}\left[\frac{B_t}{B_u}Y_u\vert\mathscr{F}_t\right]
    \end{split}
\end{equation}
where the last two equalities follow from $V(y)$ being an increasing function, and the definition of FOSD.
\end{proof}

\begin{rem}
The stochastic valuation principle given in Proposition \ref{qppricingprincipleprop} is a special case of that defined in Definition \ref{qppricingprincipledefn}, whereby the risk process $(Y_t)$ is the stochastic driver of the quantile process that induces the measure $\mathbb{P}^Z$.  Here, we restrict to the ``false--law'' construction to ensure that any two $\mathbb{P}$--distributionally distinct risks are not $\mathbb{P}^Z$--distributionally indistinct when used as the quantile process drivers for some $Q_{\zeta}$. Moreover, as noted above, in the case that $(B_t)$ is the numeraire associated with the measure $\PR$, we identify $\PR$ as a risk-neutral measure. The measure $\PR^Z$ can then be interpreted as a distorted risk-neutral measure. 
\end{rem}
Since $D_\zeta\subseteq D_Y$, it holds that $\mathbb{P}^Z\ll\mathbb{P}$, and so for all $t\in(0,\infty)$, $\mathbb{P}^Z_t\ll\mathbb{P}\vert_{\mathscr{F}_t}$ where $\mathbb{P}\vert_{\mathscr{F}_t}$ denotes the restriction of $\mathbb{P}$ to the sub--$\sigma$--algebra $\mathscr{F}_t$, and $\mathbb{P}\vert_{\mathscr{F}_t}(A)=\mathbb{P}(A)$ for all $A\in\mathscr{F}_t$.  Similarly, for all $0\leq s<t<\infty$, $\mathbb{P}_{t\vert s}^Z\ll(\mathbb{P}\vert_{\mathscr{F}_t})\vert_{\mathscr{F}_s}$ and so by the Radon--Nikodym theorem, there exists an $\mathscr{F}_t$--measurable function $\rho_{t\vert s}(\omega):\Omega\rightarrow\mathbb{R}_0^+$ such that 
\begin{equation}\label{conditionalrnderiv}
    \mathbb{P}_{t\vert s}^Z(A)=\int_{A}\rho_{t\vert s}(\omega)\rd\mathbb{P}\vert_{\mathscr{F}_s}(\omega)
\end{equation}
for all $A\in\mathscr{F}_t$.  For all $\omega\in\Omega$, $\rd\mathbb{P}_{t\vert s}^Z(\omega)=\rho_{t\vert s}(\omega)\rd (\mathbb{P}\vert_{\mathscr{F}_t})\vert_{\mathscr{F}_s}(\omega)$.  By construction, it holds that $\mathbb{E}^{\mathbb{P}}[\rho_{t\vert s}\vert\mathscr{F}_s]=1$ for all $0\leq s<t<\infty$, and for an $\mathscr{F}_t$--adapted random variable $Y_t$, we have
\begin{equation}
    \mathbb{E}^{\mathbb{P}^Z}[Y_t\vert\mathscr{F}_s]=\mathbb{E}^{\mathbb{P}_{t\vert s}^Z}[Y_t]=\mathbb{E}^{(\mathbb{P}\vert_{\mathscr{F}_t})\vert_{\mathscr{F}_s}}[\rho_{t\vert s}Y_t]=\mathbb{E}^{\mathbb{P}}[\rho_{t\vert s}Y_t\vert\mathscr{F}_s].
\end{equation}

\begin{Definition}
Recall Definition \ref{processdrivendefn} for a continuous quantile process $(Z_t)$.  The Radon--Nikodym derivative in Eq. (\ref{conditionalrnderiv}) is given by
\begin{equation}
    \rho_{t\vert s}(\omega) = \dfrac{\rd F_Y^{\mathbb{P}}\left(t,Q\left(t,F_{\zeta}\left(Y_t(\omega)\right)\right)\vert\mathscr{F}_s\right)}{\rd F_Y^{\mathbb{P}}\left(t, Y_t(\omega)\vert\mathscr{F}_s \right)}
\end{equation}
for all $0<s<t<\infty$ and all $\omega\in\Omega$.  Assuming the existence of the derivatives $\partial_u Q_{\zeta}(u;\bm{\xi}), \, \partial_z F_{\zeta}(z;\bm{\xi})=f_{\zeta}(z;\bm{\xi})$ and $\partial_y F(t,y;\bm{\theta})=f(t,y;\bm{\theta})$, the Radon--Nikodym derivative is given by
\begin{equation}\label{rnderiv1}
\rho_t(\omega) = \dfrac{f_Y^{\mathbb{P}}\left(t,Q\left(t,F_{\zeta}(Y_t(\omega))\right)\right)}{f\left(t,Q\left(t,F_{\zeta}(Y_t(\omega))\right)\right)}\frac{f_{\zeta}(Y_t(\omega))}{f_Y^{\mathbb{P}}(t,Y_t(\omega))}
\end{equation}
for $0=s<t<\infty$ and for all $\omega \in\Omega$.  The case where $(Z_t)$ is a discrete quantile process is similar and one may consider the ratio of conditional probability mass functions of $(Z_t)$ and $(Y_t)$ under $\mathbb{P}$.
\end{Definition}
There is a natural connection between the pricing kernel representation in Eq. (\ref{pricingkernelpcp}) and the QPVP given in Definition \ref{qppricingprincipledefn} when the pricing kernel is defined by $\varphi_u:=\rho_{u\vert t}\varphi_tB_t/B_u$ for all $0\leq t<u<\infty$.

\begin{Proposition}
Let $(B_t)_{t\in[0,\infty)}$ be the money--market process and consider the pricing kernel associated with the $\mathbb{P}$--measure, defined by $\varphi_u:=\rho_{u\vert t}\varphi_tB_u/B_t$ for all $0\leq t<u<\infty$ with $\varphi_0=1$,  The process $(\varphi_tB_t)_{t\in[0,\infty)}$ is a $\mathbb{P}$--martingale.
\end{Proposition}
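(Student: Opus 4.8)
The plan is to deduce the martingale property of $(\varphi_tB_t)_{t\in[0,\infty)}$ directly from the two facts already recorded in the excerpt: that each conditional Radon--Nikodym density $\rho_{u\vert t}$ is $\mathscr{F}_u$--measurable and $\mathbb{R}_0^+$--valued, and that it is normalised, $\mathbb{E}^{\mathbb{P}}[\rho_{u\vert t}\vert\mathscr{F}_t]=1$ for all $0\le t<u<\infty$.

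First I would rewrite the defining relation of the pricing kernel in the form
\begin{equation*}
\varphi_uB_u=\rho_{u\vert t}\,\varphi_tB_t,\qquad 0\le t<u<\infty,
\end{equation*}
which, together with $\varphi_0=1$ and the positivity of the money--market account, exhibits $(\varphi_tB_t)$ as a strictly positive, $(\mathscr{F}_t)$--adapted process (the adaptedness of $\varphi_t$ being inherited from that of $\rho_{t\vert 0}$ and $Y_t$). Integrability follows by taking $t=0$: then $\varphi_uB_u=\rho_{u\vert 0}B_0$ and $\mathbb{E}^{\mathbb{P}}[\rho_{u\vert 0}]=\mathbb{E}^{\mathbb{P}}\bigl[\mathbb{E}^{\mathbb{P}}[\rho_{u\vert 0}\vert\mathscr{F}_0]\bigr]=1$, so $\mathbb{E}^{\mathbb{P}}[\varphi_uB_u]=B_0<\infty$ for every $u$.

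The martingale identity is then immediate: for $0\le t<u<\infty$,
\begin{equation*}
\mathbb{E}^{\mathbb{P}}[\varphi_uB_u\vert\mathscr{F}_t]=\mathbb{E}^{\mathbb{P}}[\rho_{u\vert t}\,\varphi_tB_t\vert\mathscr{F}_t]=\varphi_tB_t\,\mathbb{E}^{\mathbb{P}}[\rho_{u\vert t}\vert\mathscr{F}_t]=\varphi_tB_t,
\end{equation*}
using that $\varphi_tB_t$ is $\mathscr{F}_t$--measurable and that $\rho_{u\vert t}$ integrates to one given $\mathscr{F}_t$.

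I expect the only real subtlety to lie in the well-posedness of the definition of $(\varphi_u)$: for the relation $\varphi_u=\rho_{u\vert t}\varphi_tB_t/B_u$ to specify $\varphi_u$ unambiguously, and hence for the one-step identity above to constitute a genuine martingale on all of $[0,\infty)$ rather than merely along a fixed base time, one needs the cocycle property $\rho_{u\vert s}=\rho_{u\vert t}\,\rho_{t\vert s}$ for $0\le s<t<u$. I would verify this from the density-ratio expression for $\rho_{t\vert s}$ stated earlier, where it reduces to the chain rule for Radon--Nikodym derivatives of the $\mathbb{P}$--conditional laws of $(Y_t)$; once this is in hand the proof closes with no further computation.
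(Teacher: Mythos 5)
Your argument is correct and is essentially the paper's own proof: both rest on adaptedness of $(\varphi_tB_t)$ and the normalisation $\mathbb{E}^{\mathbb{P}}[\rho_{u\vert t}\vert\mathscr{F}_t]=1$, applied through the relation $\varphi_uB_u=\rho_{u\vert t}\varphi_tB_t$ (the proposition's $B_u/B_t$ is a typo for $B_t/B_u$, as the paper's proof and preceding text confirm). The well-posedness/cocycle point you defer is exactly what the paper handles implicitly in its three-time computation for $0\leq s<t<u$, where it substitutes $\varphi_tB_t=\rho_{t\vert s}\varphi_sB_s$ and uses the Tower property, so your plan and the published proof coincide.
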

\begin{proof}
By construction, $(\varphi_tB_t)_{t\in[0,\infty)}$ is $\mathscr{F}_t$--adapted for all $t\in(0,\infty)$.  Recall that $\mathbb{E}^{\mathbb{P}}[\rho_{t\vert s}\vert\mathscr{F}_s]=1$ for all $0\leq s<t<\infty$ and so $\mathbb{E}^{\mathbb{P}}[\varphi_uB_u\vert\mathscr{F}_t]=\mathbb{E}^{\mathbb{P}}[\rho_{u\vert t}\varphi_tB_t\vert\mathscr{F}_t]=\varphi_tB_t\mathbb{E}^{\mathbb{P}}[\rho_{u\vert t}\vert\mathscr{F}_t]=\varphi_tB_t$.  For $0\leq s<t<u<\infty$,
\begin{equation}
    \begin{split}
        \mathbb{E}^{\mathbb{P}}\left[\varphi_uB_u\vert\mathscr{F}_s \right] &= \mathbb{E}^{\mathbb{P}}\left[ \rho_{u\vert t}\varphi_tB_t\vert\mathscr{F}_s\right] = \mathbb{E}^{\mathbb{P}}\left[\rho_{u\vert t}\rho_{t\vert s}\varphi_sB_s\vert\mathscr{F}_s \right] 
        \\ &= \varphi_sB_s \mathbb{E}^{\mathbb{P}}\left[\mathbb{E}^{\mathbb{P}}\left[\rho_{u\vert t}\rho_{t\vert s}\vert\mathscr{F}_t \right]\vert\mathscr{F}_s \right] = \varphi_sB_s \mathbb{E}^{\mathbb{P}}\left[\rho_{t\vert s}\mathbb{E}^{\mathbb{P}}\left[\rho_{u\vert t}\vert\mathscr{F}_t \right]\vert\mathscr{F}_s \right]
        \\
        &= \varphi_sB_s \mathbb{E}^{\mathbb{P}}\left[\rho_{t\vert s}\vert\mathscr{F}_s \right] = \varphi_sB_s
    \end{split}
\end{equation}
where the third and fourth equalities follow from the Tower property and since $\rho_{t\vert s}$ is $\mathscr{F}_t$--adapted, respectively. 
\end{proof}
\subsection{Properties of the stochastic valuation principle based on quantile processes}

Here, we discuss properties of the valuation principle supported by quantile processes in the context of varying risk profiles and preferences.  We note that distorted measure flows in the quantile space are advantageous in that many applications in risk management require the quantification of risk according to quantile functions, as opposed to densities, such as for VaR and dynamic VaR, expected shortfall (ES), and spectral risk measures and their dynamic equivalents.  The quantile process--induced measure allows one to incorporate and parameterise higher--order risk behaviour, investor risk preferences and auxiliary factors (e.g., systemic risk, economic and market conditions) to the valuation principle in Definition \ref{qppricingprincipledefn}.  Such factors may be determined, for instance, by how investors react to other risky contracts or market risks.  We view the distorted measure as subjective; in an insurance setting, given some risk attitude, the valuation principle captures the price an investor would be willing to pay for a contract written on the risk process $(Y_t)$.  The challenge thus lies in selecting a suitable composite map in Definition \ref{processdrivendefn} to construct the quantile process that induces the distorted measure.  The map must be chosen so that the valuation principle in Definition \ref{qppricingprincipledefn} is appropriate given the situation or market under consideration.  

We emphasise that each investor's preferences correspond to a different, but not necessarily unique, composite map used in the construction of a quantile process; the valuation process will capture the preferences of the investor through the induced measure; and the valuation process may be considered relative to those under no distortion.  We consider the $\mathbb{P}$ measure to be some objective baseline to which the distorted, `subjective' risk--neutral measure $\mathbb{P}^Z$ can be compared.  The subjectivity of $\mathbb{P}^Z$ means each investor determines their own risk--neutral measure used for their valuation of financial, insurance, or other risks.  The notion of a subjective probability measure was considered and axiomatised in \cite{savage1954}.  Previous literature in the context of subjective decision making considers an agent's preferences through a subjective utility function, see \cite{vnm}, or by defining scenario based risk measures for risk management, see \cite{acerbi2002spectral,artzner1999coherent}.
We now present the following result on the ranking of valuation principles under different distorted measures.
\newpage
\begin{Proposition}\label{qpvporderingpropn}
Consider the setting in Definition \ref{fosddefn} whereby $Z_t^{(i)}=Q_{\zeta_i}(F_i(t,Y_t^{(i)}))$ are quantile processes for $i=1,2$.  Let $\mathbb{P}^{Z_i}$ be the pushforward measure induced by each quantile process, as given in Definition \ref{qplawdefn}. Define $\Pi_{t,u}^{\mathbb{P}^{Z_i},\,\zeta}=B_t\mathbb{E}^{\mathbb{P}^{Z_i}}[V(Y_u^{(i)})/B_u\vert\mathscr{F}_t]$, as per Definition \ref{qppricingprincipledefn} where the risk process is taken to be the quantile process driver for $i=1,2$. Then, $\Pi_{t,u}^{\mathbb{P}^{Z_1},\,\zeta_1}\geq \Pi_{t,u}^{\mathbb{P}^{Z_2},\,\zeta_2}$ for all $0\leq t<u<\infty$, if and only if $Z_u^{(1)}\succsim_{FOSD} Z_u^{(2)}$ conditional on $\mathscr{F}_t$.
\end{Proposition}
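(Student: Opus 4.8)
The plan is to reduce both discounted valuation processes, each taken under its own distorted measure, to conditional expectations of $V$ evaluated along the corresponding quantile process under the \emph{single} reference measure $\mathbb{P}$, and then to read off the equivalence from the definition of first--order stochastic dominance together with monotonicity of $V$. The first half is essentially a book-keeping step; the real content sits in the ``only if'' direction.

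First I would invoke the pushforward identity already established in the proof of Proposition~\ref{qppricingprincipleprop}: since $Z^{(i)}$ is driven by $Y^{(i)}$, one has $\mathbb{P}\{Z_u^{(i)}\in B\mid\mathscr{F}_t\}=\mathbb{P}^{Z_i}\{Y_u^{(i)}\in B\mid\mathscr{F}_t\}$ for all $B\in\mathscr{F}_u$, hence $\mathbb{E}^{\mathbb{P}^{Z_i}}[h(Y_u^{(i)})\mid\mathscr{F}_t]=\mathbb{E}^{\mathbb{P}}[h(Z_u^{(i)})\mid\mathscr{F}_t]$ for every integrable Borel $h$. Applying this with $h=V/B_u$ (the discount factor passing through exactly as in the proof of Proposition~\ref{qppricingprincipleprop}) gives
\[
\Pi_{t,u}^{\mathbb{P}^{Z_i},\,\zeta_i}=B_t\,\mathbb{E}^{\mathbb{P}}\!\left[B_u^{-1}V(Z_u^{(i)})\mid\mathscr{F}_t\right],\qquad i=1,2,
\]
so the claimed ordering of the two valuation processes is equivalent to $\mathbb{E}^{\mathbb{P}}[g(Z_u^{(1)})\mid\mathscr{F}_t]\geq\mathbb{E}^{\mathbb{P}}[g(Z_u^{(2)})\mid\mathscr{F}_t]$ with $g:=B_u^{-1}V$, which is again non--decreasing.

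For the ``if'' direction I would use the Stieltjes (layer--cake) identity $\mathbb{E}^{\mathbb{P}}[g(Z_u^{(i)})\mid\mathscr{F}_t]=g(z_0)+\int_{z_0}^{\infty}\mathbb{P}(Z_u^{(i)}>z\mid\mathscr{F}_t)\,\rd g(z)$ on $D_\zeta$, so that the difference of the two conditional expectations equals $\int_{z_0}^{\infty}\bigl[\mathbb{P}(Z_u^{(1)}>z\mid\mathscr{F}_t)-\mathbb{P}(Z_u^{(2)}>z\mid\mathscr{F}_t)\bigr]\,\rd g(z)$; by the definition of conditional FOSD (Definition~\ref{fosddefn}) the bracketed term is a.s.\ non--negative and $\rd g\geq 0$, whence the integral is a.s.\ non--negative, i.e.\ $\Pi_{t,u}^{\mathbb{P}^{Z_1},\,\zeta_1}\geq\Pi_{t,u}^{\mathbb{P}^{Z_2},\,\zeta_2}$. (Equivalently this is the classical implication ``FOSD $\Rightarrow$ ordering of expectations of all non--decreasing functions'', see \cite{hadar1969rules,hanoch1969efficiency,rothschild1970increasing}; integrability of $V(Z_u^{(i)})$ is implicit in the valuation principle being well defined.)

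For the ``only if'' direction I would use the converse classical characterisation, namely that $X\succsim_{FOSD}X'$ holds once $\mathbb{E}[g(X)]\geq\mathbb{E}[g(X')]$ for \emph{all} non--decreasing $g$, applied $\omega$--wise to the regular conditional distributions of $Z_u^{(1)}$ and $Z_u^{(2)}$ given $\mathscr{F}_t$: it suffices to let $V$ range over right--continuous increasing functions approximating the indicators $\mathbf{1}_{(c,\infty)}$, $c\in D_\zeta$, which after a harmless translation (using $D_\zeta\subseteq D_Y\subseteq\mathbb{R}$) are admissible, so that $\mathbb{E}^{\mathbb{P}}[g(Z_u^{(1)})\mid\mathscr{F}_t]\geq\mathbb{E}^{\mathbb{P}}[g(Z_u^{(2)})\mid\mathscr{F}_t]$ for all such $g$ forces $\mathbb{P}(Z_u^{(1)}>z\mid\mathscr{F}_t)\geq\mathbb{P}(Z_u^{(2)}>z\mid\mathscr{F}_t)$ a.s.\ for every $z\in D_\zeta$. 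I expect this to be the main obstacle: for a \emph{single} fixed $V$ the implication genuinely fails (crossing conditional c.d.f.s can still satisfy one price inequality), so one must pin down the precise sense in which the equivalence is claimed — quantifying over the admissible class of $V$, or restricting to strictly increasing $V$ and arguing by a monotone--class/approximation step — and then make the conditional, regular-conditional-distribution version of the classical FOSD characterisation rigorous, including the a.s.\ strict-inequality clause of Definition~\ref{fosddefn}.
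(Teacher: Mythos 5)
Your proposal takes essentially the same route as the paper: the identical pushforward reduction $\Pi_{t,u}^{\mathbb{P}^{Z_i},\,\zeta_i}=B_t\,\mathbb{E}^{\mathbb{P}}[V(Z_u^{(i)})/B_u\vert\mathscr{F}_t]$ (carried over from the proof of Proposition \ref{qppricingprincipleprop}), followed by the classical equivalence between FOSD and the ordering of expectations of non--decreasing functions, which the paper simply cites from \cite{hanoch1969efficiency,levy1992stochastic} and applies to the conditional distributions. Your worry about the ``only if'' direction for a single fixed $V$ is well placed, but it is a looseness the paper itself shares---its proof invokes the classical characterisation (which quantifies over all non--decreasing $V$) without comment---so your explicit layer--cake and indicator--approximation steps make the argument, if anything, more careful than the published one.
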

\begin{proof}
Since the risk process underlying each valuation principle is given by the driver of each quantile process, it holds that $ \Pi_{t,u}^{\mathbb{P}^{Z_i},\,\zeta}=B_t\mathbb{E}^{\mathbb{P}^{Z_i}}[V(Y_u^{(i)})/B_u\vert\mathscr{F}_t ] =B_t \mathbb{E}^{\mathbb{P}}[V(Z_u^{(i)})/B_u\vert\mathscr{F}_t]$ for $i=1,2$---see the proof of Proposition \ref{qppricingprincipleprop}.  It is known, see e.g., \cite{hanoch1969efficiency,levy1992stochastic}, that for any non--decreasing function $V(x)$ such that $V(x)<\infty$ for $\vert x\vert<\infty$ (often considered to be a utility function), that Definition \ref{fosddefn} for FOSD is equivalent to $\mathbb{E}^{\mathbb{P}}[V(Z_t^{(1)})]\geq \mathbb{E}^{\mathbb{P}}[V(Z_t^{(2)})]$ for all $t\in(0,\infty)$.  We consider the conditional distribution functions of the quantile processes: for all $0\leq t<u<\infty$, $Z_u^{(1)}\succsim_{FOSD}Z_u^{(2)}$ conditional on $\mathscr{F}_t$ if $F_{Z^{(1)}}^{\mathbb{P}}(u,z\vert\mathscr{F}_t)-F_{Z^{(2)}}^{\mathbb{P}}(u,z\vert\mathscr{F}_t)\geq 0$ for all $z\in D_{\zeta}$ and, equivalently, $\mathbb{E}^{\mathbb{P}}[V(Z_u^{(1)})\vert\mathscr{F}_t]\geq \mathbb{E}^{\mathbb{P}}[V(Z_t^{(2)})\vert\mathscr{F}_t]$.  By the assumptions in Proposition \ref{stochppconditionalex}, $V(y)$ is a non--decreasing function with $V(y)<\infty$ for all $\vert y\vert<\infty$ and so
\begin{equation}
    \Pi_{t,u}^{\mathbb{P}^{Z_1},\,\zeta}=B_t\mathbb{E}^{\mathbb{P}}\left[\frac{1}{B_u}V\left(Z_u^{(1)}\right)\vert\mathscr{F}_t \right] \geq B_t \mathbb{E}^{\mathbb{P}}\left[\frac{1}{B_u}V\left(Z_u^{(2)}\right)\vert\mathscr{F}_t \right]=\Pi_{t,u}^{\mathbb{P}^{Z_2},\,\zeta}
\end{equation}
if, and only if, $Z_u^{(1)}\succsim_{FOSD}Z_u^{(2)}$ conditional on $\mathscr{F}_t$, as required.
\end{proof}

Stochastic dominance results for quantile processes are given in Section \ref{stochasticorderingsection}.  
Such results can be applied in the context of Proposition \ref{qpvporderingpropn} to produce stochastically ordered valuation principles, where the order is characterised by the quantile process composite map, and thus risk preferences or profiles that are embedded into the map.  We emphasise that stochastic ordering in any of the quantile process parameters implies such parameters capture levels of risk--aversion.   We draw attention to the pivot quantile process given in Definition \ref{pivotalqpdefn}.  When considering valuation principles for different underlying risk factors from the same family, this construction may be advantageous: One may map from each risk factor to a common pivot process, which will be used as the quantile process driver, with distribution under the `reference measure' to which all distorted measures may be compared.  The pivot process may be constructed in such a way to account for all baseline risk factors that are common to all market participants and independent of their preferences.  This setting allows for a `relativised' system of valuation principle processes across market participants. 

The flexibility and generality of the quantile process approach to constructing a valuation principle allows the framework to be applicable to a large number of risk--based situations, not restricted to the pricing of financial or insurance contracts.  One may consider non--monetary risks, for example those in the context of disaster modelling or climate and environmental science.  The composite map also need not necessarily capture purely risk preferences, e.g., risk aversion, and can be used, for instance, to reflect the variability of risks or risk profiles involving external market factors (e.g., economical development, government policy, geopolitical factors, technological developments etc.), or historical behaviours (e.g., levels of historical emissions of an economy or business), among other things.  We illustrate this in the following toy example.

\begin{Example}\label{carbontariffexample}
We consider the QPVP in the context of carbon tariffs, where the goal is to cut global emissions that are not controlled under domestic emissions trading schemes (ETS), see, e.g., \cite{worldbank21}.  The aim of the cost-adjustment (tariff) on carbon--intensive imports is to prevent ``carbon leakage'', that is domestic firms taking production to countries with looser environmental standards, and to ensure a level--playing field for foreign and domestic production. Exporters are also incentivized by such schemes to switch to greener production methods.  It is important, however, for the fairness of trade that the tariff is not used as an instrument that unfairly hits imports from a country reliant on exporting such goods.  Tariffs correspond to the price that would have been paid had the good been produced under domestic carbon pricing rules, however given that not all countries have access to the same levels of ``green production''  methods, an indiscriminate carbon tariff scheme could lead to regional inequality and negatively impact export--led development of nations.  Further details on the impact carbon tariffs may have on vulnerable nations under the scheme recently proposed in the EU for implementation in 2026 are given in \cite{CBAMreportUN}.  We consider an illustrative framework that accounts for the emissions of imported goods through the use of the QPVP.  The idea is to relativise carbon tariffs by adjusting the monitored emissions levels that determine the tariff, in line with a number of production--based and economic factors.  Monitoring absolute levels of emissions involved in the production of exported goods is difficult for reasons including, but not limited to, (greener) technological access and advancements, GDP, inflation, geopolitical factors, historical emissions, and whether exporters have paid a domestic carbon price.  The goal is for the mechanism to prevent tariffs drastically impacting less developed or technologically advanced exporters that are disadvantaged in the area of greener production.

Consider a setting with one importer, $n\in\mathbb{N}$ exporters of goods produced within a given industry sector, and $d\in\mathbb{N}$ factors that must be considered in the relativisation of tariffs charged to each exporter.  For $i=1,\ldots,n$, the c\`adl\`ag process $(Y_t^{(i)})$, equipped with $\mathbb{P}$--law $F_{Y^{(i)}}$, corresponds to the absolute level of CO2 emissions (in tonnes) of the production sector for each exporter through time.  Consider Definition \ref{processdrivendefn}: We aim to produce a set of ordered maps $Q_{\zeta_i}(F_i(t,y;\bm{\theta}_i);\bm{\xi}_i)$ such that if each exporter were to produce the same levels of CO2-emissions, the tariffs paid would be ranked fairly.  The distributional families and the choice of parameters will be determined by the $d$ `relativising' factors that impact the level of sophistication in emissions management of each exporter.  One may start, for instance, by considering $F_i(t,y;\bm{\theta})=F_{Y^{(i)}}(t,y-\gamma_i;\bm{\theta})$ where $\gamma_i\in[0,1]$ quantifies the domestic carbon cost already paid, if any.  The exporter with the highest average amount of carbon price paid per tonne of CO2 has parameter $\gamma_i=1$, and the lowest has $\gamma_i=0$.  The remaining `relativising' factors, e.g., those discussed above, would then determine the parameters of the distributional family characterised by $\zeta$.  If $\zeta_i=T_g$, $\bm{\xi}_i=g_i$ so that $Z_t^{(i)}=(\expp(g_i\sqrt{2}\erf^{-}(2F_{Y^{(i)}}(t,Y_t^{(i)}-\gamma_i;\bm{\theta}_i)-1)-1)/g_i$ for all $i\in1,\ldots,n$, the larger the skewness parameter $g_i$, the larger the tariff that will be paid by the $i^{\rm{th}}$ exporter.  One could, for example, construct a weighted index of all relativising factors that determines the skewness parameter of the composite map corresponding to each exporter.  Assume GBP $C\in\mathbb{R}^+$ is the domestic price per tonne of CO2-emissions.  It follows that the stochastic, relative cost of exporting for the $i^{\rm th}$ exporter, with emission levels modelled by the process $(Y_t^{(i)})$, is given by $\Pi_{t,u}^{\mathbb{P}^{Z_i},\,\zeta}=B_t\mathbb{E}^{\mathbb{P}^{Z_i}}[CY_u^{(i)}/B_u\vert\mathscr{F}_t]$ for each $i\in1,\ldots,n$ and all $0\leq t<u<\infty$.  One may refer to Proposition \ref{qpvporderingpropn} for the conditions under which the prices will be ordered.  
\end{Example}
\section{Quantile process--based valuation with a multivariate risk driver}\label{multiqpsection}
In this section we construct quantile processes driven by a multivariate risk driver.  In the setting of the QPVP, this allows for the incorporation of multiple risk factors into the flexible valuation framework.  For example, one may consider systemic risk factors that will impact prices produced under the valuation principle through their dependence structure with the base risk of interest.  The incorporation of multivariate auxiliary risk factors into a premium principle in agricultural insurance is considered in \cite{zhu2019agricultural} where a multivariate weighted premium principle is derived. The valuation principle presented in this section considers auxiliary risk factors similarly.  The framework is advantageous in that there are two `layers' that allow for flexibility and the incorporation of additional information to reduce the extent of mispricing: the multivariate driving risk and the composite map.  The composite map can be selected to factor in investor preferences, to parameterise extreme events that may not be reflected in historical data, and to fairly relativise prices, see e.g., Example \ref{carbontariffexample}.

In what follows, we utilise a copula, see \cite{nelsencopula}, to capture the dependence structure between each marginal, univariate risk factor.  We will employ an $(\mathscr{F}_t)$--adapted, $m$--dimensional process $(\bm{Y}_t)_{t\in[0,\infty)}$ with c\`adl\`ag paths and where $\bm{Y}_t\sim F_{\bm{Y}}(t,y_1,\ldots,y_m;\bm{\vartheta}):\mathbb{R}^+\times\mathbb{R}^m\rightarrow[0,1]$ is the joint law of the $(Y_t^{(i)})$ for $i=1,\ldots,m$ and all $t\in(0,\infty)$.  By Sklar's theorem, there exists a copula $C_Y:\mathbb{R}^+\times[0,1]^m\rightarrow[0,1]$ such that $F_{\bm{Y}}(t,y_1,\ldots,y_m;\bm{\vartheta})=C_Y(t,F_{Y^{(1)}}(t,y_1;\bm{\theta}_1),\ldots, F_{Y^{(m)}}(t,y_m;\bm{\theta_m});\widetilde{\bm{\theta}})$ for all $t\in(0,\infty)$ and where $F_{Y^{(i)}}(t,y_i;\bm{\theta_i})$ is the law associated to the $i^{\rm th}$ marginal of $(\bm{Y}_t)$.  When each $F_{Y^{(i)}}(t,y_i;\bm{\theta_i})$ is continuous for $i=1,\ldots,m$, $C_Y$ is unique. The copula itself may be static, but we consider a time--inhomogeneous copula for generality.  We distinguish between the notion of a {\it `multidimensional'} and a {\it `multivariate'} quantile process, where {\it `multidimensional'} means a univariate quantile process driven by a multivariate risk process, and {\it `multivariate'} means an $n$--dimensional process for $n\geq2$.  Multidimensional quantile processes will be of interest to the following application, and are defined as follows.

\begin{Definition}\label{multidimqp}
Let $Q_{\zeta}(u;\bm{\xi})$ and $F_j(t,y;\bm{\theta}_i)$ be continuous quantile and distribution functions, respectively, as per Definition \ref{processdrivendefn}, for $j=1,\ldots,m$.  For all $t\in(0,\infty)$, let $C(t,u_1,\ldots,u_m;\widetilde{\bm{\theta}}):\mathbb{R}^+\times[0,1]^m\rightarrow[0,1]$ be an $m$--dimensional copula where $\widetilde{\bm{\theta}}\in\mathbb{R}^{d^{\prime\prime}}$ is a $d^{\prime\prime}$--dimensional vector of parameters. Consider an $m$--dimensional $\mathbb{R}^m$--valued c\`adl\`ag process $(\bm{Y}_t)_{t\in[0,\infty)}$.  At each time $t\in(0,\infty)$ the random--level, multidimensional quantile process is defined by
\begin{equation}\label{multiqp}
    Z_t=Q_{\zeta}\left(C\left(t,F_1\left(t,Y_t^{(1)};\bm{\theta}_1\right),\ldots,F_m\left(t,Y_t^{(m)};\bm{\theta}_m\right) ;\widetilde{\bm{\theta}}\right) ;\bm{\xi}\right).
\end{equation}
That is, $$Z_t=Q_{\zeta}\left(C\left(t,F_1(t,Y^{(1)}(t,\omega);\bm{\theta}_1),\ldots,F_m(t,Y^{(m)}(t,\omega);\bm{\theta}_m);\widetilde{\bm{\theta}}\right);\bm{\xi}\right):\mathbb{R}^+\times(\mathbb{R}^+\times\Omega)\rightarrow[-\infty,\infty],$$ and the map $t\mapsto Z(t,\omega)$ for all $\omega\in\Omega$ and $t\in(0,\infty)$ is $\mathscr{F}_t$--measurable.  
\end{Definition}
The output quantile process in Eq. (\ref{multiqp}) is a univariate quantile process, similar to that in Definition \ref{processdrivendefn}, but driven by  a multivariate base process.  If $C(t,u_1,\ldots,u_m;\widetilde{\bm{\theta}})=C_Y(t,u_1,\ldots,u_m;\widetilde{\bm{\theta}})$ and $F_j(t,y;\bm{\theta}_j)=F_{Y^{(j)}}(t,y;\bm{\theta}_j)$ for all $j=1,\ldots,m$, then $Z_t=Q_{\zeta}(F_{\bm{Y}}(t,Y_t^{(1)},\ldots,Y_t^{(m)};\bm{\vartheta}))$.  Here, we say that $C=C_Y$ is the {\it `implicit'} copula, determined by the joint distribution of the multivariate driving process, and that $(Z_t)$ is the {\it `true joint law'} quantile process.  The distribution function induced by the implicit copula is the Kendall distribution function, defined as follows, and the properties of which are discussed in \cite{genest1993statistical,genest2001multivariate,nelsen2003kendall,nelsen2009kendall}.  

 \begin{Definition}\label{kendalldist}
Consider Definition \ref{multidimqp} and assume $F_i(t,y)=F_{Y^{(i)}}(t,y)$ for $i=1,\ldots,m$ is the law associated to each marginal of the multivariate driving process $(\bm{Y}_t)$ so that $U_t^{(i)}:=F_{Y^{(i)}}(t,Y_t^{(i)})$ is uniformly distributed on $[0,1]$ for $i=1,\ldots,m$.  For all $t\in(0,\infty)$ define the random variable $C_t:=C_Y(t,F_{Y^{(1)}}(t,Y_t^{(1)}),\ldots,F_{Y^{(m)}}(t,Y_t^{(m)}))$.  The distribution function of $C_t$ for $t\in(0,\infty)$ is given by the Kendall distribution function $K_{C_Y}(t,v):\mathbb{R}^+\times[0,1]\rightarrow[0,1]$, where $K_{C_Y}(t,v):=\mathbb{P}(C_t\leq v)$ for all $v\in[0,1]$.  
\end{Definition}

We now recall the notion of stochastic ordering, given by Definition \ref{fosddefn}.  Additionally, we refer to \cite{nelsen2003kendall} for the definition of Kendall stochastic ordering of continuous random vectors $(\bm{Y}_t)$ and $(\bm{X}_t)$ for $t\in(0,\infty)$, that is $\bm{Y}_t\succ_K\bm{X}_t$ if and only if $K_{C_Y}(t,v)\leq K_{C_X}(t,v)$ for all $v\in[0,1]$, with strict inequality for at least one $v$, where $C_Y,\, C_X,\, K_{C_Y},\, K_{C_X}$ are given in Definition \ref{kendalldist}.

\begin{Proposition}
Let $(\bm{Y}_t)$ and $(\bm{X}_t)$ be $m$--dimensional $D$--valued, $D\subseteq\mathbb{R}$, stochastic processes with marginal laws $F_{Y^{(i)}}(t,y)$ and $F_{X^{(i)}}(t,x)$ for $i=1,\ldots,m$, respectively, and joint distribution functions $F_{\bm{Y}}(t,y_1,\ldots,y_m)$ and $F_{\bm{X}}(t,x_1,\ldots,x_m)$, respectively, omitting any vectors of parameters.  Let $Q_{\zeta_1}(u):[0,1]\rightarrow D_{\zeta}\subseteq\mathbb{R}$ and $Q_{\zeta_2}(u):[0,1]\rightarrow D_{\zeta}\subseteq\mathbb{R}$ be quantile functions so that $Z_t^{(1)}:=Q_{\zeta_1}(C_Y(t,F_{Y^{(1)}}(t,Y_t^{(1)}),\ldots,F_{Y^{(m)}}(t,Y_t^{(m)})))$ and  $Z_t^{(2)}:=Q_{\zeta_2}(C_X(t,F_{X^{(1)}}(t,X_t^{(1)}),\ldots,F_{X^{(m)}}(t,X_t^{(m)})))$ are multidimensional quantile processes, given by Definition \ref{multidimqp}, where $C_Y$ and $C_X$ are the implicit copulas determined by $F_{\bm{Y}}$ and $F_{\bm{X}}$, respectively.  Assume $\bm{Y}_t\succ_K\bm{X}_t$ on $D$ for all $t\in(0,\infty)$, then $Z_t^{(1)}\succsim_{FOSD} Z_t^{(2)}$ on $D_{\zeta}$ for all $t\in(0,\infty)$ if and only if $Q_{\zeta_1}(u)\geq Q_{\zeta_2}(u)$ for all $u\in[0,1]$.
\end{Proposition}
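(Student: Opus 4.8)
The plan is to recognise this statement as an instance of the univariate stochastic--ordering result, Proposition~\ref{fosdprop}, with the one--dimensional driver replaced by the Kendall random variable attached to each multivariate driver. Concretely, I would introduce the $[0,1]$--valued random variables $C_t^Y:=C_Y(t,F_{Y^{(1)}}(t,Y_t^{(1)}),\ldots,F_{Y^{(m)}}(t,Y_t^{(m)}))$ and $C_t^X:=C_X(t,F_{X^{(1)}}(t,X_t^{(1)}),\ldots,F_{X^{(m)}}(t,X_t^{(m)}))$. By Definition~\ref{kendalldist} their laws are the Kendall distribution functions, $\mathbb{P}(C_t^Y\le v)=K_{C_Y}(t,v)$ and $\mathbb{P}(C_t^X\le v)=K_{C_X}(t,v)$, and by Definition~\ref{multidimqp} one has $Z_t^{(1)}=Q_{\zeta_1}(C_t^Y)$ and $Z_t^{(2)}=Q_{\zeta_2}(C_t^X)$. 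By the definition of Kendall ordering recalled before the statement, the hypothesis $\bm{Y}_t\succ_K\bm{X}_t$ is precisely $K_{C_Y}(t,v)\le K_{C_X}(t,v)$ for all $v\in[0,1]$ with strict inequality for at least one $v$, i.e. $C_t^Y\succsim_{FOSD}C_t^X$ on $[0,1]$ for every $t\in(0,\infty)$. We are thus in the situation of Definition~\ref{processdrivendefn} and Proposition~\ref{fosdprop}, with drivers $C_t^Y,C_t^X$, quantile maps $Q_{\zeta_1},Q_{\zeta_2}$, and with the distribution functions inside the composite maps taken to be the identity on $[0,1]$ (a ``false--law'' construction, since the identity is not the law of $C_t^Y$ or of $C_t^X$).

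With this identification in hand, I would make the marginal laws of the quantile processes explicit via the generalised--inverse duality $Q_{\zeta_i}(v)\le z\iff v\le F_{\zeta_i}(z)$ coming from \eqref{geninverse}:
\begin{equation*}
F_{Z^{(1)}}(t,z)=\mathbb{P}\bigl(C_t^Y\le F_{\zeta_1}(z)\bigr)=K_{C_Y}\bigl(t,F_{\zeta_1}(z)\bigr),\qquad F_{Z^{(2)}}(t,z)=K_{C_X}\bigl(t,F_{\zeta_2}(z)\bigr).
\end{equation*}
For the ``if'' direction, $Q_{\zeta_1}(u)\ge Q_{\zeta_2}(u)$ on $[0,1]$ is equivalent to $F_{\zeta_1}(z)\le F_{\zeta_2}(z)$ on $D_\zeta$, and combining the monotonicity of $v\mapsto K_{C_Y}(t,v)$ with the Kendall inequality gives
\begin{equation*}
F_{Z^{(1)}}(t,z)=K_{C_Y}\bigl(t,F_{\zeta_1}(z)\bigr)\le K_{C_Y}\bigl(t,F_{\zeta_2}(z)\bigr)\le K_{C_X}\bigl(t,F_{\zeta_2}(z)\bigr)=F_{Z^{(2)}}(t,z)
\end{equation*}
for all $z\in D_\zeta$, the strict inequality at one point being inherited from the strict Kendall inequality by evaluating at a $z$ with $F_{\zeta_2}(z)$ landing where $K_{C_Y}(t,\cdot)<K_{C_X}(t,\cdot)$. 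For ``only if'' I would argue by contraposition exactly as in the proof of Proposition~\ref{fosdprop}, now with $K_{C_Y}(t,\cdot),K_{C_X}(t,\cdot)$ playing the role of the driver laws: from $F_{Z^{(1)}}(t,Q_{\zeta_1}(v))=K_{C_Y}(t,v)\le K_{C_X}(t,v)=F_{Z^{(2)}}(t,Q_{\zeta_2}(v))$, if $Q_{\zeta_1}(v^{*})<Q_{\zeta_2}(v^{*})$ at a level $v^{*}$ where the two Kendall curves agree, the monotonicity of $F_{Z^{(1)}}(t,\cdot)$ forces $F_{Z^{(1)}}(t,Q_{\zeta_2}(v^{*}))\ge F_{Z^{(2)}}(t,Q_{\zeta_2}(v^{*}))$, strictly just above $Q_{\zeta_1}(v^{*})$, contradicting $F_{Z^{(1)}}\le F_{Z^{(2)}}$ throughout $D_\zeta$; so FOSD on the whole of $D_\zeta$ forces $Q_{\zeta_1}\ge Q_{\zeta_2}$ on all of $[0,1]$.

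The step I expect to be the main obstacle is the bookkeeping around generalised inverses and atoms: since $Q_{\zeta_1},Q_{\zeta_2}$ are assumed only continuous and $K_{C_Y}(t,\cdot),K_{C_X}(t,\cdot)$ need not be strictly increasing, the identities $F_{Z^{(i)}}(t,Q_{\zeta_i}(v))=K_{C_\bullet}(t,v)$, the equivalence ``$Q_{\zeta_1}\ge Q_{\zeta_2}\iff F_{\zeta_1}\le F_{\zeta_2}$'', and the selection of the comparison level $v^{*}$ all need to be handled with the conventions of \eqref{geninverse}. The regularity of Kendall distribution functions for continuous copulas (see the references cited after Definition~\ref{kendalldist}) is what makes these identifications valid, and it is also what pins the relevant lower endpoint at $0$, so that the characterisation is on the full interval $[0,1]$ rather than on the sub--interval that appears in the general form of Proposition~\ref{fosdprop}.
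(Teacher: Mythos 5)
Your proposal is correct and follows essentially the same route as the paper: identify $Z_t^{(i)}$ as transforms of the Kendall random variables, observe that the Kendall ordering is exactly FOSD of those drivers, establish the identity linking $K_{C_Y},K_{C_X}$ to the laws of $Z^{(1)},Z^{(2)}$ composed with $Q_{\zeta_1},Q_{\zeta_2}$, and then fall back on the argument of Proposition~\ref{fosdprop}. The only difference is presentational — you write the laws as $F_{Z^{(i)}}(t,z)=K_{\cdot}(t,F_{\zeta_i}(z))$ and give the ``if'' direction as an explicit two-step inequality chain, whereas the paper states $K_{C_Y}(t,v)=F_{Z^{(1)}}(t,Q_{\zeta_1}(v))$ and defers wholesale to the argument following Eq.~(\ref{zdistinequality-y}).
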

\begin{proof}
The quantile processes are constructed as $Z_t^{(1)}=Q_{\zeta_1}(C_t^{(1)})$ and $Z_t^{(2)}=Q_{\zeta_2}(C_t^{(2)})$, where $C_t^{(1)}:=C_Y(t,F_{Y^{(1)}}(t,Y_t^{(1)}),\ldots,F_{Y^{(m)}}(t,Y_t^{(m)}))$ and $C_t^{(2)}:=C_X(t,F_{X^{(1)}}(t,X_t^{(1)}),\ldots,F_{X^{(m)}}(t,X_t^{(m)}))$ for all $t\in(0,\infty)$.  It follows that
\begin{equation}\begin{split}
    K_{C_Y}(t,v)=\mathbb{P}\left(C_t^{(1)}\leq v\right) = \mathbb{P}\left(F_{\zeta_1}\left(Z_t^{(1)}\right)\leq v \right)
    =\mathbb{P}\left(Z_t^{(1)}\leq Q_{\zeta_1}\left(v\right) \right)= F_{Z^{(1)}}\left(t,Q_{\zeta_1}(v)\right)
\end{split}\end{equation}
and $K_{C_X}(t,v)=F_{Z^{(2)}}(t,Q_{\zeta_2}(v))$ for all $v\in[0,1]$ and $t\in(0,\infty)$ with strict inequality for at least one $v\in[0,1]$.  As such, by assumption, $F_{Z^{(1)}}(t,Q_{\zeta_1}(v))\leq   F_{Z^{(2)}}(t,Q_{\zeta_2}(v))$
for all $v\in[0,1],\,t\in(0,\infty)$ and the rest of the proof follows by the same argument in the proof of Proposition \ref{fosdprop}, following Eq. (\ref{zdistinequality-y}).
\end{proof}

Definition \ref{multidimqp} allows for quantile processes to be constructed via the transformation of multiple driving risk factors.  The dependence between each risk process is described by the choice of copula, which determines how the quantile process responds to each risk factor and thus how inter--risk dependence structures influence both the distorted pushforward measure and the evaluation of risks under the QPVP.  The choice of quantile function $Q_{\zeta}$ and distribution functions $F_j$, $j=1,\ldots,m$, are chosen to determine the statistical properties and behaviour of the quantile process, e.g., $\zeta=T_{gh}$ if one wishes to parameterise skewness and kurtosis.  We recall that time--dependent or stochastic parameters may also be considered. Since the quantile process given by Definition \ref{multidimqp} is univariate, it follows that the multidimensional QPVP is given by Definition \ref{qppricingprincipledefn} where $(Y_t)$ is a univariate risk and $\mathbb{P}^Z$ is the law of the multidimensional quantile process, given by Definition \ref{distortedmeasuresdefn}, i.e.,
\begin{equation}  \label{multidimqplaw} \mathbb{P}_t^Z(A):=\mathbb{P}\left\{Z_t \in A \right\}=\int_{\left\{\omega\in\Omega\,:\,Q_{\zeta}\left(C\left(t,F_1\left(t,Y_t^{(1)}(\omega)\right),\ldots,F_m\left(t,Y_t^{(m)}(\omega)\right)\right)\right)\in A\right\}}\rd \mathbb{P}(\omega)
\end{equation}
for all $A\in\mathscr{F}_t$ and $t\in(0,\infty)$.
.  When the {\it `true joint law'} quantile process is considered, for any $z\in D_\zeta$, it holds that $\mathbb{P}^Z_t((\inf D_{\zeta},z])=K_C(t,F_{\zeta}(z))$ for all $t\in(0,\infty)$.
\subsection{Calculation of insurance premiums under the multidimensional QPVP}
Here, we apply the multidimensional QPVP to determine the appropriate premium to charge for some risk process $(Y_t)$, in the context of property and casualty (P\&C) insurance.  The reader may refer to \cite{young2014premium} for a discussion on the properties, and examples, of premium principles considered in the literature.  It is important that the premium cover the expected loss of the risk, as well as some risk loading to account for uncertainty and involved costs.  The risk loading induced by the QPVP can be derived from Corollary \ref{prop1}.  The multidimensional premium QPVP for an excess of loss insurance policy (or layer contract) is defined as follows.

\begin{Definition}\label{layercontractpremium}
Consider the multidimensional quantile process given in Definition \ref{multidimqp} where $(\bm{Y}_t)$ is an $m$--dimensional, positive--valued risk process.  Let $\mathbb{P}^Z$ be the distorted measure induced by $(Z_t)$, given by Equation \ref{multidimqplaw}.  Recall Proposition \ref{stochppconditionalex} and let $\widetilde{\mathbb{P}}=\mathbb{P}^Z$ and $V(y)=(y-a)\mathbb{1}_{\{a\leq y<b \}}+(b-a)\mathbb{1}_{\{y\geq b \}}$ for $0<a<b<\infty$.  The function $V(y)$ defines the payoff of a layer contract covering losses between the predefined attachment point $a$ and exhaustion point $b$, with maximum payout $b-a$.  Then for a univariate risk process $(Y_t)$ and money--market process $(B_t)$, the premium QPVP is given by $\Pi_{t,u}^{\mathbb{P}^Z}:=B_t\mathbb{E}^{\mathbb{P}^Z}[V(Y_u)/B_u \vert\mathscr{F}_t]$ for $0\leq t<u\leq T$.
\end{Definition}

The multidimensional premium QPVP for a limited stop--loss contract can be computed using Definition \ref{layercontractpremium} with $V(y)=\min\{(y-a)^+,b\}$ for $a,b>0$.

In what follows, we consider the risk process for which we wish to determine a premium for to be given by $Y_t=Y_t^{(1)}$ where $\bm{Y}_t=(Y_t^{(1)},\ldots,Y_t^{(m)})$.  Intuitively, the multidimensional QPVP is analogous to that given by Definition \ref{qppricingprincipledefn} in the univariate case, however one has a constructive, and flexible, mechanism for the incorporation of auxiliary risk factors, that is $(Y_t^{(2)},\ldots,Y_t^{(m)})$.  In this context, the Radon--Nikodym derivative in Eq. (\ref{conditionalrnderiv}), where $\mathbb{P}_{t\vert s}^Z$ is the restriction of the measure $\mathbb{P}^Z$ in Eq. (\ref{multidimqplaw}) to the sub--$\sigma$--algebra $\mathscr{F}_s$, is given by
\begin{equation}
    \rho_{t\vert s}(\omega)=\dfrac{\rd K_C\left(t,Q_{\zeta}\left(Y_t(\omega)\right)\vert\mathscr{F}_s\right)}{\rd F_Y^{\mathbb{P}}\left(t,Y_t(\omega)\vert\mathscr{F}_s\right)}
\end{equation}
for all $0<s<t<\infty$ and all $\omega\in\Omega$.
\\
\\
\noindent{\bf\Large Acknowledgments}
\\
\\
The authors acknowledge financial support by UK Research \& Innovation via EPSRC CASE project award 1939295. We thank the AIFMRM Research Seminar, University of Cape Town, South Africa (August 2021) and the Workshop Junior Female Researchers in Probability, Berlin, Germany (October 2021) for an opportunity to present aspects of this work.
\bibliography{main}
\newpage
\noindent{\bf\Large Appendix}
\begin{appendix}
\section{Conditions for c\`adl\`ag quantile processes}\label{cadlagproofappendix}
\subsection{Proof of Proposition \ref{cadlagprop}}

\begin{proof}
In the following proof we drop notational dependence of distribution and quantile functions on the parameters.  Since $(Y_t)$ has c{\`a}dl{\`a}g paths, for $s\in(0,\infty)$ and all $\omega\in\Omega$, $\lim_{t\downarrow s}Y(t,\omega)=Y(s,\omega)$ and the left limit $\lim_{t\uparrow
s} Y(t,\omega)$ exists.  By the definition of a distribution function, the map $t\mapsto F(t,y)$ will be c\`adl\`ag for all $t\in(0,\infty)$.  By the definition of a generalised inverse, see \cite{geninverse}, the quantile function $Q_{\zeta}(u)$ will be c\'agl\'ad (when $F_{\zeta}:=Q_{\zeta}^{-}$ is c\`adl\`ag), and continuous when $F_{\zeta}$ is continuous.  If $F(t,y)$ is a c\`adl\`ag function in both arguments, and $Q_{\zeta}(u)$ is a continuous function, it follows that
    \begin{equation}\label{cadlaglimit1}
\lim_{t\downarrow s}Z(t,\omega)=\lim_{t\downarrow s}Q_{\zeta}(F(t,Y(t,\omega))=Q_{\zeta}(F(s,\lim_{t\downarrow s}Y(t,\omega)))=Q_{\zeta}(F(s,Y(s,\omega)))=Z(s,\omega),
\end{equation}
and the limit
\begin{equation}\label{cadlaglimit2}
\lim_{t\uparrow s}Z(t,\omega)=\lim_{t\uparrow s}Q_{\zeta}(F(t,Y(t,\omega))=Q_{\zeta}(F(s,\lim_{t\uparrow s}Y(t,\omega)))
\end{equation}
exists by the c\`adl\`ag property of that paths of $(Y_t)$, so $Z_t=Z(t,\omega)$ has c\`adl\`ag paths for all $\omega\in\Omega$.  We loosen the restrictions of continuity and show, in turn, that if either $t\mapsto F(t,y)$ is not c\`adl\`ag for all $y\in\mathbb{R}$, or $Q_{\zeta}(u)$ is not continuous, $(Z_t)$ will not have c\`adl\`ag paths and thus the proposition holds by contradiction.

First, assume $Q_{\zeta}(u)$ is continuous and $t\mapsto F(t,y)$ is not a c\`adl\`ag map.  Then for each $\omega\in\Omega$,
\begin{equation}
    \lim_{t\downarrow s}Z(t,\omega) = \lim_{t\downarrow s}Q_{\zeta}\left(F\left(t,Y(t,\omega)\right)\right) = Q_{\zeta}\left(\lim_{t\downarrow s}F\left(t,Y(t,\omega)\right) \right) \neq Z(s,\omega)
\end{equation}
as $F(t,y)$ is not right--continuous and so
\begin{equation}
    \lim_{t\downarrow s}F(t,Y(t,\omega))\neq F(s,\lim_{t\downarrow s}Y(t,\omega))=F(s,Y(s,\omega)).
\end{equation}
It follows that $Z_t=Z(t,\omega)$ will not have right--continuous paths.  Now, assume $t\mapsto F(t,y)$ is c{\`a}dl{\`a}g and $Q_{\zeta}(u)$ is c\'agl\'ad but not continuous.  We have
\begin{equation}
    \lim_{t\downarrow s}Q_{\zeta}\left(F\left(t,Y(t,\omega)\right) \right)\neq Q_{\zeta}\left(\lim_{t\downarrow s}F\left(t,Y(t,\omega)\right) \right) = Q_{\zeta}\left(F\left(s,\lim_{t\downarrow s}Y(t,\omega)\right) \right)
\end{equation}
and so $Z_t=Z(t,\omega)$ can not have right--continuous paths.  It follows that the limit in Eq. (\ref{cadlaglimit1}) holds true, and that in Eq. (\ref{cadlaglimit2}) exists if and only if $F(t,y)$ is c\`adl\`ag in both arguments, and $Q_{\zeta}(u)$ is a continuous quantile function.  We only specify that $t\mapsto F(t,y)$ for all $y\in\mathbb{R}$ is c\`adl\`ag as $y\mapsto F(t,y)$ for all $t\in(0,\infty)$ will be c\`adl\`ag by the definition of a distribution function.
\end{proof}
\section{Examples}\label{appendixpivotexamples}

\subsection{Quantile process supported by a marginally Gaussian pivot driver} 
Consider a diffusion $(Y_t)$ such that at each $t\in(0,\infty)$, the process is normally distributed with some time--dependent mean and standard deviation parameters $\mu_Y(t)\in\mathbb{R}$ and $\sigma_Y(t)\in\mathbb{R}^+$, respectively.  The process defined at each time by $\widetilde{Y}_t=(Y_t-\mu_Y(t))/\sigma_Y(t)$ will be distributed according to standard normal distribution, i.e., $\widetilde{\bm{\theta}}=(0,1)$, and so is a pivotal quantity for $\bm{\theta}_Y(t)=(\mu_Y(t),\sigma_Y(t))$.  If we consider a normal distribution function $F$ with $\bm{\theta}(t)=(m(t),v(t))$ where $m(t),v(t)$ are the mean and variance parameters, respectively, the pivotal formulation of the quantile process construction in Definition \ref{processdrivendefn} is given by 
\begin{equation}
Z_t=Q_{\zeta}\left(F(t,\widetilde{Y}_t);\bm{\xi} \right) = Q_{\zeta}\left(\frac{1}{2}\left[1+\erf\left(\dfrac{\widetilde{Y}_t-m(t)}{\sqrt{2v(t)}} \right) \right] ;\bm{\xi}\right)
\end{equation}

In the Tukey--$g$ transform case with time--dependent parameters $\bm{\xi}(t)=(A(t),g(t),B(t))$, we have \begin{equation}\begin{split}\label{gtransformnormalpivotQP}
Z_t &= A(t) + \frac{B(t)}{g(t)}\left[\expp\left(g(t)\dfrac{\widetilde{Y}_t-m(t)}{\sqrt{v(t)}} \right)-1 \right]
\\
&=A(t) + \frac{B(t)}{g(t)}\left[\expp\left(\dfrac{g(t)}{\sigma_Y(t)\sqrt{v(t)}}\left[Y_t-\left(\mu_Y(t)+\sigma_Y(t)m(t) \right)\right] \right)-1 \right]
\end{split}\end{equation}
for each $t\in(0,\infty)$.  We may also write the quantile process in Eq. (\ref{gtransformnormalpivotQP}) as
\begin{equation}\label{gtransformnormalpivotQP-standardised}
Z_t = A^{*}(t)+ \dfrac{B^{*}(t)}{\widetilde{g}(t)}\left[\expp\left(g^{*}(t)Y_t \right)-1\right]
\end{equation}
where
\begin{align}\label{atilde}
A^{*}(t) &= A(t) + \dfrac{B(t)}{g(t)}\left[\expp\left(-\dfrac{g(t)\left(\mu_Y(t)+\sigma_Y(t)m(t)\right)}{\sigma_Y(t)\sqrt{v(t)}} \right)-1 \right]
\\ \label{btilde}
B^{*}(t) &=\dfrac{B(t)}{\sigma_Y(t)\sqrt{v(t)}}
\expp\left(-\dfrac{g(t)\left(\mu_Y(t)+\sigma_Y(t)m(t)\right)}{\sigma_Y(t)\sqrt{v(t)}} \right)
\\\label{gtilde}
g^{*}(t) &= \dfrac{g(t)}{\sigma_Y(t)\sqrt{v(t)}}.
\end{align}
The quantile process in Eq. (\ref{gtransformnormalpivotQP-standardised}) has the form of a standard Tukey--$g$ transform with time--dependent parameters given in Eq. (\ref{atilde})--(\ref{gtilde}), relative to the base $(Y_t)$.  Thus, by definition of the Tukey--$g$ transform, all skewness introduced by the transform will be relative to the base process $(Y_t)$, giving the parameters direct interpretability.   

The first four standardised moments of $Z_t$ at each $t\in(0,\infty)$ are given, in terms of the parameters of the construction composite map, by
\begin{align*}
\mu_Z &=A(t)+ \dfrac{B(t)}{g(t)}\left[\expp\left(-\dfrac{m(t)g(t)}{\sqrt{v(t)}}+\dfrac{g^2(t)}{2v(t)} \right)-1\right],
\\
\sigma^2_Z &= \dfrac{B^2(t)}{g^2(t)} \left(\expp\left(\dfrac{g^2(t)}{v(t)}\right)-1 \right)\expp\left(-\dfrac{2g(t)m(t)}{\sqrt{v(t)}}+\dfrac{g^2(t)}{v(t)} \right)
\\
\widetilde{\mu}_{3,Z} &= \left(\expp\left(\dfrac{g^2(t)}{v(t)}\right)+2 \right)\sqrt{\expp\left(\dfrac{g^2(t)}{v(t)}\right)-1 }
\\
\widetilde{\mu}_{4,Z} &= \expp\left(4\dfrac{g^2(t)}{v(t)} \right) + 2\expp\left(3\dfrac{g^2(t)}{v(t)}\right) + 3\expp\left(2\dfrac{g^2(t)}{v(t)}\right)-6.
\end{align*}
where $\widetilde{\mu}_{3,Z}$ and $\widetilde{\mu}_{4,Z}$ and the third and fourth central moments, respectively.  The skewness and kurtosis of $Z_t$ are given by $\widetilde{\mu}_{3,Z}/\sigma_Z^3$ and $\widetilde{\mu}_{4,Z}/\sigma_Z^4$, respectively.

\subsection{Quantile process supported by a pivot inhomogeneous Poisson process}
Here, we refer to the homogeneous or inhomogeneous Poisson process $\mathscr{N}$ on the positive real line with intensity parameter $\lambda\in\mathbb{R}^+$ or intensity function $\lambda(x)\in[0,\infty)$ for $x\in[0,\infty)$, respectively, by its associated Poisson random measure $N_t:=N([0,t))=\#\{\mathscr{N}\cap[0,t)\}$ for any $t\in[0,\infty)$.

Consider a one--dimensional inhomogeneous Poisson process. We assume that the Poisson mapping theorem applies so that the process can be transformed into a homogeneous Poisson process, which serves as the pivotal quantity at each $t\in(0,\infty)$.  The Poisson mapping theorem, as given in \cite{kingmanpp}, states that for $\mathscr{N}$ a Poisson process with state space $S$ and $\sigma$--finite mean measure $\mu$,  and $\psi:S\rightarrow S^\prime$ a measurable map where $S^\prime$ is another locally compact, separable metric space with Borel $\sigma$--algebra $\mathscr{B}^{\prime}$, if the measure $\mu^\prime(\cdot)=\mu(\psi^{-}(\cdot))$ is non--atomic, then $\psi(\mathscr{N})=\{\psi(\zeta)\hspace{1mm}:\hspace{1mm}\zeta\in\mathscr{N}\}$ is a Poisson process with state space $S^\prime$ and mean measure $\mu^\prime$.

Let $\mathscr{N}$ be an inhomogeneous Poisson process on $S=[0,\infty)$ with intensity function $\lambda(t)\in[0,\infty)$ for $t\in[0,\infty)$ so that its mean measure is defined by $\mu(B)=\int_B\lambda(x)\rd x$ for any $B\in\mathscr{B}$.  Let $\psi$ be a measurable map, and define by $\widetilde{\mathscr{N}}:=\psi(\mathscr{N})$ the Poisson process on state space $S^\prime$ with mean measure $\mu^{\prime}(B)=\mu(\psi^{-}(B))$ for any $B\in\mathscr{B}^\prime$.  Consider some $\widetilde{\lambda}\in\mathbb{R}^+$. If $\psi(x)=\mu^{-}(\widetilde{\lambda} x)$, the Poisson process $\widetilde{\mathscr{N}}$ will be a homogeneous Poisson process with intensity parameter $\widetilde{\lambda}$, and so is a pivotal quantity for $\lambda(t)$.  Now, if we consider a homogeneous Poisson distribution $F_N$ with intensity parameter $\theta\in\mathbb{R}^+$, $\theta\neq\widetilde{\lambda}$, the pivotal formulation of the discrete quantile process construction in Definition \ref{processdrivendefn} is given by
\begin{equation}\label{discretepivotqp}
Z_t = Q_{\zeta}\left(F_N\left(\widetilde{N}_t ;\theta\right) ;\bm{\xi}\right) = Q_{\zeta}\left(\sum_{k=1}^{\widetilde{N}_t}\dfrac{(\theta t)^k\e^{-\theta t}}{k!} ;\bm{\xi}\right)
\end{equation}
for each $t\in(0,\infty)$ and where $\widetilde{N}_t:=\widetilde{N}([0,t))=\#\{\widetilde{\mathscr{N}}\cap[0,t) \}$.  When $\theta=\tilde{\lambda}$, we refer to the quantile process in Eq. (\ref{discretepivotqp}) as the discrete canonical quantile process.

\subsection{Tukey--$g$ quantile process driven by a variance-gamma process}
Let $(\gamma_t)_{t\geq 0}$ be a gamma process with mean rate $\mu t\in\mathbb{R}$ and variance rate $\sigma^2 t\in\mathbb{R}^+$, that is the process with independent increments such that $\gamma_0=0$ and the random variable $\gamma_t$ has gamma distribution with mean $\mu t$ and variance $\sigma t$. We may also define a scale parameter by $\kappa=\mu/\sigma^2$ that inversely controls the jump size, and a ``standardised growth rate'' parameter, $m=\mu^2/\sigma^2$.  The density function of $\gamma_t$ is given by
\begin{equation*}
f_{\gamma}(t,x)=\mathbb{1}_{\{x>0\}}\dfrac{\kappa^{-mt}x^{mt-1}\e^{-x/\kappa}}{\Gamma[mt]}.
\end{equation*}
The gamma process is a pure jump process, and as such may be approximated as a compound Poisson process.  The VG process, first presented in \cite{madan1990variancegamma} and considered in an option pricing setting in \cite{madan1991option, madan1998variancegamma}, can be constructed in the following two ways.  First, as a Brownian motion with drift, subjected to a random time change which follows a gamma process $(\gamma_t^{(1)})_{t\geq0}$ with mean 1 and variance $\nu$, that is
\begin{equation}
Y_t=\widetilde{\mu}\gamma_t^{(1)}+\widetilde{\sigma}W_{\gamma_t^{(1)}}
\end{equation}
for $\widetilde{\mu}\in\mathbb{R}$, $\widetilde{\sigma}\in\mathbb{R}^+$.  Secondly, since the VG process is of finite variation, one may consider it as the difference of two independent gamma processes.  Let $(\gamma_t^{(2)})_{t\geq0}$ and $(\gamma_t^{(3)})_{t\geq0}$ be independent gamma processes with mean parameters $\mu_2,\mu_3$ and variance parameters $\mu_2^2\sigma$ and $\mu_3^2\sigma$, respectively.  
We can define
\begin{equation}
Y_t=\gamma_t^{(2)}-\gamma_t^{(3)}
\end{equation}
where $\mu_2=0.5\sqrt{\widetilde{\mu}^2+2\widetilde{\sigma}^2/\sigma}+0.5\widetilde{\mu}$ and $\mu_3=0.5\sqrt{\widetilde{\mu}^2+2\widetilde{\sigma}^2/\sigma}-0.5\widetilde{\mu}$.  The marginal density of $Y_t$ is given by
\begin{equation}\label{vgmarginaldensity}
f_Y(t,y)=\int_0^\infty \dfrac{1}{\widetilde{\sigma}\sqrt{2\pi u}}\expp\left(-\dfrac{\left(y-\widetilde{\mu}u \right)^2}{2\widetilde{\sigma}^2u} \right)\dfrac{u^{t/\nu-1}\expp\left(-u/\nu \right)}{\nu^{t/\nu}\Gamma[t/\nu]} \rd u
\end{equation}
for $t\in(0,\infty)$.  
We highlight that the density in Eq. (\ref{vgmarginaldensity}) has two more parameters than the normal or lognormal (namely, GBM), which allow for skewness and kurtosis.  
We have
\begin{equation}
F_Y(t,y)=\int_{-\infty}^y f_Y(t,x)\rd x 
\end{equation}
so that $Z_t=Q_{\zeta}(F_Y(t,Y_t);\bm{\xi})$. In the $g$--transform case with constant skewness parameter $g\neq0$ and $A=0,B=1$, we have
\begin{equation}
Z_t =\frac{1}{g}\left[\expp\left(g\sqrt{2}\erf^{-}\left(2\int_{-\infty}^{Y_t}\int_0^\infty \dfrac{1}{\widetilde{\sigma}\sqrt{2\pi u}}\expp\left(-\dfrac{\left(y-\widetilde{\mu}u \right)^2}{2\widetilde{\sigma}^2u} \right)\dfrac{u^{t/\nu-1}\expp\left(-\frac{u}{\nu} \right)}{\nu^{t/\nu}\Gamma[\frac{t}{\nu}]} \rd u\hspace{1mm}\rd y-1 \right) \right)-1 \right].
\end{equation}
\subsection{Example: Two canonical Tukey-$g$ quantile processes with SOSD but no FOSD}
Consider the two canonical Tukey--$g$ quantile processes in Example \ref{tukeygSD-ex1} assuming, however, that the skewness parameter is state--dependent.  Let $g_2\in\mathbb{R}^+\setminus 0$ and 
\begin{equation}
g_1=g_1(z)=
    \begin{cases}
    g_1^a>g_2, & z\leq0
    \\
    g_1^b<g_2, & z>0,
    \end{cases}
\end{equation}
for all $t\in(0,\infty)$ and where $g_1^a, g_1^b\in\mathbb{R}^+\setminus0$.  It holds that $Z_t^{(1)}\not\succsim_{FOSD}Z_t^{(2)}$ on $D_{T_g}:=(-1/g_2,\infty)$.  
We have $F_{Z^{(i)}}(t,z)=0.5[1+\erf(\lnn(g_iz+1)/(g_i\sqrt{2}))]$.  It follows that if $g_1^a, g_1^b, g_2$ are such that \begin{equation}\begin{split}\label{tukeygsosdineq}
    \int_{-1/g_2}^0&\left[\erf\left(\frac{\lnn(g_2x+1)}{g_2\sqrt{2}} \right) - \erf\left(\frac{\lnn(g_1^a x+1)}{g_1^a \sqrt{2}} \right) \right]\rd x 
    \\ &\qquad \geq \int_0^\infty \left[\erf\left(\frac{\lnn(g_1^b x+1)}{g_1^b \sqrt{2}} \right)-\erf\left(\frac{\lnn(g_2x+1)}{g_2\sqrt{2}} \right) \right]\rd x,
\end{split}\end{equation}
then, by Definition \ref{fosddefn}, $Z_t^{(1)}\succsim_{SOSD}Z_t^{(2)}$ on $D_{T_g}$. For example, take $g_1^a=0.8, g_1^b=0.2, g_2=0.3$.  The left--hand integral in Eq. (\ref{tukeygsosdineq}) is equal to  0.1341347, and the right--hand is equal to 0.0660684 and the inequality is satisfied so that, here, $Z_t^{(1)}\succsim_{SOSD}Z_t^{(2)}$ on $D_{T_g}$.
\section{Stochastic ordering results}\label{sdproofappendixsection}
\subsection{Proof of Corollary \ref{fosdcorr2}}\label{fosdcorr2proof}
\begin{proof}
Since, by assumption, $Y_t^{(1)}\succsim Y_t^{(2)}$ on $D_Y$, by the definition of FOSD it holds that $F_{Y^{(1)}}(t,y)\leq F_{Y^{(2)}}(t,y)$ for all $y\in D_Y$ and $t\in(0,\infty)$ with strict inequality for at least one $y\in D_Y$.  The quantile processes are constructed as $Z_t^{(i)}=Q_{\zeta_i}( F_{Y^{(i)}}(t,Y_t^{(i)}))$ for $i=1,2$, and so we may write $Y_t^{(i)}=Q_{Y^{(i)}}(t,F_{\zeta_i}(Z_t^{(i)}) )$.  It follows that 
\begin{equation*}
\begin{split}
F_{Y^{(i)}}(t,y)&=\mathbb{P}\left(Y_t^{(i)}\leq y\right)=\mathbb{P}\left(Q_{Y^{(i)}}\left(t,F_{\zeta_i}\left(Z_t^{(i)}\right) \right)\leq y \right)
\\
&=\mathbb{P}\left(Z_t^{(i)}\leq Q_{\zeta_i}\left(F_{Y^{(i)}}\left(t,y\right)\right) \right)=F_{Z^{(i)}}\left(t,Q_{\zeta_i}\left(F_{Y^{(i)}}\left(t,y\right)\right)\right)
\end{split}\end{equation*}
for all $y\in D_Y$, $t\in(0,\infty)$ and so, by assumption,
\begin{equation}\label{zdistinequality-y-2}
    F_{Z^{(1)}}\left(t,Q_{\zeta_1}\left(F_{Y^{(1)}}\left(t,y\right)\right)\right)\leq F_{Z^{(2)}}\left(t,Q_{\zeta_2}\left(F_{Y^{(2)}}\left(t,y\right)\right)\right)
\end{equation}
for all $y\in D_Y$, $t\in(0,\infty)$.  By the same argument as the proof to Proposition \ref{fosdprop}, under the assumption that the inequality \ref{zdistinequality-y-2} holds, $F_{Z^{(1)}}(t,z)\leq F_{Z^{(2)}}(t,z)$ for all $z\in D_{\zeta}$, with strict inequality for at least one $z\in D_{\zeta}$ whenever $Q_{\zeta_1}(F_{Y^{(1)}}(t,y))\geq Q_{\zeta_2}(F_{Y^{(2)}}(t,y))$ for all $y\in D_Y$, $t\in (0,\infty)$. 

Since $u_1(y):=F_{Y^{(1)}}(t,y)\leq F_{Y^{(2)}}(t,y)=:u_2(y)$ for all $y\in D_Y$, $t\in(0,\infty)$, by the increasing property of quantile functions, it holds that
$$Q_{\zeta_1}(u_2(y))\geq Q_{\zeta_1}(u_1(y))\geq Q_{\zeta_2}(u_2(y)) $$
so that $Q_{\zeta_1}(u_2(y))\geq Q_{\zeta_2}(u_2(y))$ and that $$Q_{\zeta_1}(u_1(y))\geq Q_{\zeta_2}(u_2(y))\geq Q_{\zeta_2}(u_1(y)) $$
so that $Q_{\zeta_1}(u_1(y))\geq Q_{\zeta_2}(u_1(y)) $ for all $y\in D_Y$, $t\in(0,\infty)$.  As such, the condition $Q_{\zeta_1}(F_{Y^{(1)}}(t,y))\geq Q_{\zeta_2}(F_{Y^{(2)}}(t,y))$ reduces to $Q_{\zeta_1}(u)\geq Q_{\zeta_2}(u)$ for all $u\in[F_{Y{(1)}}(t,y_0(t)),1]$, as required.
\end{proof}
\subsection{Proof of Proposition \ref{sosdprop1}}
\begin{proof}
Since, by assumption, $Y_t^{(1)}\succsim_{SOSD}Y_t^{(2)}$, by the definition of SOSD it holds that $\int_{y_0(t)}^y[F_{Y^{(2)}}(t,x)-F_{Y^{(1)}}(t,x)]\rd x\geq 0$ for all $y\in D_Y$, $t\in(0,\infty)$ and with strict inequality for at least one $y\in D_Y$.  The quantile processes are constructed as $Z_t^{(i)}=Q_{\zeta_i}( F_i(t,Y_t^{(i)}))$ for $i=1,2$, and so we may write $Y_t^{(i)}=Q_i(t,F_{\zeta_i}(Z_t^{(i)}) )$.  It follows that 
\begin{equation*}\label{Zdistexpression}
\begin{split}
F_{Y^{(i)}}(t,y)&=\mathbb{P}\left(Y_t^{(i)}\leq y\right)=\mathbb{P}\left(Q_i\left(t,F_{\zeta_i}\left(Z_t^{(i)}\right) \right)\leq y \right)
\\
&=\mathbb{P}\left(Z_t^{(i)}\leq Q_{\zeta_i}\left(F_i\left(t,y\right)\right) \right)=F_{Z^{(i)}}\left(t,Q_{\zeta_i}\left(F_i\left(t,y\right)\right)\right)
\end{split}\end{equation*}
for all $y\in D_Y$, $t\in(0,\infty)$ and so, by assumption,
\begin{equation}\label{sosdineq1-z}
    \int_{y_0(t)}^y\left[F_{Z^{(2)}}\left(t,Q_{\zeta_2}\left(F_2\left(t,x\right) \right)\right)-F_{Z^{(1)}}\left(t,Q_{\zeta_1}\left(F_1\left(t,x\right) \right)\right) \right]\rd x\geq 0
\end{equation}
for all $y\in D_Y$, $t\in(0,\infty)$, with strict inequality for at least one $y\in D_Y$.  By making the changes of variables $v_i:=Q_{\zeta_i}(F_i(t,x))$ for $i=1,2$ so that $\rd v_i/\rd x=f_i(t,x)/f_{\zeta_i}(Q_{\zeta_i}(F_i(t,x)))$, Eq. (\ref{sosdineq1-z}) can be rewritten as 
\begin{equation}\begin{split}
&\int_{Q_{\zeta_2}(F_2(t,y_0(t))}^{Q_{\zeta_2}(F_2(t,y))}F_{Z^{(2)}}\left(t,v_2\right)\dfrac{f_{\zeta_2}\left(v_2\right)}{f_2\left(t,Q_2\left(t,F_{\zeta_2}\left(v_2\right) \right) \right)}\rd v_2 
\\
&\qquad\qquad- \int_{Q_{\zeta_1}(F_1(t,y_0(t))}^{Q_{\zeta_1}(F_1(t,y))}F_{Z^{(1)}}\left(t,v_1\right)\dfrac{f_{\zeta_1}\left(v_1\right)}{f_1\left(t,Q_1\left(t,F_{\zeta_1}\left(v_1\right) \right) \right)}\rd v_1\geq 0.
\end{split}\end{equation}

If $Q_{\zeta_1}(F_1(t,y))\geq Q_{\zeta_2}(F_2(t,y))$ for all $y\in D_Y$ and $t\in(0,\infty)$, then by Proposition \ref{fosdprop},  $Z_t^{(1)}\succsim_{FOSD}Z_t^{(2)}$.  We consider the case where $Q_{\zeta_1}(F_1(t,y))\leq Q_{\zeta_2}(F_2(t,y))$ for at least one $y\in D_Y$.  
It holds that 
\begin{equation}
    \begin{split}
&0\leq \int_{Q_{\zeta_2}(F_2(t,y_0(t))}^{Q_{\zeta_2}(F_2(t,y))}F_{Z^{(2)}}\left(t,v_2\right)\dfrac{f_{\zeta_2}\left(v_2\right)}{f_2\left(t,Q_2\left(t,F_{\zeta_2}\left(v_2\right) \right) \right)}\rd v_2 
\\
&\qquad\qquad- \int_{Q_{\zeta_1}(F_1(t,y_0(t))}^{Q_{\zeta_1}(F_1(t,y))}F_{Z^{(1)}}\left(t,v_1\right)\dfrac{f_{\zeta_1}\left(v_1\right)}{f_1\left(t,Q_1\left(t,F_{\zeta_1}\left(v_1\right) \right) \right)}\rd v_1  
\\[10pt]
&\leq \int_{\underset{i}{\min}Q_{\zeta_i}(F_i(t,y_0(t))}^{\underset{i}{\max}Q_{\zeta_i}(F_i(t,y))} \left[F_{Z^{(2)}}(t,v)\dfrac{f_{\zeta_2}\left(v\right)}{f_2\left(t,Q_2\left(t,F_{\zeta_2}\left(v\right) \right) \right)} -F_{Z^{(1)}}(t,v)\dfrac{f_{\zeta_1}(v)}{f_1\left(t,Q_1\left(t,F_{\zeta_1}(v)\right)\right)}    \right]\rd v
\\
&\qquad\qquad +\int_{\underset{i}{\min}Q_{\zeta_i}(F_i(t,y_0(t)))}^{Q_{\zeta_1}(F_1(t,y_0(t)))}F_{Z^{(1)}}(t,v)\dfrac{f_{\zeta_1}(v)}{f_1\left(t,Q_1\left(t,F_{\zeta_1}(v)\right)\right)}\rd v
\\
&\qquad\qquad\qquad+\int_{Q_{\zeta_1}(F_1(t,y))}^{\underset{i}{\max}Q_{\zeta_i}(F_i(t,y))} F_{Z^{(1)}}(t,v)\dfrac{f_{\zeta_1}(v)}{f_1\left(t,Q_1\left(t,F_{\zeta_1}(v)\right)\right)}\rd v  
\\[10pt]
&\leq \int_{\underset{i}{\min}Q_{\zeta_i}(F_i(t,y_0(t))}^{\underset{i}{\max}Q_{\zeta_i}(F_i(t,y))} \left[F_{Z^{(2)}}(t,v)\dfrac{f_{\zeta_2}\left(v\right)}{f_2\left(t,Q_2\left(t,F_{\zeta_2}\left(v\right) \right) \right)} -F_{Z^{(1)}}(t,v)\dfrac{f_{\zeta_1}(v)}{f_1\left(t,Q_1\left(t,F_{\zeta_1}(v)\right)\right)}    \right]\rd v
\end{split}
\end{equation}
where the last inequality follows from the fact that $F_{Z^{(i)}}(t,z),f_{\zeta_i}(z), f_i(t,Q_i(F_{\zeta_i}(z))) \geq 0$ for all $z\in D_{\zeta}$, $t\in(0,\infty)$. For all $z\in D_{\zeta}:=[z_0(t),\max\{\overline{z}_1,\overline{z}_2\}]$, define
\begin{align}\label{Az}
    A(z)&:= \dfrac{f_{\zeta_2}(z)}{f_2\left(t,Q_2\left(t,F_{\zeta_2}(z)\right)\right)}
    \\ \label{Bz}
    B(z)&:= \dfrac{f_{\zeta_1}(z)}{f_1\left(t,Q_1\left(t,F_{\zeta_1}(z)\right)\right)}.
\end{align}
It holds that 
\begin{equation*}
    A(z)F_{Z^{(2)}}(t,z)-B(z)F_{Z^{(1)}}(t,z)\leq F_{Z^{(2)}}(t,z)-F_{Z^{(1)}}(t,z)
\end{equation*}
whenever
\begin{enumerate}[(i)]
    \item $A(z)\leq 1$ and $B(z)\geq 1$,
    \item $A(z),B(z)\geq 1$ and $
       A(z)F_{Z^{(2)}}(t,z)-F_{Z^{(2)}}(t,z) \leq B(z)F_{Z^{(1)}}(t,z)-F_{Z^{(1)}}(t,z)$, \\ i.e., $F_{Z^{(2)}}(t,z)/F_{Z^{(1)}}(t,z)\leq (B(z)-1)/(A(z)-1)$,
     \item $A(z)\geq 1$, $B(z)\leq 1$ and $
       F_{Z^{(2)}}(t,z)-A(z)F_{Z^{(2)}}(t,z) \geq F_{Z^{(1)}}(t,z)-B(z)F_{Z^{(1)}}(t,z)$, \\
       i.e., $F_{Z^{(2)}}(t,z)/F_{Z^{(1)}}(t,z)\geq (1-B(z))/(1-A(z))$
\end{enumerate}
for all $z\in D_{\zeta}$, $t\in(0,\infty)$.  As such, whenever either of conditions (i)--(iii) hold with $A(z),B(z)$ given by Eq. (\ref{Az}) and (\ref{Bz}), respectively, it follows that 
\begin{equation}\label{sosdineq1-z-3}
    \begin{split}
0&\leq\int_{z_0(t)}^{z} \left[F_{Z^{(2)}}(t,v)\dfrac{f_{\zeta_2}\left(v\right)}{f_2\left(t,Q_2\left(t,F_{\zeta_2}\left(v\right) \right) \right)} -F_{Z^{(1)}}(t,v)\dfrac{f_{\zeta_1}(v)}{f_1\left(t,Q_1\left(t,F_{\zeta_1}(v)\right)\right)}    \right]\rd v \\
&\leq \int_{z_0(t)}^{z} \left[F_{Z^{(2)}}(t,v) -F_{Z^{(1)}}(t,v)    \right]\rd v
    \end{split}
\end{equation}
where $z_0(t):=\underset{i}{\min}Q_{\zeta_i}(t,y_0(t))$ for each $t\in(0,\infty)$ and $z:=\underset{i}{\max}Q_{\zeta_i}(F_i(t,y))$ for all $y\in D_Y$, $t\in(0,\infty)$.  If $\max y\in D_Y=:\overline{y}$, then $\underset{i}{\max}Q_{\zeta_i}(F_i(t,\overline{y}))=\max\{\overline{z}_1,\overline{z}_2\}:=\max z\in D_{\zeta}$. Since the inequality (\ref{sosdineq1-z}) is strict for at least one $y\in D_Y$, that in Eq. (\ref{sosdineq1-z-3}) will be strict for at least one $z\in D_{\zeta}$, corresponding to the quantile transformation of such $y$ value.  Here, if the above conditions are satisfied, $Z_t^{(1)}\succsim_{SOSD}Z_t^{(2)}$ on $D_{\zeta}:=[z_0(t),\max\{\overline{z}_1,\overline{z}_2\}]$ by the definition of SOSD, as required.
\end{proof}
\subsection{Proof of Corollary \ref{sosdcorr2}}
\begin{proof}
The proof is analogous to that of Proposition \ref{sosdprop1}, however it now holds that
\begin{equation}
    \begin{split}
        F_{Z^{(i)}}(t,z) &= \mathbb{P}\left(Z_t^{(i)}\leq z \right) = \mathbb{P}\left(Q_{\zeta_i}\left(F_{Y^{(i)}}(t,Y_t^{(i)} \right)\leq z \right)
        \\
         &= \mathbb{P} \left(Y_t^{(i)}\leq Q_{Y^{(i)}}\left(t,F_{\zeta_i}(z)\right) \right) = F_{Y^{(i)}}\left(t,Q_{Y^{(i)}}\left(t,F_{\zeta_i}(z)\right)\right) = F_{\zeta_i}(z)
    \end{split}
\end{equation}
for $i=1,2$ and all $z\in[\min\{\underline{z}_1,\underline{z}_2 \},\max\{\overline{z}_1,\overline{z}_2 \}]$, $t\in(0,\infty)$, and so we replace $F_{Z^{(i)}}(t,z)$ and $Q_{i}(t,u)$ with $F_{\zeta_i}(z)$ and $Q_{Y^{(i)}}(t,u)$, respectively, in conditions (i)--(iii) in the proposition.
\end{proof}
\subsection{Proof of Corollary \ref{sosdcorr1}}
\begin{proof}
The proof is similar to that of Proposition \ref{sosdprop1}, however it holds that $F_{Y^{(1)}}(t,y)=F_{Y^{(2)}}(t,y)$ for all $y\in[\min\{\underline{y_1},\underline{y_2}\},\max\{\overline{y_1},\overline{y_2}\}]$, $t\in(0,\infty)$ and so $\int_{\min\{\underline{y_1},\underline{y_2}\}}^y [F_{Y^{(2)}}(t,x)-F_{Y^{(1)}}(t,x)]\rd x=0$ for all $y\in[\min\{\underline{y_1},\underline{y_2}\},\max\{\overline{y_1},\overline{y_2}\}]$, $t\in(0,\infty)$.  By Eq. (\ref{Zdistexpression}), this can equivalently be written as
\begin{equation}\label{zsosdcondition3}
   \int_{y_0(t)}^y\left[F_{Z^{(2)}}(t,Q_{\zeta_2}\left(F_2(t,x) \right) - F_{Z^{(1)}}(t,Q_{\zeta_1}\left(F_1(t,x) \right) \right]\rd x=0
\end{equation}
for all $y\in[\min\{\underline{y_1},\underline{y_2}\},\max\{\overline{y_1},\overline{y_2}\}]$, $y_0(t)\in[\min\{\underline{y_1},\underline{y_2}\},\max\{\overline{y_1},\overline{y_2}\}\})$ and $t\in(0,\infty)$.  Consider the case where $Q_{\zeta_1}(F_1(t,y))\leq Q_{\zeta_2}(F_2(t,y))$ for at least one $y\in [\min\{\underline{y_1},\underline{y_2}\},\max\{\overline{y_1},\overline{y_2}\}]$ for all $t\in(0,\infty)$ so that, by Proposition \ref{fosdprop} $Z_t^{(1)}\not\succsim_{FOSD}Z_t^{(2)}$. Making the changes of variables $v_i:=Q_{\zeta_i}(F_i(t,x))$ for $i=1,2$, Eq. (\ref{zsosdcondition3}) can be rewritten as 
\begin{equation}
    \begin{split}
     0 &= \int_{Q_{\zeta_2}(F_2(t,y_0(t)))}^{Q_{\zeta_2}(F_2(t,y))} \dfrac{f_{\zeta_2}(v_2)}{f_2\left(t,Q_2\left(t,F_{\zeta_2}(v_2) \right) \right)} F_{Z^{(2)}}(t,v_2)   \rd v_2 
     \\
     &\qquad- \int_{Q_{\zeta_1}(F_1(t,y_0(t)))}^{Q_{\zeta_1}(F_1(t,y))} \dfrac{f_{\zeta_1}(v_1)}{f_1\left(t,Q_1\left(t,F_{\zeta_1}(v_1) \right) \right)} F_{Z^{(1)}}(t,v_1)   \rd v_1
     \\[10pt]
     &\leq \int_{\underset{i}{\min}Q_{\zeta_i}(F_i(t,y_0(t)))}^{\underset{i}{\max}Q_{\zeta_i}(F_i(t,y))} \left[\dfrac{f_{\zeta_2}(v)}{f_2\left(t,Q_2\left(t,F_{\zeta_2}(v) \right) \right)} F_{Z^{(2)}}(t,v) - \dfrac{f_{\zeta_1}(v)}{f_1\left(t,Q_1\left(t,F_{\zeta_1}(v) \right) \right)} F_{Z^{(1)}}(t,v)   \right]\rd v
     \\
     &\qquad+ \int_{Q_{\zeta_1}(F_1(t,y))}^{\underset{i}{\max}Q_{\zeta_i}(F_i(t,y))} \dfrac{f_{\zeta_1}(v)}{f_1\left(t,Q_1\left(t,F_{\zeta_1}(v) \right) \right)} F_{Z^{(1)}}(t,v)   \rd v
     \\
     &\qquad+\int_{\underset{i}{\min}Q_{\zeta_i}(F_i(t,y_0(t)))}^{Q_{\zeta_1}(F_1(t,y_0(t)))} \dfrac{f_{\zeta_1}(v)}{f_1\left(t,Q_1\left(t,F_{\zeta_1}(v) \right) \right)} F_{Z^{(1)}}(t,v)\rd v
     \\[10pt]
     &\leq  \int_{\underset{i}{\min}Q_{\zeta_i}(F_i(t,y_0(t)))}^{\underset{i}{\max}Q_{\zeta_i}(F_i(t,y))} \left[\dfrac{f_{\zeta_2}(v)}{f_2\left(t,Q_2\left(t,F_{\zeta_2}(v) \right) \right)} F_{Z^{(2)}}(t,v) - \dfrac{f_{\zeta_1}(v)}{f_1\left(t,Q_1\left(t,F_{\zeta_1}(v) \right) \right)} F_{Z^{(1)}}(t,v)   \right]\rd v
    \end{split}
\end{equation}
for all $y\in[y_0(t),\max\{\overline{y_1},\overline{y_2}\}]$, $t\in(0,\infty)$.  By the same argument given in the proof of Proposition \ref{sosdprop1},
\begin{equation}
    \dfrac{f_{\zeta_2}(z)}{f_2\left(t,Q_2\left(t,F_{\zeta_2}(z) \right) \right)} F_{Z^{(2)}}(t,z) - \dfrac{f_{\zeta_1}(z)}{f_1\left(t,Q_1\left(t,F_{\zeta_1}(z) \right) \right)} F_{Z^{(1)}}(t,z)\leq F_{Z^{(2)}}(t,z)-F_{Z^{(1)}}(t,z)
\end{equation}
for all $z\in D_{\zeta}$, with strict inequality for at least one $z\in D_{\zeta}$, and all $t\in(0,\infty)$ if the inequalities (i)--(iii) hold for all $z\in D_{\zeta}$, with strict inequality for at least one $z\in D_{\zeta}$, and all $t\in(0,\infty)$.  Here, 
\begin{equation}\begin{split}
    0&\leq  \int_{z_0(t)}^{z} \left[\dfrac{f_{\zeta_2}(v)}{f_2\left(t,Q_2\left(t,F_{\zeta_2}(v) \right) \right)} F_{Z^{(2)}}(t,v) - \dfrac{f_{\zeta_1}(v)}{f_1\left(t,Q_1\left(t,F_{\zeta_1}(v) \right) \right)} F_{Z^{(1)}}(t,v)   \right]\rd v 
    \\
    &\leq  \int_{z_0(t)}^{z} \left[F_{Z^{(2)}}(t,v)-F_{Z^{(1)}}(t,v)\right] \rd v
\end{split}\end{equation}
where $z_0(t):=\underset{i}{\min}Q_{\zeta_i}(F_i(t,y_0(t)))$ for each $t\in(0,\infty)$ and $z:=\underset{i}{\max}Q_{\zeta_i}(F_i(t,y))$ for all $y\in D_Y$, and by the definition of SOSD, $Z_t^{(1)}\succsim_{SOSD}Z_t^{(2)}$ on $D_{\zeta}:=[z_0(t),\max\{\overline{z}_1,\overline{z}_2 \}]$ as required. In the case where $F_1(t,y)=F_{Y^{(1)}}(t,y)=F_{Y^{(2)}}(t,y)=F_2(t,y)$ for all $y\in[\min\{\underline{y_1},\underline{y_2}\},\max\{\overline{y_1},\overline{y_2}\}]$ and $t\in(0,\infty)$, we replace $F_{Z^{(i)}}(t,z)$ and $Q_i(t,u)$ with $F_{\zeta_i}(z)$ and $Q_{Y^{(i)}}(t,u)$ in the inequalities.
\end{proof}
\end{appendix}
\end{document}